\newtheorem{theorem}{Theorem}[section]
\newtheorem{definition}[theorem]{Definition}
\newtheorem{claim}[theorem]{Claim}
\newtheorem{lemma}[theorem]{Lemma}
\newtheorem{conclusion}[theorem]{Conclusion}
\newtheorem{conjecture}[theorem]{Conjecture}
\newtheorem*{conjecturee}{Conjecture}
\newcommand{\CC}{\mathbb C}
\newcommand{\FF}{\mathcal F}
\newcommand{\RR}{\mathbb R}
\newcommand{\NN}{\mathbb N}
\newcommand{\ZZ}{\mathbb Z}
\newcommand{\HH}{\mathcal H}
\newcommand{\TT}{\mathbb T}
\newcommand{\heaviside}{\mathbbm{1}_{(0,\infty)}}
\newcommand{\LL}{\mathcal L}
\DeclareMathOperator{\sgn}{sgn}
\DeclareMathOperator{\op}{op}
\DeclareMathOperator{\Id}{Id}
\title{Commutators of spectral projections of spin operators}
\author{\makebox[.9\textwidth]{Ood Shabtai}}
\begin{document}
\maketitle
\begin{abstract} We present a proof that the operator norm of the commutator of certain spectral projections associated with spin operators converges to $\frac 1 2$ in the semiclassical limit. The ranges of the projections are spanned by all eigenvectors corresponding to positive eigenvalues. The proof involves the theory of Hankel operators on the Hardy space. A discussion of several analogous results is also included, with an emphasis on the case of finite Heisenberg groups.\end{abstract}
\tableofcontents
\section{Introduction}
Let $J_x, J_y, J_z$ denote the generators of an irreducible, unitary, $n$-dimensional representation of $SU(2)$, satisfying the commutation\footnote{Here, $[A,B] = AB - BA$ denotes the commutator of a pair of linear operators $A,B$.} relations
\begin{equation*} [J_x, J_y] = i J_z,\ [J_y, J_z] = i J_x,\ [J_z, J_x] = i J_y. \end{equation*}
Consider the commutator
\begin{equation*} C_n = \left[\mathbbm{1}_{(0,\infty)}(J_x),\mathbbm{1}_{(0,\infty)}(J_z)\right],\end{equation*}
where $\mathbbm{1}_{(0,\infty)}$ denotes the indicator function of $(0,\infty) \subset \RR$.  The main results of the present work are that
\begin{theorem}[L. Polterovich]\label{2mod4} $\Vert C_{4n+2} \Vert_{\op} = \frac 1 2$ for every $n \in \NN$,\end{theorem}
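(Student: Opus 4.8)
The plan is to trade the commutator for a single unitary operator and then count eigenvalues; write $N$ for the dimension of the representation, so that the claim is $\Vert C_N\Vert_{\op}=\tfrac12$ whenever $N\equiv 2\pmod 4$. Set $P=\mathbbm{1}_{(0,\infty)}(J_x)$, $Q=\mathbbm{1}_{(0,\infty)}(J_z)$, and $V=(2P-I)(2Q-I)$, a product of two self-adjoint unitary operators. Expanding gives $V-V^{*}=(2P-I)(2Q-I)-(2Q-I)(2P-I)=4(PQ-QP)=4\,C_N$, so $C_N=\tfrac14(V-V^{*})$ is a normal (indeed skew-adjoint) operator whose eigenvalues are $\tfrac14(e^{i\phi}-e^{-i\phi})=\tfrac{i}{2}\sin\phi$ as $e^{i\phi}$ ranges over the eigenvalues of $V$. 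Hence $\Vert C_N\Vert_{\op}=\tfrac12\max\{\,|\sin\phi| : e^{i\phi}\text{ an eigenvalue of }V\,\}$; in particular $\Vert C_N\Vert_{\op}\le\tfrac12$ for every $N$ (the classical easy inequality), and $\Vert C_N\Vert_{\op}=\tfrac12$ if and only if $i$, equivalently $-i$, is an eigenvalue of $V$. The task therefore becomes: when $N\equiv 2\pmod 4$, show that $\pm i$ is an eigenvalue of $V$.

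Since $N$ is even, $0$ is not an eigenvalue of $J_x$ or of $J_z$, so $2P-I=\sgn(J_x)$, $2Q-I=\sgn(J_z)$, and $V=\sgn(J_x)\sgn(J_z)$. I would now bring in two symmetries of $V$. First, the quarter-turn $S=e^{i(\pi/2)J_y}$ lies in the group and conjugates $(J_x,J_z)$ to $(\pm J_z,\mp J_x)$; as $\sgn$ is an odd function this gives $S\sgn(J_x)S^{-1}=\sgn(J_z)$ and $S\sgn(J_z)S^{-1}=-\sgn(J_x)$, so $SVS^{-1}=-V^{*}$ and the spectrum of $V$, counted with multiplicity, is invariant under $\lambda\mapsto-\bar\lambda$. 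Second, since $N$ is even the irreducible representation is of quaternionic type: there is an antiunitary $\tau$ with $\tau^{2}=-I$ commuting with the representation and satisfying $\tau J_a\tau^{-1}=-J_a$ for $a\in\{x,y,z\}$; again by oddness of $\sgn$, $\tau\sgn(J_x)\tau^{-1}=-\sgn(J_x)$ and $\tau\sgn(J_z)\tau^{-1}=-\sgn(J_z)$, hence $\tau V\tau^{-1}=V$. Therefore the spectrum of $V$ is also invariant under complex conjugation with multiplicity, and moreover each of the eigenspaces $\ker(V-I)$ and $\ker(V+I)$, being $\tau$-invariant, carries a quaternionic structure and so has even complex dimension.

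The last step is a parity count. Together the two symmetries make the spectrum of $V$ invariant under the Klein four-group generated by $\lambda\mapsto\bar\lambda$ and $\lambda\mapsto-\bar\lambda$, hence also under $\lambda\mapsto-\lambda$. Split the eigenvalues of $V$, counted with multiplicity: those in $\{1,-1\}$ contribute $\dim\ker(V-I)+\dim\ker(V+I)$, which is divisible by $4$ because the two summands are equal (by the $\lambda\mapsto-\bar\lambda$ symmetry) and each is even; every orbit $\{\lambda,\bar\lambda,-\lambda,-\bar\lambda\}$ with $\lambda\notin\{1,-1,i,-i\}$ is a set of four distinct numbers of common multiplicity, contributing a multiple of $4$; and those in $\{i,-i\}$ contribute $2\beta$ with $\beta:=\dim\ker(V-iI)=\dim\ker(V+iI)$. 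Hence $N\equiv 2\beta\pmod 4$. For $N\equiv 2\pmod 4$ this forces $\beta$ to be odd, so $\beta\ge 1$; thus $i$ is an eigenvalue of $V$ and the first step yields $\Vert C_N\Vert_{\op}=\tfrac12$. (For $N\equiv 0\pmod 4$ the same count only gives $\beta$ even, consistent with the norm being strictly below $\tfrac12$ in that case, which is why $N\equiv 2\pmod 4$ is the correct hypothesis rather than mere evenness.)

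I expect the main obstacle to be the symmetry bookkeeping: pinning down the orientation so that conjugation by $S$ really sends $(\sgn(J_x),\sgn(J_z))$ to $(\sgn(J_z),-\sgn(J_x))$, justifying $\tau^{2}=-I$ (the quaternionic, not merely real, character of the even-dimensional irreducible representations of $SU(2)$), and checking that each of the resulting spectral symmetries preserves multiplicities. The structural point making $N\equiv 2\pmod 4$ the right hypothesis is precisely that the even-dimensionality of $\ker(V-I)$ and $\ker(V+I)$ promotes the eigenvalue count from modulo $2$ to modulo $4$.
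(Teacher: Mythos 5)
Your argument is correct, and it is worth noting that the paper never actually writes out a proof of this theorem: it is attributed to L.~Polterovich and dismissed in the introduction as ``quite straightforward and requir[ing] little but symmetries,'' so there is no printed argument to compare against. Your route is a complete symmetry proof in exactly that spirit: passing to the unitary $V=\sgn(J_x)\sgn(J_z)=(2P-I)(2Q-I)$ (legitimate since $N=4n+2$ is even, so $0\notin\mathrm{spec}(J_x)\cup\mathrm{spec}(J_z)$) turns the statement into ``$\pm i\in\mathrm{spec}(V)$,'' and both symmetries you invoke check out: the rotation $e^{i\frac\pi2 J_y}$ does conjugate $(J_x,J_z)$ to $(J_z,-J_x)$ in the paper's conventions (it uses $e^{i\frac\pi2 J_y}J_xe^{-i\frac\pi2 J_y}=J_z$), giving $SVS^{-1}=-V^{*}$ regardless of the orientation ambiguity you worried about, and for half-integer spin the time-reversal operator $\tau=e^{-i\pi J_y}K$ (with $K$ complex conjugation in the $J_z$-eigenbasis) supplies the antiunitary with $\tau^{2}=-I$ and $\tau J_a\tau^{-1}=-J_a$, so $\tau V\tau^{-1}=V$ and the Kramers argument forces $\ker(V\mp I)$ to be even-dimensional. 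The resulting count $N\equiv 2\,\dim\ker(V-iI)\pmod 4$ is exactly what is needed, and it has the additional virtue of explaining structurally why the conclusion holds for $N\equiv 2$ but can fail for $N\equiv 0\pmod 4$ (where the count only shows the multiplicity of $i$ is even), which is consistent with the paper's numerics. One cosmetic remark: in your opening step the equivalence ``$\Vert C_N\Vert_{\op}=\tfrac12$ iff $i$, equivalently $-i$, is an eigenvalue'' quietly uses a spectral symmetry of $V$; this is harmless since $(2Q-I)V(2Q-I)=V^{*}$ already makes $\mathrm{spec}(V)$ conjugation-invariant, and in any case your later symmetries deliver it.
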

and
\begin{theorem}\label{maintheorem}$\lim_{n \to \infty} \Vert C_n \Vert_{\op} = \frac 1 2$.\end{theorem}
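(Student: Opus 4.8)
The plan is to prove the two inequalities separately, the upper bound being essentially free. The operator norm of the commutator of two orthogonal projections on a Hilbert space is at most $\frac12$; applied to $P_n:=\mathbbm{1}_{(0,\infty)}(J_z)$ and $Q_n:=\mathbbm{1}_{(0,\infty)}(J_x)$ this gives $\Vert C_n\Vert_{\op}\le\frac12$ for every $n$ (and is also why Theorem~\ref{2mod4} is an equality rather than an estimate). Everything therefore reduces to showing $\liminf_{n\to\infty}\Vert C_n\Vert_{\op}\ge\frac12$.

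For the lower bound, fix a rotation $g\in SO(3)$ carrying the $z$-axis to the $x$-axis and let $V_n$ be its image in the $n$-dimensional representation; then $V_nJ_zV_n^{*}=J_x$, hence $Q_n=V_nP_nV_n^{*}$ and $C_n=[P_n,V_nP_nV_n^{*}]$. A short computation with the Halmos two-subspaces normal form gives
\[
\Vert C_n\Vert_{\op}^{2}=\tfrac14-\operatorname{dist}\bigl(\tfrac12,\,\sigma(\Gamma_n^{*}\Gamma_n)\bigr)^{2},\qquad \Gamma_n:=(\mathbbm{1}-P_n)\,V_n^{*}\,P_n,
\]
where $\Gamma_n$, the off-diagonal block of $V_n^{*}$ with respect to the splitting $\CC^n=\Ran P_n\oplus\Ran(\mathbbm{1}-P_n)$, is the finite-dimensional avatar of a Hankel operator: the ``Hardy space'' is $\Ran P_n$, spanned by the eigenvectors of $J_z$ with positive eigenvalue, and the unitary $V_n^{*}$ plays the role of the symbol (note $\Vert\Gamma_n\Vert\le1$, so $\sigma(\Gamma_n^{*}\Gamma_n)\subseteq[0,1]$). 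It therefore suffices to show that $\operatorname{dist}\bigl(\tfrac12,\sigma(\Gamma_n^{*}\Gamma_n)\bigr)\to0$, i.e.\ that $\Gamma_n$ acquires asymptotically a singular value equal to $1/\sqrt2$.

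To produce such a singular value I would pass to the semiclassical limit through spin coherent states. There $P_n$ (respectively $Q_n$) differs from the Toeplitz quantization of the indicator function of the hemisphere $\{z>0\}$ (respectively $\{x>0\}$) only inside an $O(n^{-1/2})$-neighbourhood of the equator (respectively the meridian), where instead it is given by an error-function-type transition profile; consequently $\Gamma_n$ carries weight only where the two boundary great circles cross, namely at the two antipodal points $(0,\pm1,0)$, at which the circles meet orthogonally. Zooming in by $\sqrt n$ at one crossing point and identifying the rescaled representation space with the Hardy space, one expects $\Gamma_n$ to converge, in a suitable localized sense, to a Hankel operator $\Gamma_\infty$ whose symbol has a single jump of height $1$; up to unitary equivalence and scaling the operator $\Gamma_\infty^{*}\Gamma_\infty$ is governed by the Hilbert matrix $\bigl(1/(j+k+1)\bigr)_{j,k\ge0}$, and hence has purely absolutely continuous spectrum filling an interval, with the correct normalisation giving $\tfrac12\in\sigma(\Gamma_\infty^{*}\Gamma_\infty)$. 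Transporting the approximate eigenvectors of $\Gamma_\infty^{*}\Gamma_\infty$ at the level $\tfrac12$ back through the rescaling and truncating them produces approximate eigenvectors of $\Gamma_n^{*}\Gamma_n$ with eigenvalue $\tfrac12+o(1)$, whence $\operatorname{dist}\bigl(\tfrac12,\sigma(\Gamma_n^{*}\Gamma_n)\bigr)\to0$ and $\liminf_n\Vert C_n\Vert_{\op}^{2}\ge\tfrac14$.

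The main obstacle is precisely this semiclassical identification of $\Gamma_n$ with $\Gamma_\infty$. The delicate point is that $P_n$ and $Q_n$ are only $O(1)$-close, not $o(1)$-close, to their Toeplitz counterparts: the discrepancy is not negligible but is a structured object supported in the thin transition layers around the boundary circles, so one cannot simply substitute Toeplitz operators and must instead run the analysis at the critical scale $n^{-1/2}$, matching the limiting kernel uniformly and two-sidedly to that of the Hilbert-matrix Hankel operator. One also has to check that the two crossing points contribute without interfering and that the convergence is strong enough to pin the spectrum of $\Gamma_n^{*}\Gamma_n$ near $\tfrac12$ --- here only the lower bound is at stake, the matching estimate $\Vert C_n\Vert_{\op}\le\frac12$ being automatic. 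In this picture Theorem~\ref{2mod4} records the sharper phenomenon that for $n\equiv2\pmod4$ the value $\tfrac12$ is attained exactly rather than only in the limit.
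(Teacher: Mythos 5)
Your upper bound and your reduction are fine: with $\Gamma_n=(\mathbbm{1}-P_n)V_n^{*}P_n$ one has $\Gamma_n^{*}\Gamma_n=P_n(\mathbbm{1}-Q_n)P_n$, its nonzero spectrum consists of the squared sines of the principal angles, and the identity $\Vert C_n\Vert_{\op}^2=\tfrac14-\operatorname{dist}\bigl(\tfrac12,\sigma(\Gamma_n^{*}\Gamma_n)\bigr)^2$ is a correct restatement of $\Vert[P,Q]\Vert_{\op}=\max_i\sin\theta_i\cos\theta_i$. The problem is that everything after that is a program, not a proof, and you say so yourself: the ``semiclassical identification'' of $\Gamma_n$ with a limiting Hankel operator is exactly the content of the theorem, and it is left at the level of ``one expects''. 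None of the supporting assertions is established: that $P_n,Q_n$ have erf-type transition profiles at scale $n^{-1/2}$, that $\Gamma_n$ ``carries weight'' only near $\pm(0,1,0)$, that the $\sqrt n$ zoom is the right scale, that the local limit is a single-jump Hankel operator ``governed by the Hilbert matrix'', or that the ``correct normalisation'' puts $\tfrac12$ in its spectrum. In particular the last numerical value, which is the whole point, is asserted rather than computed, and the scale you zoom at is itself questionable: what actually controls the central matrix entries is the Bessel regime of Szeg\H{o}'s asymptotics for Jacobi polynomials, i.e.\ angular scale $\theta\sim n^{-1}$ about the crossing points, not $n^{-1/2}$. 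So the lower bound, which is the entire difficulty, is missing.

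It may help to see how the paper avoids the ``uniform, two-sided kernel matching'' you worry about: since only a lower bound is needed, it is enough to control finitely many matrix entries at a time. In the $J_z$-eigenbasis $\Vert C_n\Vert_{\op}=\Vert P_{2,x,j}\Vert_{\op}$, the off-diagonal block of $\mathbbm{1}_{(0,\infty)}(J_x)$, and the paper shows that each fixed entry near the center converges: writing the entries exactly as (a phase times) the negative-frequency part of the Wigner d-function $d^j_{m',m}$ at $\theta=0$ (equivalently via the periodic Hilbert transform), and applying Szeg\H{o}'s estimate $(\sin\tfrac\theta2)^{\alpha}(\cos\tfrac\theta2)^{\beta}P_k^{(\alpha,\beta)}(\cos\theta)\approx J_{m-m'}\bigl((j+\tfrac12)\theta\bigr)$, one gets $\lim_j P_{x,j,m',m}=\hat{\mathbbm{1}}_E(m-m')$, the Fourier coefficients of the indicator of the right half-circle. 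Hence for every $N$ a fixed $N\times N$ corner of $P_{2,x,j}$ converges entrywise to the truncated Hankel matrix $[H_E]_N$, so $\liminf_n\Vert C_n\Vert_{\op}\ge\Vert[H_E]_N\Vert_{\op}$, and $\Vert H_E\Vert_{\op}=\tfrac12$ is pinned down by Nehari's theorem (upper bound, symbol $\bar z(\mathbbm{1}_E-\tfrac12)$) and Power's theorem on the essential spectrum of Hankel operators with piecewise continuous symbols (lower bound). Your picture is morally consistent with this (the jump discontinuities of $\mathbbm{1}_E$ sit at $\pm i$, i.e.\ at the images of $\pm(0,1,0)$), but to turn your outline into a proof you would either have to carry out the local scaling analysis you flag as the main obstacle, or replace it by an entrywise argument of the above kind.
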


The sequence $(\Vert C_n\Vert_{\op} )_{n=2}^\infty$ is bounded from above by $\frac 1 2$ due to a general fact about commutators of orthogonal projections (\cite{qihuili}). Nonetheless, it is perhaps not evident a priori that the sequence should converge at all, let alone to the largest possible value.

As it turns out, however, analogues of Theorem \ref{maintheorem} hold for several other families of pairs of spectral projections arising from non-commuting observables. A few such examples are formulated in Section \ref{analoguessubsection}, and a modest extension of Theorem \ref{maintheorem} is included in Section \ref{extensionsubsecsion}. Ultimately, we suspect that the various results presented here are instances of a rather general phenomenon. We refer the reader to Section \ref{morecases} for details and remarks along these lines.
\subsection{Numerical simulations and preliminary results}
The numerical simulations (originally by Y. Le Floch) of $\left(\Vert C_n \Vert_{\op} \right)_{n=2}^\infty$ imply further intriguing properties.
\begin{figure}[H]
    \centering
        \includegraphics[width=1.3\textwidth,center]{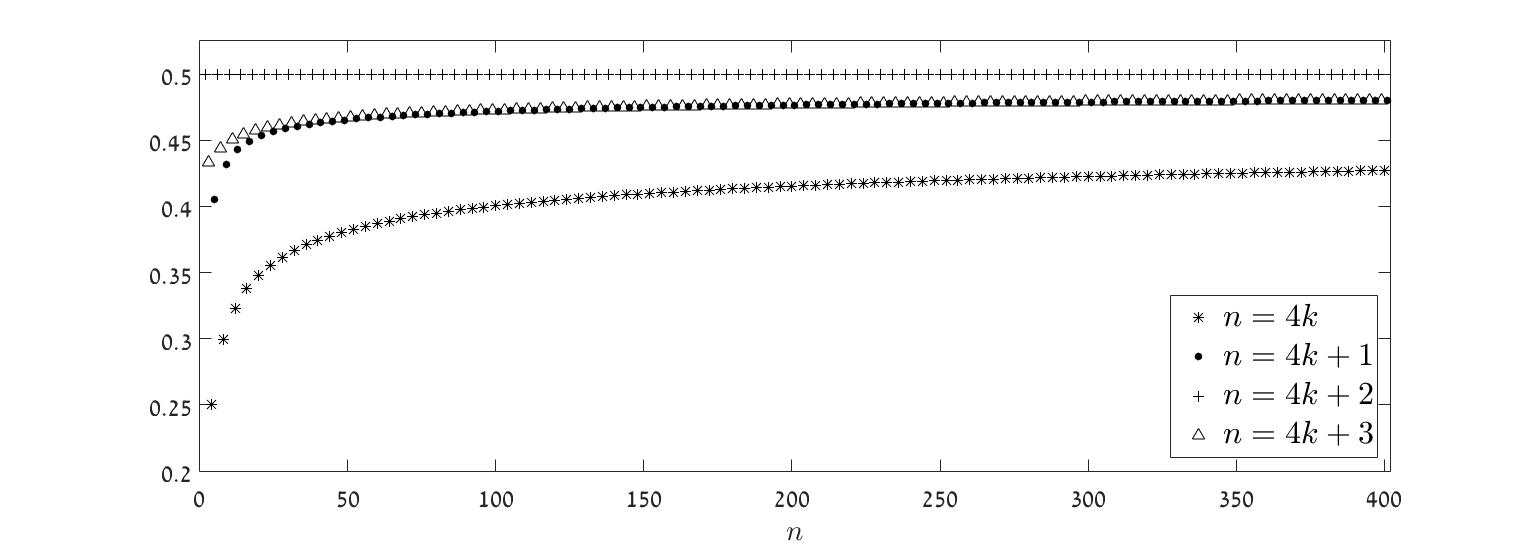}
        \caption{$\Vert C_n \Vert_{\op}$ as a function of $n$.}\label{su2}
\end{figure}
\begin{figure}[H]
    \centering
        \includegraphics[width=1.3\textwidth,center]{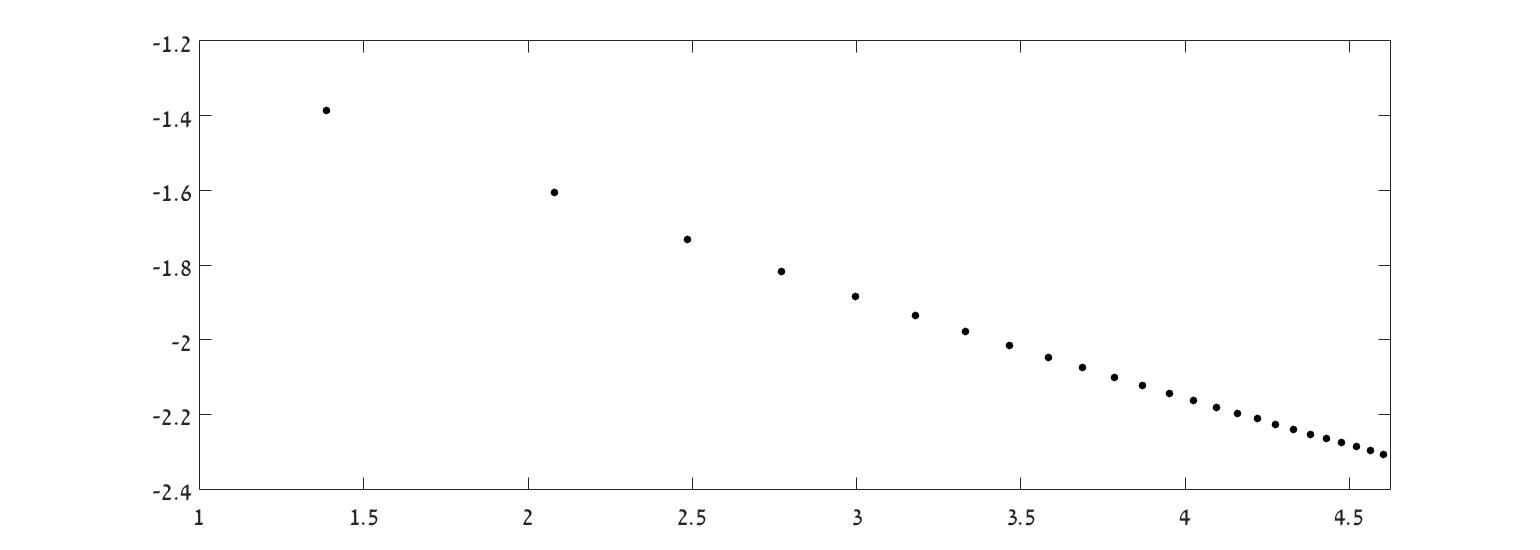} 
        \caption{$\ln\left(\frac 1 2 - \Vert C_{4k} \Vert_{\op} \right)$ as a function of $\ln(4k)$.}
\end{figure}
Notably, $\Vert C_n \Vert_{\op}$ appears to depend on the dimension of the representation modulo $4$. More precisely, 
\begin{conjecture} $\frac{\Vert C_{4n+3} \Vert_{\op} - \Vert C_{4n+1} \Vert_{\op}}{\Vert C_{4n+3} \Vert_{\op} - \Vert C_{4n}\Vert_{\op}} = o(1)$ and $\frac{\Vert C_{4n+3} \Vert_{\op} - \Vert C_{4n+1} \Vert_{\op}}{\frac 1 2 - \Vert C_{4n+p} \Vert_{\op}} = o(1)$ for $p = 0, 1, 3$.\end{conjecture}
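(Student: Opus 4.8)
The plan is to upgrade the proof of Theorem \ref{maintheorem} from a convergence statement to a quantitative one, tracking the residue of $n$ modulo $4$ throughout. Set $P := \heaviside(J_z)$ and $Q := \heaviside(J_x)$. The first step is the elementary reduction behind everything: applying the two‑orthogonal‑projections identity to the off‑diagonal block $b = PQ(1-P)$ (which satisfies $bb^{*} = a(1-a)$ for $a := PQP|_{\Ran P}$) gives $\Vert C_n\Vert_{\op}^{2} = \max\{\, t(1-t) : t \in \sigma(A_n)\,\} = \tfrac14 - \epsilon_n^{2}$, where $A_n := PQP$ is viewed as a self‑adjoint contraction on $\Ran P$ and $\epsilon_n := \operatorname{dist}\bigl(\tfrac12,\sigma(A_n)\bigr)$. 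Thus $\tfrac12 - \Vert C_n\Vert_{\op} = \epsilon_n^{2} + O(\epsilon_n^{4})$, and Theorem \ref{2mod4} says $\epsilon_{4k+2} = 0$. In these terms the conjecture reduces to the following sharpening of Theorem \ref{maintheorem}: there is a single, slowly varying rate $\phi(k)\to 0$ and constants $c_0,c_1,c_3>0$ with
\begin{equation*} \epsilon_{4k+p}^{2} = c_p\,\phi(k)\,(1 + o(1)) \qquad (p = 0,1,3), \end{equation*}
such that $c_1 = c_3$ and $c_0 \neq c_1$. Indeed this forces $\epsilon_{4k+1}^{2} - \epsilon_{4k+3}^{2} = o(\phi(k))$ while $\epsilon_{4k}^{2} - \epsilon_{4k+3}^{2} \asymp \phi(k) \asymp \epsilon_{4k+p}^{2}$, which is exactly what the three displayed quotients demand (after rewriting, e.g., $\Vert C_{4k+3}\Vert_{\op} - \Vert C_{4k+1}\Vert_{\op} = (\tfrac12 - \Vert C_{4k+1}\Vert_{\op}) - (\tfrac12 - \Vert C_{4k+3}\Vert_{\op})$).

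The heart of the matter is the behaviour of $\sigma(A_n)$ in a fixed neighbourhood of $\tfrac12$. As in the proof of Theorem \ref{maintheorem}, $\heaviside(J_z)$ and $\heaviside(J_x)$ are semiclassical quantizations on $S^2$ of the indicators of the hemispheres $\{z>0\}$ and $\{x>0\}$, whose boundary great circles meet transversally at the two points $(0,\pm 1,0)$; the part of $\sigma(A_n)$ near $\tfrac12$ is generated entirely by these two ``corners,'' and near each of them $A_n$ is modelled — after rescaling the weight variable about $m\approx 0$ — on the fixed operator $\heaviside(x)\,P_{+}\,\heaviside(x)$ on $L^{2}(0,\infty)$, where $P_{+}$ is the Hardy/Szeg\H{o} projection; its off‑diagonal block is a multiple of the Carleman operator, and its spectrum is the whole interval $[0,1]$ with $\tfrac12$ an interior point of the continuous spectrum. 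This is where Hankel operators on the Hardy space enter. Finiteness of $n$ discretizes this continuum, and the grid points nearest $\tfrac12$ are pinned by a Bohr--Sommerfeld‑type quantization condition; to extract it quantitatively I would run the Plancherel--Rotach asymptotics of the (symmetric) Krawtchouk polynomials — these being precisely the Wigner matrix elements $d^{j}_{m,m'}(\pi/2)$ out of which $Q$ is built — including the uniform Airy behaviour at the turning points and the connection formula across them. The residue of $n$ modulo $4$ enters in two independent ways: through the parity of $j=\tfrac{n-1}{2}$, which decides whether $0$ is a weight of $J_z$ and of $J_x$ and so fixes the boundary condition of the local model at the corner (this is also the mechanism of Theorem \ref{2mod4}: for $n=4k+2$ the dimension of $\Ran P$ is odd while $e^{-i\pi J_z}$ anticommutes with $Q-\tfrac12$, forcing a genuine kernel for $A_n - \tfrac12$); and through the relative phase with which the two corners $(0,\pm1,0)$ contribute, which toggles between constructive and near‑constructive interference. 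One expects $c_1 = c_3$ because the leading phase in the quantization condition depends only on the parity of $n$, the distinction between $4k+1$ and $4k+3$ surviving only in the $o(1)$ remainder.

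The genuine obstacle is exactly this last balance. The eigenvalue of $A_n$ closest to $\tfrac12$ lies in the \emph{bulk} of the spectrum — its eigenvector is a fast oscillation whose number of sign changes grows with $n$ — so separating the leading constant $c_p$ from the subleading correction requires uniform, not merely leading‑order, asymptotics together with an exact treatment of how the two corners combine, a far more delicate demand than the one sufficient for Theorem \ref{maintheorem}. Closely related and no less serious: one must pin down the rate $\phi$ rigorously and show that the constants $c_p$ genuinely stabilize — that the relevant phase is \emph{locked} to the residue class and does not drift with $k$ — since the log--log plot of Figure 2 is only numerical evidence for this. Granting all of the above, proving the exact equality $c_1 = c_3$, as opposed to the numerically visible near‑equality, is where essentially all of the remaining difficulty lies; everything upstream of it is a careful but in‑principle routine steepest‑descent and WKB analysis once the reduction of the first paragraph and the local Hankel/Carleman model of the second are in place.
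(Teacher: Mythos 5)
You should note first that this statement is a \emph{conjecture} in the paper: the author offers no proof, only the numerical evidence of Figures 1--2, and explicitly states that the convergence rate of the sequences involved is presently unknown. So there is no paper proof to match, and the relevant question is whether your argument actually establishes the statement. It does not. Your first paragraph is correct and genuinely useful as a reduction: with $b=PQ(1-P)$ one indeed has $bb^{*}=a(1-a)$, hence $\Vert C_n\Vert_{\op}^{2}=\tfrac14-\epsilon_n^{2}$ with $\epsilon_n=\operatorname{dist}\bigl(\tfrac12,\sigma(PQP|_{\Ran P})\bigr)$, and your symmetry explanation of Theorem \ref{2mod4} (conjugation by $e^{-i\pi J_z}$ sends $Q-\tfrac12$ to its negative while preserving the odd-dimensional $\Ran P$, forcing $\tfrac12\in\sigma(A_n)$) is sound and matches the paper's remark that the $n\equiv 2 \bmod 4$ case needs ``little but symmetries.'' But the conjecture is then rephrased as the existence of a common rate $\phi(k)\to0$ and stabilized constants $c_p$ with $c_1=c_3\neq c_0$, and nothing after that point is proved: the claim that the spectrum of $A_n$ near $\tfrac12$ is governed entirely by the two corners $(0,\pm1,0)$, the local Hankel/Carleman model, the Bohr--Sommerfeld quantization condition pinning the nearest spectral point, the uniform Airy-type asymptotics of the Krawtchouk-type quantities $d^{j}_{m,m'}(\pi/2)$ in the bulk, the ``phase locking'' to the residue class, and above all the exact identity $c_1=c_3$, are all asserted rather than established --- and you say so yourself. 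The reduction is in fact at least as strong as the conjecture (it would also determine the rate, which the paper says is unknown even empirically beyond a tentative $O(n^{-1/4})$ bound), so the proposal replaces one open statement by a stronger open statement plus a program.

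Two more concrete cautions about the program itself. The paper's own machinery (Conclusions \ref{pxjkmn} and \ref{submatrices}, Lemma \ref{H_Enorm}) only controls \emph{fixed} $N\times N$ corners of $C_n$ and the limiting Hankel operator $H_E$, whose relevant spectral value $\tfrac12$ sits at the edge of the essential spectrum $[-\tfrac12,\tfrac12]$; extracting the rate $\epsilon_n$ requires quantitative control in a regime (indices and eigenvalues drifting with $n$) where the Szeg\H{o} estimate (\ref{szegotheorem}) used in Section \ref{asymptoticsection} is no longer uniform --- precisely the limitation the paper cites when declining to treat $C_{n,a}$ for $a>0$. And the asserted two-corner interference mechanism must be reconciled with the paper's observation that the modulo-$4$ structure essentially disappears for the shifted projections of Theorem \ref{extensiontheorem} (Figures \ref{notmod4}, \ref{againnotmod4}); any proof of $c_1=c_3$ would need to explain why the phase is locked in the case $a=b=0$ but not otherwise, which your sketch does not address. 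In short: the opening reduction is correct, but the conjecture itself remains untouched.
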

However, the convergence rate of the various sequences is presently unknown\footnote{The application of linear regression suggests a rate not faster than $O\left(n^{-\frac 1 4}\right)$, but otherwise does not seem to provide a clear answer.}.
Finally, it holds that $\Vert C_n \Vert_{\op} \ge \frac 1 4$ for every $n \ge 2$, as depicted above. The lower bound was established as part of our initial studies of the sequence $(C_n)_{n =2}^\infty$, joint with Y. Le Floch and L. Polterovich.

The proof that $\Vert C_{4n+2} \Vert_{\op}= \frac 1 2$ for every $n \in \NN$ is quite straightforward and requires little but symmetries. Otherwise, the lower bound is derived through specific elements of the matrix representing $C_n$ in some orthonormal eigenbasis of $J_z$. The arguments involved are relatively simple when $n$ is odd, thanks to symmetries again. When $n \equiv 0 \mod 4$, however, the bound is established through the use of brute-force techniques. This amounts to cumbersome calculations, since the immediate formula for the elements of $C_n$ is complicated to estimate directly as $n \to \infty$ (and involves sums of products of certain special functions).
\subsection{Outline of the proof of Theorem \ref{maintheorem}}
The proof of Theorem \ref{maintheorem} also revolves around the matrix elements of $C_n$ and their limits as $n \to \infty$. The proof appears in Section \ref{mainproofsection}, relying on two main ingredients which are presented subsequently. Namely, in Section \ref{integralformulasection} we derive a concise integral formula for the elements of $C_n$ in terms of the corresponding matrix elements of the one-parameter subgroup
\begin{equation*} \left\{\left. e^{-i \theta J_y}\right| -2\pi \le \theta < 2\pi \right\}.\end{equation*} Thus, we avoid the initial, complicated expressions for the elements of $C_n$ altogether. In Section \ref{asymptoticsection}, we use a classical (\cite{szego}, 8.21.12) asymptotic estimate for the elements (up to normalization\footnote{The asymptotic estimate is formulated in terms of Jacobi polynomias, rather than Wigner d-functions, and applies in significantly more general settings. See (\ref{szegotheorem}) for the original result, and Conclusion \ref{wignerdasymptotics} for the adaptation to our present settings.}) of $e^{-i\theta J_y}$ to study the aforementioned integral formula. The estimate is valid in a restricted range of indices, which nonetheless suffices for our purpose, i.e., to calculate the limits of sufficiently many of the elements of $C_n$.

Ultimately, for every positive integer $N$ and every $n > 2N+1$, we invert the rows of some sub-matrix of $-C_n$ to obtain a collection $C_{n,N} \in M_N(\CC)$, such that $C_{n,N}$ is a sub-matrix of $C_{n,N+1}$ whenever $n > 2N+3$, and furthermore, $\lim_{n \to \infty} C_{n,N}$ exists and has constant elements along the anti-diagonals.

The latter, we recall, is the defining property of (possibly infinite) Hankel matrices (\cite{pellerbig}). More generally, any operator on some Hilbert space whose matrix relative to some orthonormal basis is a Hankel matrix may be considered as a Hankel operator. Thus, we show that
\begin{equation*} \left(\lim_{n\to \infty} C_{n,N}\right)_{N \ge 1}\end{equation*}
is the sequence of truncated matrices of some fixed Hankel operator\footnote{Notably, the same Hankel operator appears in all of the cases addressed in Section \ref{morecases}.} whose norm, it turns out, equals $\frac 1 2$.
This suffices to conclude the proof.

We note (again) that a slightly extended version of Theorem \ref{maintheorem} is also included in Section \ref{extensionsubsecsion}. The proof is pretty much identical, and involves slightly modified Hankel operators.
\subsection{Analogues of Theorem \ref{maintheorem}}\label{analoguessubsection}
The results of this part are proven in Section \ref{miscproofs}. Our first example involves the standard quantum model for a particle in a line (i.e., in $\RR$). The next two examples are similar to the first, but involve the configuration spaces $\TT, \ZZ_n = \ZZ/n \ZZ$ rather than $\RR$. The last example is formulated in terms of the representation theory of the group of orientation preserving Euclidean plane isometries.

Let $X, \Xi$ denote the position and momentum operators on $L^2(\RR)$, acting on a smooth $f \in L^2(\RR)$ by
\begin{equation*} Xf (x) = xf(x),\ \Xi f(x) = -i\hbar f'(x).\end{equation*}
\begin{theorem}\label{linecommutator} Consider the commutator $C^{(1)}_\hbar = \left[\Pi_X, \Pi_{\Xi}\right]$, where
\begin{equation*} \Pi_X = \mathbbm{1}_{(0,\infty)}(X),\ \Pi_\Xi = \mathbbm{1}_{(0,\infty)}(\Xi).\end{equation*}
Then $C^{(1)}_\hbar \equiv C^{(1)}$ is independent of $\hbar$, and $\Vert C^{(1)} \Vert_{op} = \frac 1 2$. \end{theorem}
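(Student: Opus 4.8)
The plan is to reduce the computation of $\Vert C^{(1)} \Vert_{\op}$ to a problem about an explicit singular integral operator on $L^2$, and then identify that operator with (a concrete model of) a Hankel operator whose norm is known. First I would dispose of the $\hbar$-dependence: a unitary dilation $f(x) \mapsto \hbar^{1/2} f(\hbar x)$ intertwines $(X,\Xi)$ at parameter $\hbar$ with $(\hbar X, \Xi)$ at parameter $1$ (or, more symmetrically, rescale both so that the pair becomes $(\hbar^{1/2}X, \hbar^{1/2}\Xi)$, which is unitarily equivalent to $(X,\Xi)$). Since $\mathbbm{1}_{(0,\infty)}(\lambda A) = \mathbbm{1}_{(0,\infty)}(A)$ for $\lambda > 0$, the projections $\Pi_X, \Pi_\Xi$ — and hence $C^{(1)}_\hbar$ — are carried to the $\hbar = 1$ versions by a unitary, so the norm is independent of $\hbar$; write $C^{(1)}$ for the common value.

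Next I would write the two projections concretely. The projection $\Pi_X$ is multiplication by $\heaviside$, and $\Pi_\Xi = \FF^{-1} \Pi_X \FF$, where $\FF$ is the Fourier transform; so $\Pi_\Xi$ is (up to normalization constants) the operator $\frac12(\Id + i H)$ with $H$ the Hilbert transform, i.e. the analytic/Hardy projection onto the space of boundary values of functions holomorphic in a half-plane. The commutator $C^{(1)} = \Pi_X \Pi_\Xi - \Pi_\Xi \Pi_X$ then has a clean description once one decomposes $L^2(\RR) = H^2 \oplus \overline{H^2}$ (Hardy space of the upper half-plane and its conjugate): relative to this splitting, $\Pi_\Xi$ is the projection onto the first summand, while $\Pi_X$, being multiplication by a characteristic function, is an operator whose off-diagonal blocks are exactly Hankel operators (multiplication by $\heaviside$ followed by the complementary Hardy projection). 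A short block-matrix computation shows $C^{(1)}$ is, up to sign and the obvious $2\times2$ antisymmetric structure, built from the single Hankel operator $\Gamma$ with symbol $\heaviside$; then $\Vert C^{(1)}\Vert_{\op} = \Vert \Gamma \Vert \big/ \sqrt{1 + \text{(something)}}$, or more simply $\Vert C^{(1)} \Vert_{\op}$ equals the norm of a $2\times 2$ block operator whose computation reduces to the spectrum of $\Gamma^*\Gamma$.

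The cleanest route for the final norm computation, which I expect to be the cleanest overall, is to pass to the multiplicative picture via the Cayley/Laplace correspondence: conjugating by the unitary $L^2(\RR_{>0}, dx) \to L^2(\RR)$ given by $f \mapsto e^{t/2} f(e^t)$ (Mellin transform), the operator $\heaviside \cdot \FF \cdot \heaviside$ restricted appropriately becomes a convolution operator on $L^2(\RR)$ whose symbol is an explicit Gamma-function ratio of modulus $\le \frac12$ — concretely one meets the classical Hilbert-matrix-type kernel $\frac{1}{\pi(x+y)}$ on $L^2(\RR_{>0})$, whose norm is exactly $1$, so that the associated commutator picks up the factor $\frac12$. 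Alternatively, and perhaps most transparently, I would note that $C^{(1)}$ is the continuous analogue of the finite-dimensional $C_n$, and the Hankel operator that appears is the very one referred to in the outline as having norm $\frac12$: so the computation $\Vert \Gamma \Vert = \frac12$ can be quoted (it is the norm of the Hankel operator with symbol $\heaviside$, equivalently half the norm of the Hilbert transform restricted in the obvious way, or the $L^\infty$ norm of the relevant analytic symbol). The main obstacle is the bookkeeping in this last step: one must be careful that the Cayley transform correctly matches the half-line cutoff with the Hardy-space cutoff and that no stray factors of $2$ or $\pi$ survive, and one must justify computing the norm of the (unbounded-looking but actually bounded) singular-integral operator rigorously — e.g. by exhibiting an approximate eigenfunction sequence realizing the value $\frac12$ together with the a priori upper bound $\frac12$ from \cite{qihuili}. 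Given both bounds, the theorem follows.
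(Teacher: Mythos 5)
Your proposal is essentially correct, and its first half coincides with the paper's argument: the paper likewise identifies $\Pi_\Xi$ as the Cauchy--Szeg\"o projection onto the Hardy space (after noting, as you do, that $\hbar>0$ is irrelevant), and uses the identity $\left[\mathcal M_f,\Pi\right]=\Pi^\perp \mathcal M_f\Pi-\Pi \mathcal M_f\Pi^\perp$, i.e.\ a block off-diagonal operator $H_f\oplus(-H_f)^*$, so that the commutator norm is \emph{exactly} the norm of a single Hankel operator with symbol the relevant indicator function. (On this point your tentative formula $\Vert C^{(1)}\Vert_{\op}=\Vert\Gamma\Vert/\sqrt{1+\cdots}$ is a slip --- there is no such correction factor; your alternative phrasing via $\Gamma^*\Gamma$ is the right one.) Where you genuinely diverge is the final norm computation: you stay on the line and propose to evaluate the norm of the Hankel operator with symbol $\heaviside$ directly, via $\heaviside-\tfrac12=\tfrac12\sgn$, the Fourier/Mellin picture and the Carleman kernel $\frac1{\pi(x+y)}$ of norm $1$, yielding $\tfrac12$. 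The paper instead conjugates by the explicit Cayley unitary $U_C:L^2(\TT)\to L^2(\RR)$ (plus a rotation of the circle), lands on $\left[\mathcal M_{\mathbbm 1_E},\Pi_\TT\right]=H_E\oplus(-H_E)^*$, and quotes Lemma \ref{H_Enorm}, whose proof uses Nehari's theorem for the upper bound and Power's theorem on the essential spectrum for the lower bound. Your Mellin route is self-contained and classical (it even produces approximate maximizers concretely), whereas the paper's route deliberately funnels everything through the single circle operator $H_E$ that also drives Theorem \ref{maintheorem}; note that simply ``quoting'' that $\Vert H_{\heaviside}\Vert=\tfrac12$, your other suggested option, would require the Cayley bookkeeping you mention, which is precisely what the paper carries out.

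One small further point: the theorem asserts that the operator $C^{(1)}_\hbar$ itself, not merely its norm, is independent of $\hbar$. Your dilation argument only gives unitary equivalence; but your own observation $\mathbbm 1_{(0,\infty)}(\lambda A)=\mathbbm 1_{(0,\infty)}(A)$ for $\lambda>0$, applied directly to $\Xi=\hbar\cdot(-i\,d/dx)$, gives $\Pi_\Xi=\FF^{-1}\mathcal M_{\heaviside}\FF$ independent of $\hbar$ (this is how the paper argues), so the stronger statement is immediate once you phrase it that way.
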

Similarly, define the operators $\Theta, Z$ on $L^2(S^1)\simeq L^2\left([0,2\pi),\frac 1 {2\pi} d\theta\right)$ by
\begin{equation*} \Theta u(\theta) = \theta u (\theta),\ Zu(\theta) = -i \frac{2\pi} n u'(\theta),\end{equation*}
where $u\in C^\infty(S^1)$ and $n \in \NN$. The operators $\Theta,\ Z$ may be used to construct an analogue of Weyl quantization for the cylinder $T^* S^1$ (\cite{mukunda}, \cite{cylinder1}, \cite{cylinder2}).
\begin{theorem}\label{ringcommutator} Let $C^{(2)}_{n} = \left[\Pi_\Theta, \Pi_Z \right]$, where
\begin{equation*} \Pi_\Theta = \mathbbm{1}_{(0,\infty)}(\cos \Theta),\ \Pi_Z = \mathbbm{1}_{(0,\infty)}(\cos Z).\end{equation*}
Then $\lim_{n \to \infty} \Vert C^{(2)}_n \Vert_{\op} = \frac 1 2$. The same result holds if we replace the Heaviside function with $\mathbbm{1}_{(a, \infty)}$, where $a \in [0,1)$.\end{theorem}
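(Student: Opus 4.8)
The plan is to reduce the problem to a setting as close as possible to the $SU(2)$ case treated in Theorem \ref{maintheorem}, so that the same Hankel-operator machinery applies. First I would identify the spectral projections concretely. The operator $\Theta$ is multiplication by $\theta$ on $L^2([0,2\pi),\tfrac{1}{2\pi}d\theta)$, so $\Pi_\Theta = \mathbbm{1}_{(0,\infty)}(\cos\Theta)$ is multiplication by the indicator of the arc where $\cos\theta > 0$, i.e.\ of $(0,\tfrac\pi2)\cup(\tfrac{3\pi}{2},2\pi)$; replacing the Heaviside function by $\mathbbm{1}_{(a,\infty)}$ simply shrinks this to the arc $\{\cos\theta > a\}$, a symmetric sub-arc of the same type. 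On the other side, $Z = -i\tfrac{2\pi}{n}\,\tfrac{d}{d\theta}$ is diagonalized by the Fourier basis $e_k(\theta)=e^{ik\theta}$, $k\in\ZZ$, with eigenvalues $\tfrac{2\pi k}{n}$, so $\Pi_Z = \mathbbm{1}_{(0,\infty)}(\cos Z)$ is the orthogonal projection onto the span of those $e_k$ with $\cos(\tfrac{2\pi k}{n}) > 0$ (resp.\ $> a$), i.e.\ onto a band of Fourier modes $k$ lying in a union of arithmetic-progression intervals determined by the condition on $k/n$. Thus $C^{(2)}_n$ is the commutator of a "multiplication by the indicator of an arc" operator and a "Fourier band" projection — structurally a finite analogue of a Toeplitz/Hankel commutator.

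Next I would set up the truncation-to-Hankel argument mirroring the outline of the proof of Theorem \ref{maintheorem}. In the Fourier basis, the matrix of $\Pi_\Theta$ has entries given by Fourier coefficients of the indicator of the arc — explicitly a sinc-type kernel $\widehat{\mathbbm{1}}_{\{\cos\theta>a\}}(j-k)$ — so the matrix of $C^{(2)}_n$ restricted to the relevant window of indices has entries depending, to leading order, only on $j-k$. Concretely, I would fix $N$, look at the sub-matrix of $C^{(2)}_n$ indexed by the $N$ Fourier modes immediately on either side of the "edge" of the band $\{\cos(2\pi k/n)>a\}$ (after reversing rows as in the $SU(2)$ case), and show that as $n\to\infty$ this $N\times N$ matrix converges entrywise to a fixed matrix that is Hankel, i.e.\ constant along anti-diagonals. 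The limiting Hankel symbol should be exactly the one appearing in Theorem \ref{maintheorem} — the footnote in the outline already asserts the same Hankel operator recurs in all these cases — because near a single edge the discrete band looks like a half-line of Fourier modes and $\Pi_\Theta$ looks like the Hardy-space (Riesz) projection composed with multiplication by a jump discontinuity, whose associated Hankel operator has norm $\tfrac12$. The lower bound $\liminf_n\Vert C^{(2)}_n\Vert_{\op}\ge\tfrac12$ then follows since the norm of a truncated Hankel matrix converges up to the norm of the full operator, while the upper bound $\le\tfrac12$ is the general commutator-of-projections fact cited (\cite{qihuili}).

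The main obstacle, I expect, is controlling the convergence of the matrix elements: unlike the $SU(2)$ case, where the Szeg\H{o} asymptotics (\ref{szegotheorem}) for Jacobi polynomials do the heavy lifting, here the relevant "special function" is just the discrete Fourier transform of an arc indicator, but one must verify that the off-edge and the far-from-edge contributions are negligible, and that the two edges of the band $\{\cos(2\pi k/n)>a\}$ decouple in the limit (so that one genuinely sees a single Hankel operator rather than an interaction of two). This requires a careful choice of which $n$ and which index window to use — in particular handling the arithmetic subtlety of where $2\pi k/n$ falls relative to $\pm\arccos a$ as $n$ varies, analogous to the mod-$4$ phenomenon for $C_n$ — together with a uniform decay estimate on the Fourier kernel $\widehat{\mathbbm{1}}_{\{\cos\theta>a\}}(m) = O(1/m)$ to bound the tails. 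Once the entrywise limit is pinned down to be the universal Hankel matrix, the rest is identical to Section \ref{mainproofsection}. I would also remark that the case $a\in(0,1)$ versus $a=0$ changes only the length of the arc and hence only the constant in the kernel, not the jump structure, so the limiting Hankel operator — and therefore the value $\tfrac12$ — is unchanged.
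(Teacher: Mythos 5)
Your plan follows the paper's proof essentially verbatim: diagonalize $Z$ in the basis $\{z^k\}$, recognize $\Pi_\Theta$ as multiplication by $\mathbbm{1}_{E_a}$ with matrix entries $\hat{\mathbbm{1}}_{E_a}(k-l)$, extract the sub-matrix of the commutator straddling one edge of the Fourier band, identify it with the truncated Hankel matrix, and conclude via Lemma \ref{H_Enorm} together with the general upper bound $\le \frac 1 2$. The only discrepancy is that the ``main obstacle'' you anticipate does not actually arise: since the commutator entries are $\left(\mathbbm{1}_{E}(\lambda_{l,n})-\mathbbm{1}_{E}(\lambda_{k,n})\right)\hat{\mathbbm{1}}_{E}(k-l)$, in the window around $\lceil \frac n 4 \rceil$ they equal $-\hat{\mathbbm{1}}_E(k-l)$ exactly (not merely to leading order), so the chosen sub-matrix is exactly $-[H_E]_N$ and no tail estimates or edge-decoupling are needed (only the lower bound via a sub-matrix is used); likewise, for $a>0$ the limiting operator is not the same Hankel operator but $H_{\mathbbm{1}_{E_a}}$, whose norm still equals $\frac 1 2$ by the same jump-discontinuity argument.
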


The previous two examples may be formulated in terms of the representation theory of the Heisenberg groups $H(\RR)$ and $H(\TT)$ (i.e.,  the group of unitary operators on $L^2(\TT)$ generated by translation operators and by operators of multiplication by characters). A similar result holds for representations of finite Heisenberg groups associated to $\ZZ_n$ as $n \to \infty$ (\cite{schwinger, prasad, vourdas, varadaraweisbart}). This problem was suggested to us by D. Kazhdan, and its solution provided the model of the proof for $SU(2)$ (as well as for the rest of the examples).

Let $g_1, g_2$ define an irreducible unitary representation of the finite Heisenberg group $H(\ZZ_n)$ on $l^2(\ZZ_n)$ by
\begin{equation*} g_1(f)(x) = f(x+1),\ g_2(f)(x) = e^{\frac{2\pi x i} n} f(x).\end{equation*}
Let $\Pi_1, \Pi_2$ denote the orthogonal projections on the subspaces of $l^2(\ZZ_n)$ spanned by eigenvectors of $g_1, g_2$ corresponding to eigenvalues with positive real part. Consider the commutator
\begin{equation*} C^{(3)}_n = \left[\Pi_1, \Pi_2 \right].\end{equation*}
The parallel of Theorem \ref{maintheorem} is the following.
\begin{theorem}\label{heisclaim} $\lim_{n\to \infty} \Vert C^{(3)}_n \Vert_{\op} = \frac 1 2$, and the same holds if we replace the Heaviside function with $\mathbbm{1}_{(a, \infty)}$, where $a \in [0,1)$.\end{theorem}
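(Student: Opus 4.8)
\textbf{Proof proposal for Theorem \ref{heisclaim}.}

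The plan is to reduce the problem to the same Hankel operator that governs Theorem \ref{maintheorem}, by computing the matrix elements of $C^{(3)}_n$ in a convenient orthonormal basis and taking the limit as $n \to \infty$. First I would fix the basis $(e_k)_{k \in \ZZ_n}$ of $l^2(\ZZ_n)$ on which $g_2$ acts diagonally, $g_2 e_k = e^{2\pi k i/n} e_k$, so that $\Pi_2$ is the projection onto the span of those $e_k$ with $\cos(2\pi k/n) > 0$, i.e.\ $k$ in an arc of roughly half the circle. The operator $g_1$ is conjugate to $g_2$ via the finite Fourier transform $\mathcal F_n$ on $\ZZ_n$, so $\Pi_1 = \mathcal F_n^{*} \Pi_2 \mathcal F_n$ (up to relabelling), and the matrix elements of $\Pi_1$ in the $e_k$-basis are therefore explicit exponential sums, essentially Dirichlet-kernel type expressions $\frac 1 n \sum_{l} e^{2\pi i l(k-j)/n}$ with $l$ ranging over the relevant arc. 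Hence $\langle C^{(3)}_n e_j, e_k\rangle = \langle \Pi_1 e_j, e_k\rangle \big(\mathbbm 1[\cos(2\pi k/n)>0] - \mathbbm 1[\cos(2\pi j/n)>0]\big)$, which is nonzero only when exactly one of $j,k$ lies in the ``positive'' arc; on that set it is a ratio of sines, $\frac 1 n \cdot \frac{\text{something}}{\sin(\pi(k-j)/n)}$, up to a unimodular factor.

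Next I would perform the same index surgery as in the $SU(2)$ case: center the window near the two transition points $k \approx \pm n/4$ of the arc $\{\cos(2\pi k/n)>0\}$, introduce local coordinates (new indices $a,b \geq 1$ measured from these transition points, with appropriate sign flips so that the rows are reversed exactly as in the construction of $C_{n,N}$), and observe that for fixed $a,b$ the quantities $n \sin(\pi(k-j)/n) \to \pi(k-j)$ while the numerators converge to fixed constants. The upshot is that $\lim_{n\to\infty}\langle C^{(3)}_n e_{j(b)}, e_{k(a)}\rangle$ depends only on $a+b$ — it is constant along anti-diagonals — so the limiting truncated matrices $C^{(3)}_{n,N}$ converge to the truncations of a Hankel operator. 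I would then identify this Hankel operator with the one already shown in the earlier sections to have norm $\frac 1 2$; because its symbol, up to a unimodular constant, has the form $\sum_{m\geq 0} c_m e^{im\vartheta}$ with $c_m$ proportional to $\frac{1}{2m+1}$ (an arctangent-type series), the identification should be immediate once the constants are matched. Since $\Vert C^{(3)}_n \Vert_{\op} \leq \frac 1 2$ for all $n$ (the general projection-commutator bound of \cite{qihuili}) and $\Vert C^{(3)}_n \Vert_{\op} \geq \Vert C^{(3)}_{n,N} \Vert_{\op} \to \Vert (\text{Hankel})_N\Vert_{\op} \to \frac 1 2$ as $N\to\infty$, a diagonal argument gives $\lim_{n\to\infty}\Vert C^{(3)}_n\Vert_{\op} = \frac 1 2$.

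The contributions of the two transition arcs near $k \approx n/4$ and $k \approx -n/4$ decouple in the limit because matrix elements linking indices near $+n/4$ to indices near $-n/4$ involve $n\sin(\pi(k-j)/n)$ with $k-j \approx \pm n/2$, hence stay bounded below and the elements vanish; so one of the two arcs alone already realizes the full Hankel operator, exactly as in the $SU(2)$ situation. For the generalization to $\mathbbm 1_{(a,\infty)}$ with $a \in [0,1)$, the only change is that the arc $\{\cos(2\pi k/n) > a\}$ is shorter, its transition points move to $k \approx \pm \frac{n}{2\pi}\arccos a$, but the local analysis near each transition point is unchanged — the Dirichlet kernel still linearizes and the same Hankel limit appears — so the norm is again $\frac 1 2$.

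The main obstacle I anticipate is the bookkeeping in the local coordinate change: getting the signs, the direction of row-reversal, and the unimodular phase factors right so that the limit is genuinely a Hankel matrix (and not merely a Toeplitz-plus-Hankel hybrid), and then matching the resulting constants to the specific Hankel operator of norm $\frac 1 2$ from Section \ref{asymptoticsection}. Unlike the $SU(2)$ case, no delicate asymptotics of special functions are needed here — the exponential sums are elementary — so once the combinatorial setup is pinned down the convergence is routine; the real content is verifying that the arithmetic of $\ZZ_n$ produces exactly the same limiting object.
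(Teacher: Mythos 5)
Your proposal is correct and follows essentially the same route as the paper: diagonalize one projection, compute the other's matrix elements as elementary discrete sums converging to $\hat{\mathbbm{1}}_E(k-l)$, extract a block near a transition index whose entrywise limit is $[H_E]_N$, and combine Lemma \ref{H_Enorm} with the upper bound $\frac 1 2$; the paper happens to diagonalize $g_1$ (working in the character basis $A_n(z^k)$), but explicitly notes that your choice, the $\delta$-basis diagonalizing $g_2$, works just as well. One correction for the case $a\in(0,1)$: the limit near the transition points $k\approx\pm\frac{n}{2\pi}\arccos a$ is \emph{not} the same Hankel operator, since the arc length enters the linearized Dirichlet kernel and the entries converge to $\hat{\mathbbm{1}}_{E_a}(k-l)=\frac{\sin\left((k-l)\arccos a\right)}{\pi(k-l)}$, i.e.\ to $H_{\mathbbm{1}_{E_a}}$ with $E_a=\{z\in\TT \mid \Re z>a\}$. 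Its norm still equals $\frac 1 2$ by the same Nehari and Power arguments used for $H_E$ (the symbol differs from $\frac 1 2$ by at most $\frac 1 2$, and it has two jumps of size $\frac 1 2$ at conjugate points), so your conclusion stands, but the limiting object should be identified as $H_{\mathbbm{1}_{E_a}}$ rather than $H_E$.
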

Additionally, as in the case of $SU(2)$,
\begin{theorem}[Y. Le Floch] $\Vert C^{(3)}_{4n+2} \Vert_{\op} \equiv \frac 1 2$ for every $n \in \NN$. \end{theorem}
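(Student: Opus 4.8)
The plan is to exploit the exact same symmetry considerations that, according to the outline of the $SU(2)$ case, render Theorem \ref{2mod4} ``quite straightforward and requires little but symmetries.'' Concretely, for $n = 4m+2$ one expects an involution $U$ on $l^2(\ZZ_n)$ that intertwines the two generators $g_1, g_2$ up to an exchange — for instance a suitably normalized finite Fourier transform $\FF$ on $\ZZ_n$ satisfying $\FF g_1 \FF^{-1} = g_2$ and $\FF g_2 \FF^{-1} = g_1^{-1}$ — so that conjugation by $\FF$ swaps $\Pi_1$ with $\Pi_2$ (one must check here that the spectral condition ``eigenvalue with positive real part'' is preserved, which is where $n \equiv 2 \bmod 4$ should matter, since the eigenvalues of $g_1, g_2$ are the $n$-th roots of unity and the set of those with positive real part is symmetric under $\lambda \mapsto \bar\lambda$ precisely when no root of unity is purely imaginary, i.e.\ when $4 \nmid n$). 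Conjugation by $\FF$ then sends $C^{(3)}_n = [\Pi_1,\Pi_2]$ to $[\Pi_2,\Pi_1] = -C^{(3)}_n$, so the spectrum of $C^{(3)}_n$ is symmetric about $0$.

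Next I would combine this with the general structural fact, already invoked in the excerpt (\cite{qihuili}), that the commutator of two orthogonal projections $P, Q$ is a skew-adjoint operator whose nonzero spectrum consists of pairs $\pm i \mu$ with $\mu \in (0, \tfrac12]$, and more precisely that $\Vert [P,Q]\Vert_{\op} = \tfrac12$ if and only if there is a unit vector $v$ with $\langle Pv, v\rangle = \langle Qv,v\rangle = \tfrac12$ and $\langle Pv, Qv\rangle = 0$ (equivalently, the two projections are ``in generic position'' on a $2$-dimensional invariant subspace realizing angle $\pi/4$). So it suffices to produce one such vector. In low even dimensions $n = 4m+2$ one can look for $v$ built from the symmetry: if $\FF v = \pm v$ or, better, if $v$ and $\FF v$ together span a $2$-dimensional $\{\Pi_1,\Pi_2\}$-invariant subspace, the eigenvalue $\tfrac12$ of $|C^{(3)}_n|$ can be read off directly. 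Alternatively — and this is probably the cleanest route given that the analogous $SU(2)$ proof ``requires little but symmetries'' — one shows $\Vert C^{(3)}_n\Vert_{\op} \ge \tfrac12$ by computing a single, explicit $2\times 2$ principal submatrix (or a single matrix element) of $C^{(3)}_n$ in the standard basis that diagonalizes $g_2$, using that $\Pi_2$ is then a coordinate projection and $\Pi_1$ is the conjugate of a coordinate projection by $\FF$, so the entries of $C^{(3)}_n$ are differences of products of entries of the finite Fourier matrix, which are explicit Gauss-type sums; the $n \equiv 2 \bmod 4$ condition makes these evaluate cleanly. Since $\Vert C^{(3)}_n\Vert_{\op} \le \tfrac12$ always, this forces equality.

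I would carry out the steps in the following order. First, fix the normalization of the Fourier transform $\FF$ on $\ZZ_n$ and record the intertwining identities $\FF g_1 \FF^{-1} = g_2$, $\FF g_2 \FF^{-1} = g_1^{-1}$ (up to an explicit scalar). Second, identify the eigenvalues of $g_1$ and of $g_2$ as $\{e^{2\pi i k/n}\}_{k \in \ZZ_n}$ and check that, for $n = 4m+2$, conjugation by $\FF$ carries $\Pi_1$ to $\Pi_2$ and $\Pi_2$ to $\mathbbm{1}_{(0,\infty)}(\Re g_1^{-1}) = \Pi_1$ (here $g_1^{-1}$ and $g_1$ have the same spectral projection onto $\{\Re > 0\}$ because that eigenvalue set is closed under complex conjugation exactly when $4 \nmid n$). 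Third, conclude $\FF C^{(3)}_n \FF^{-1} = -C^{(3)}_n$, hence $\spec(C^{(3)}_n) = -\spec(C^{(3)}_n)$. Fourth, exhibit the $\tfrac12$ in the spectrum: either via the $2$-dimensional invariant subspace spanned by a cleverly chosen $v$ and $\FF v$, or by the explicit $2\times 2$ Gauss-sum submatrix computation; combine with the $\tfrac12$ upper bound from \cite{qihuili}.

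The main obstacle, I expect, is the bookkeeping in the third and fourth steps: one has to be careful that the finite Fourier transform really does send $\Pi_1$ to $\Pi_2$ \emph{on the nose} (not merely up to a projection onto a one-dimensional discrepancy coming from the eigenvalue at $-1$, which is present since $n$ is even), and one has to locate the vector achieving angle $\pi/4$ explicitly rather than abstractly — the symmetry $C^{(3)}_n \sim -C^{(3)}_n$ alone only guarantees symmetry of the spectrum, not that its maximum is $\tfrac12$. For the $SU(2)$ analogue the resolution came from an additional discrete symmetry forcing a genuine two-dimensional block; I would look for the matching finite-group symmetry here (e.g.\ a combined conjugation by $\FF$ and by the reflection $x \mapsto -x$ on $\ZZ_n$) and check that it pins down such a block. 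Once the block is in hand, the remaining arithmetic is the routine evaluation of a short Gaussian sum at $n \equiv 2 \bmod 4$.
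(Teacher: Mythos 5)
Note first that the paper never actually writes out a proof of this statement (Section \ref{miscproofs} proves only the asymptotic Theorem \ref{heisclaim}; the exact value at $n\equiv 2 \bmod 4$ is attributed to Le Floch and only described as a symmetry argument), so your proposal has to be judged on its own merits — and as it stands it has a genuine gap exactly at the decisive step. The symmetry you extract is vacuous: for \emph{any} two orthogonal projections $P,Q$, conjugation by the reflection $2P-\Id$ already sends $[P,Q]$ to $-[P,Q]$, so symmetry of the spectrum about $0$ holds for every $n$ and cannot see $n \bmod 4$; likewise the set $\{\Re z>0\}$ of roots of unity is closed under complex conjugation for \emph{all} $n$ (purely imaginary roots have zero real part), so your step-two parenthetical does not isolate where $n\equiv 2\bmod 4$ enters. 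The content of the theorem is producing $\pm \tfrac i2$ in the spectrum, which you leave to two routes that do not work: the criterion ``$\Vert[P,Q]\Vert_{\op}=\tfrac12$ iff there is a unit $v$ with $\langle Pv,v\rangle=\langle Qv,v\rangle=\tfrac12$ and $\langle Pv,Qv\rangle=0$'' is false (take $P$ the projection onto $e_1$ and $Q$ the projection onto $(e_1+e_2)/\sqrt2$ in $\CC^2$: the norm is $\tfrac12$ but no such $v$ exists), and no single entry or fixed $2\times2$ block of $C^{(3)}_n$ in the $\delta$-basis can certify the exact value $\tfrac12$: the nonzero entries are discretized Fourier coefficients $\hat{\mathbbm 1}_E(k-l)$ with $k\ne l$, all of modulus well below $\tfrac12$, the relevant sums are geometric (Dirichlet-kernel) sums rather than Gauss sums, and the extremal vectors are spread out (Hankel/Hilbert-transform type), so a fixed $2\times 2$ principal block has norm strictly below $\tfrac12$.

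What is missing is a parity (odd-dimension) argument riding on the half-period symmetries, which is presumably the ``little but symmetries'' proof alluded to. For $n=4m+2$ the unitary $T=g_1^{n/2}$ commutes with $g_1$, and since $g_1g_2=e^{2\pi i/n}g_2g_1$ one gets $Tg_2T^{-1}=-g_2$; because $4\nmid n$ no eigenvalue of $g_2$ is purely imaginary, hence $\mathbbm{1}_{\{\Re>0\}}(-g_2)=\Id-\Pi_2$, so $T\Pi_1T^{-1}=\Pi_1$ and $T\Pi_2T^{-1}=\Id-\Pi_2$. Conjugation by $T$ therefore preserves $\Ran\Pi_1$ and carries $\Pi_1\Pi_2\Pi_1$ to $\Pi_1-\Pi_1\Pi_2\Pi_1$, so the eigenvalue multiset of $\Pi_1\Pi_2\Pi_1$ restricted to $\Ran\Pi_1$ is invariant under $\lambda\mapsto 1-\lambda$; since $\operatorname{rank}\Pi_1=n/2=2m+1$ is odd, some eigenvalue must equal $\tfrac12$. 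Combining this with the standard identities $\Vert[\Pi_1,\Pi_2]\Vert_{\op}=\Vert\Pi_1\Pi_2(\Id-\Pi_1)\Vert_{\op}=\max\bigl\{\sqrt{\lambda(1-\lambda)}\,\bigl|\,\lambda\in\sigma\bigl(\Pi_1\Pi_2\Pi_1|_{\Ran\Pi_1}\bigr)\bigr\}$ gives the lower bound $\tfrac12$, and the cited upper bound of \cite{qihuili} finishes the proof. Your Fourier-transform observation ($\FF_n$ swaps the roles of $\Pi_1$ and $\Pi_2$, as in the paper's lemma $g_1=\FF_n^{-1}g_2\FF_n$) is correct and harmless, but by itself it only exchanges the two projections and cannot produce the value $\tfrac12$.
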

The numerical simulations of the sequence $\left(\Vert C^{(3)}_n \Vert_{\op}\right)_{n=2}^\infty$ feature some striking similarities with the equivalent simulations (\ref{su2}) for $SU(2)$, including the conjectured dependence on $n \mod 4$.
\begin{figure}[H]
    \centering
        \includegraphics[width=1.3\textwidth,center]{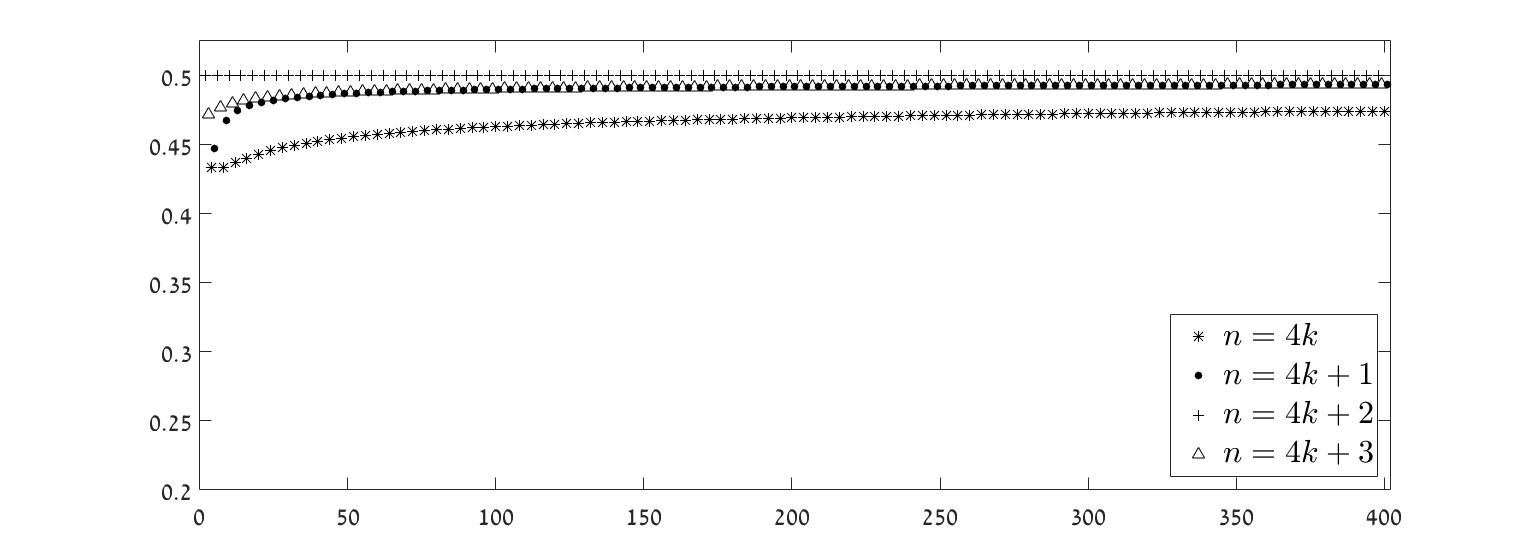} 
        \caption{The norm of $C^{(3)}_{n}$ as a function of $n$ for the Heisenberg groups $H(\ZZ_n)$. Note the similarity to the graph for $SU(2)$. In particular, the graph also appears to depend on $n\mod 4$.}\label{heis}
\end{figure}
Our final example may be derived as a consequence of Theorem \ref{maintheorem}. Let $SE(2)$ denote the group of orientation preserving Euclidean plane isometries. The asymptotic formula \ref{wignerdasymptotics}, which underlies the proof of the Theorem \ref{maintheorem}, provides a non-trivial relation (\cite{inonuwigner, rowedeguisesanders, subagbaruch}) between the representations of $SU(2)$ and of $SE(2)$. This fact led us to study the analogue of $C_n$ for the irreducible representations of $SE(2)$.

Consider $L^2(S^1)\simeq L^2\left([0,2\pi), \frac 1 {2\pi} d\phi \right)$ as before, and fix $R > 0$. Let $X_1,X_2$ denote the multiplication operators $\mathcal M_{R \cos \phi},\ \mathcal M_{R \sin \phi}$ respectively, and let $\Phi$ denote the differentiation operator $f \mapsto -i f'$. Then $X_1,X_2,\Phi$ satisfy the commutation relations
\begin{equation*} [X_1,X_2] = 0,\ [X_2,\Phi] = iX_1,\ [\Phi, X_1] = iX_2,\end{equation*}
hence they generate a unitary representation of $SE(2)$ on $L^2(S^1)$, which is irreducible since $X_1 \pm iX_2 = \mathcal M_{Re^{\pm i \phi}}$ act as raising and lowering operators on the standard basis of $L^2(S^1)$. In fact, every non-trivial irreducible unitary representation of $SE(2)$ is equivalent to the representation generated by $X_1,X_2, \Phi$ for some $R>0$ (\cite{ito}).
\begin{theorem}\label{euclideancommutator} Let $C^{(4)}_R = \left[\Pi_{X_1}, \Pi_\Phi \right]$, where
\begin{equation*} \Pi_{X_1} = \mathbbm{1}_{(0,\infty)}(X_1),\ \Pi_\Phi = \mathbbm{1}_{(0,\infty)}(\Phi).\end{equation*} Then $C^{(4)}_R \equiv C^{(4)}$ is independent of $R$, and $\Vert C^{(4)} \Vert_{\op} = \frac 1 2$. \end{theorem}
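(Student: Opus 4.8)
The plan is to reduce $C^{(4)}$ to a single explicit Hankel operator and then invoke the norm computation that already underlies Theorem \ref{maintheorem}. First I would make the two projections concrete. The operator $\Phi = -i\,d/d\phi$ is diagonalised by the orthonormal basis $\{e^{in\phi}\}_{n\in\ZZ}$ of $L^2(S^1)$, so $\Pi_\Phi = \mathbbm 1_{(0,\infty)}(\Phi)$ is the orthogonal projection onto $\mathcal K := \overline{\spn}\{e^{in\phi}:n\ge 1\}$; and $X_1 = \mathcal M_{R\cos\phi}$ being a multiplication operator, $\Pi_{X_1} = \mathbbm 1_{(0,\infty)}(X_1) = \mathcal M_{\mathbbm 1_E}$ is multiplication by the indicator of the semicircle $E := \{\phi : \cos\phi > 0\}$. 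Both are patently independent of $R$, which disposes of the claim $C^{(4)}_R \equiv C^{(4)}$. Writing $M := \mathcal M_{\mathbbm 1_E}$ and $\Pi := \Pi_\Phi$, the elementary identity $[M,\Pi] = (I-\Pi)M\Pi - \Pi M(I-\Pi)$, combined with $M^{*} = M$, gives $C^{(4)} = H - H^{*}$ where $H := (I-\Pi)M\Pi$. Relative to the decomposition $L^2(S^1) = \mathcal K \oplus \mathcal K^{\perp}$ this is a purely off-diagonal operator, so $\|C^{(4)}\|_{\op} = \|H\|_{\op}$.

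Next I would identify $H$ with a Hankel operator. Conjugating by the unitary shift $\mathcal M_{e^{i\phi}}$, which commutes with $M$ and carries $\mathcal K$ onto the Hardy space $H^2 = \overline{\spn}\{e^{in\phi}:n\ge 0\}$ and $\mathcal K^{\perp}$ onto $(H^2)^{\perp}$, yields $\|H\|_{\op} = \|H_{\mathbbm 1_E}\|_{\op}$, the norm of the Hankel operator on $H^2$ with symbol $\mathbbm 1_E$; equivalently, computing entries directly in the basis $\{e^{in\phi}\}$ and using $\widehat{\mathbbm 1_E}(j) = \sin(j\pi/2)/(\pi j)$ for $j\neq 0$, the operator $H$ is unitarily equivalent to the Hankel matrix $[\gamma_{a+b}]_{a,b\ge 0}$ with $\gamma_{2k} = (-1)^k/(\pi(2k+1))$ and $\gamma_{2k+1}=0$. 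After a diagonal rescaling and a splitting into parts of even and odd index this is a weighted Hilbert matrix, and --- more importantly --- it is exactly the Hankel operator appearing in the proof of Theorem \ref{maintheorem}. This is the concrete shape taken here by the In\"on\"u--Wigner contraction $SU(2)\rightsquigarrow SE(2)$ (\cite{inonuwigner, rowedeguisesanders}): under the asymptotics of Conclusion \ref{wignerdasymptotics} the matrix elements of $e^{-i\theta J_y}$ degenerate into Bessel functions, that is, into matrix elements of the translations $e^{-i\theta X_1} = \mathcal M_{e^{-i\theta R\cos\phi}}$ of $SE(2)$, so that the integral formula of Section \ref{integralformulasection} for $C_n$ passes in the limit to the corresponding formula for $C^{(4)}$.

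Granting the computation behind Theorem \ref{maintheorem}, we already know $\|H_{\mathbbm 1_E}\|_{\op} = \tfrac12$, and hence $\|C^{(4)}\|_{\op} = \tfrac12$; but I would also record the short self-contained argument. By Nehari's theorem (\cite{pellerbig}), $\|H_{\mathbbm 1_E}\|_{\op} = \operatorname{dist}_{L^{\infty}(S^1)}(\mathbbm 1_E, H^{\infty})$. Since $\mathbbm 1_E - \tfrac12 = \tfrac12\sgn(\cos\phi)$ and constants lie in $H^{\infty}$, this distance is at most $\tfrac12$ --- a bound we also know from the general inequality $\|[P,Q]\|_{\op}\le\tfrac12$ for orthogonal projections $P,Q$ (\cite{qihuili}). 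For the matching lower bound, $\mathbbm 1_E$ is piecewise continuous with two jumps, at $\phi = \pm\tfrac\pi2$, each of modulus $1$; since by Hartman's theorem the continuous part of the symbol contributes a compact operator, the local analysis at the jumps (Hankel operators with piecewise continuous symbols, \cite{pellerbig}) shows that a jump of modulus $1$ forces $\|H_{\mathbbm 1_E}\|_{\op}\ge\|H_{\mathbbm 1_E}\|_{\mathrm{ess}}\ge\tfrac12$. Together these bounds give $\|C^{(4)}\|_{\op} = \|H_{\mathbbm 1_E}\|_{\op} = \tfrac12$. I expect the one genuinely substantial point to be this last lower bound --- equivalently, that the unit-size jumps of $\mathbbm 1_E$ already push the Hankel norm up to its a priori maximum $\tfrac12$ --- while the remaining steps are bookkeeping, and through the contraction they are absorbed into the norm computation already carried out for Theorem \ref{maintheorem}.
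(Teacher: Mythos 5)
Your proposal is correct and takes essentially the same route as the paper: you reduce $C^{(4)}$ (independence of $R$ being immediate since $\mathbbm{1}_{(0,\infty)}(R\cos\phi)=\mathbbm{1}_E$) to the off-diagonal commutator $\left[\mathcal M_{\mathbbm{1}_E},\Pi_\TT\right]=H_E\oplus(-H_E)^*$ and then compute $\Vert H_E\Vert_{\op}=\frac12$ via Nehari for the upper bound and the piecewise-continuous-symbol theory for the lower bound, which is precisely the paper's appeal to (\ref{basicestcase}) and Lemma \ref{H_Enorm}. Two minor remarks: your conjugation by the unimodular shift to pass from the span of $\{e^{in\phi}:n\ge1\}$ to $H^2(\TT)$ is, if anything, more careful than the paper's one-line identification (which silently drops the rank-one $\hat f(0)$ term, harmless for the norm), while your lower-bound phrasing should invoke Power's theorem through the \emph{pair} of opposite jumps of $\mathbbm{1}_E$ at $\pm i$ (a lone jump at a point $\alpha\neq\pm1$ contributes nothing to the essential spectrum), which is exactly how the paper argues and does give $\sigma_{\ess}(H_E)=\left[-\frac12,\frac12\right]$.
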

\section{Preliminaries on Hankel operators}\label{hankelssubsection}
Theorem \ref{maintheorem} is essentially reduced to the problem of the calculation of $\Vert H_E \Vert_{\op}$ for some Hankel operator $H_E$ which we now specify. The operator $H_E$ appears and plays roughly the same role in all of the results of Section \ref{morecases} as well.

Let $\TT \subset \CC$ denote the unit circle, and declare the functions $z \mapsto z^p,\ p\in \ZZ$ to be an orthonormal basis of $L^2(\TT)$. Let $\Pi_{\TT} : L^2(\TT) \to L^2(\TT)$ denote the Cauchy-Szeg\"{o} projection on the Hardy space
\begin{equation*} H^2(\TT) = \{f\in L^2(\TT)\ |\ \hat f(p) = 0\ \text{for every } p < 0 \}. \end{equation*}
Here, $\hat f(p) = \langle f, z^p\rangle_{L^2(\TT)}$ denotes the $p$-th Fourier coefficient of $f$. Finally, let $\mathcal M_\phi$ denote the multiplication operator defined by a function $\phi \in L^\infty(\TT)$.
\begin{definition} The Hankel operator corresponding to the symbol $\phi \in L^\infty(\TT)$ is defined as
\begin{equation*} H_\phi = \left(\Id - \Pi_\TT \right) \mathcal M_\phi \Pi_\TT : H^2(\TT) \to \left(H^2(\TT) \right)^\perp.\end{equation*}
\end{definition}
Let $[H_\phi] = (h_{k,l})_{k,l\ge 1}$ denote the matrix representing $H_\phi$ in the standard bases \begin{equation*}\mathcal B = \{z^{p-1} \ |\ p > 0\},\ \mathcal C = \{z^{-p} \ |\ p > 0\}\end{equation*}
of $H^2(\TT)$ and $H^2(\TT)^\perp$, respectively. Then
\begin{equation*} h_{k,l} = \langle \phi z^{l-1}, z^{-k} \rangle = \hat \phi(1-k-l).\end{equation*}
The truncated matrices associated with a symbol $\phi \in L^\infty(\TT)$ are relevant to us as well. For an infinite matrix $A = (a_{k,l})_{k,l \ge 1}$ denote $A_N = (a_{k,l})_{1 \le k,l \le N}$. We will require the following basic fact.  
\begin{lemma}\label{truncatednorm} $\lim_{N \to \infty} \Vert [H_\phi]_N \Vert_{\op} = \Vert H_\phi \Vert_{\op}$. \end{lemma}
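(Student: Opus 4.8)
The plan is to prove the two inequalities $\limsup_N \Vert [H_\phi]_N \Vert_{\op} \le \Vert H_\phi \Vert_{\op}$ and $\liminf_N \Vert [H_\phi]_N \Vert_{\op} \ge \Vert H_\phi \Vert_{\op}$ separately, the first being essentially automatic and the second requiring an approximation argument. For the upper bound, note that $[H_\phi]_N$ is the matrix of the compression $P_N H_\phi P_N'$, where $P_N$ (resp. $P_N'$) is the orthogonal projection of $H^2(\TT)$ (resp. $(H^2(\TT))^\perp$) onto the span of the first $N$ basis vectors of $\mathcal B$ (resp. $\mathcal C$). Since compressing by orthogonal projections cannot increase the operator norm, $\Vert [H_\phi]_N \Vert_{\op} = \Vert P_N' H_\phi P_N \Vert_{\op} \le \Vert H_\phi \Vert_{\op}$ for every $N$, which immediately gives $\limsup_N \Vert [H_\phi]_N \Vert_{\op} \le \Vert H_\phi \Vert_{\op}$ (and in particular the sequence is bounded).

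For the lower bound, I would argue that the truncations converge to $H_\phi$ in the strong operator topology, and then invoke the general fact that the operator norm is lower semicontinuous with respect to strong convergence. More precisely, fix $\varepsilon > 0$ and choose a unit vector $f \in H^2(\TT)$ with $\Vert H_\phi f \Vert \ge \Vert H_\phi \Vert_{\op} - \varepsilon$. Since the finite linear combinations of $\mathcal B$ are dense in $H^2(\TT)$, I may additionally assume $f = P_{N_0} f$ for some $N_0$; then for all $N \ge N_0$ we have $P_N f = f$, and $P_N' H_\phi P_N f = P_N' H_\phi f \to H_\phi f$ as $N \to \infty$ because $P_N' \to \Id$ strongly on $(H^2(\TT))^\perp$. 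Hence $\Vert [H_\phi]_N \Vert_{\op} \ge \Vert P_N' H_\phi f \Vert \to \Vert H_\phi f \Vert \ge \Vert H_\phi \Vert_{\op} - \varepsilon$, so $\liminf_N \Vert [H_\phi]_N \Vert_{\op} \ge \Vert H_\phi \Vert_{\op} - \varepsilon$, and letting $\varepsilon \to 0$ finishes the argument.

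Combining the two bounds yields $\lim_{N \to \infty} \Vert [H_\phi]_N \Vert_{\op} = \Vert H_\phi \Vert_{\op}$. I do not anticipate a genuine obstacle here; the only point requiring a little care is the density/strong-approximation step, namely that a norm-near-maximizing vector can be taken in the dense subspace $\bigcup_N \Ran P_N$, which is a routine consequence of the continuity of $H_\phi$. Alternatively, one could bypass strong convergence entirely and phrase the lower bound via test vectors directly: for any finitely supported unit vectors $f$ (in $\mathcal B$-coordinates) and $g$ (in $\mathcal C$-coordinates), the quantity $|\langle H_\phi f, g\rangle|$ is witnessed by $[H_\phi]_N$ for all large $N$, and taking the supremum over such $f, g$ recovers $\Vert H_\phi \Vert_{\op}$ since these vectors are dense in the respective unit spheres.
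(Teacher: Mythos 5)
Your argument is correct and complete: the upper bound via the identification $[H_\phi]_N = P_N' H_\phi P_N$ (compressions by orthogonal projections do not increase the norm) and the lower bound via a near-maximizing vector taken, by density and continuity of $H_\phi$, in $\bigcup_N \Ran P_N$, together with $P_N' \to \Id$ strongly on $\left(H^2(\TT)\right)^\perp$, is exactly the standard proof of this fact. There is nothing in the paper to compare it with, since Lemma \ref{truncatednorm} is stated there as a ``basic fact'' without proof; your write-up supplies that missing argument. One small optional streamlining: since $[H_\phi]_N$ is a submatrix of $[H_\phi]_{N+1}$, the sequence $\Vert [H_\phi]_N \Vert_{\op}$ is nondecreasing and bounded by $\Vert H_\phi \Vert_{\op}$, so only your lower-bound (test-vector) step is really needed to identify the limit; your alternative formulation via $|\langle H_\phi f, g\rangle|$ over finitely supported unit vectors accomplishes the same thing.
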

Finally, let $E = \{z \in \TT \ |\ \Re z > 0\}$ denote the right half of the unit circle. Denote $H_E = H_{\mathbbm{1}_E}$, where $\mathbbm{1}_E$ is the indicator function of $E$. The Fourier coefficients of $\mathbbm{1}_E$ are specified by
\begin{equation}\label{fourierentries} \hat{\mathbbm{1}}_E(p) =\left\{\begin{array}{ll} \frac 1 2 & \text{if } p = 0,\\ \sin \left(\frac{\pi p} 2 \right)\frac 1 {\pi p} & \text{if } p \ne 0 \end{array}\right..\end{equation} Perhaps somewhat surprisingly, the operator $H_E$ is closely related to the commutators $C_n$. In particular, the proof of Theorem \ref{maintheorem} relies on the following.
\begin{lemma}\label{H_Enorm} $\Vert H_E \Vert_{\op} = \frac 1 2$. Hence by Lemma \ref{truncatednorm}, $\lim_{N \to \infty} \Vert [H_E]_N \Vert_{\op} = \frac 1 2$. \end{lemma}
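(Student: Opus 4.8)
The plan is to diagonalize $H_E$ as concretely as possible by exploiting the explicit Fourier coefficients (\ref{fourierentries}). The matrix entries are $h_{k,l} = \hat{\mathbbm 1}_E(1-k-l)$, which depend only on $m := k+l$. Writing $p = 1 - m$, we have $\hat{\mathbbm 1}_E(1-m) = \frac 1 2$ when $m = 1$ and $\sin\!\left(\frac{\pi(1-m)}2\right)\frac 1{\pi(1-m)}$ otherwise. Since $\sin\!\left(\frac{\pi(1-m)}{2}\right)$ vanishes whenever $1-m$ is even, i.e. whenever $m$ is odd and $m \ne 1$, the only nonzero anti-diagonals are $m = 1$ (value $\tfrac 12$) and $m$ even (value $\pm\frac 1{\pi(1-m)}$). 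So $H_E$, relative to the basis indexed by $(k,l)$, is a Hankel matrix supported on a sparse set of anti-diagonals. The first thing I would do is separate the rank-one piece coming from $m=1$ — that is, the $(1,1)$ entry alone — or, better, pass to the generating-function picture: $H_\phi$ acting on $H^2(\TT)$ is, up to the usual identification $z^{l-1}\mapsto e_l$, the operator whose symbol is $\mathbbm 1_E$, and one computes $H_\phi f = P_-(\phi f)$ where $P_-$ is projection onto negative frequencies. Thus $\|H_E\|_{\op} = \sup\{ \|P_-(\mathbbm 1_E f)\| : f \in H^2,\ \|f\|=1\}$, and one wants to recognize $\mathbbm 1_E$ as having a particularly simple form after a change of variable.

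The key observation I would pursue is that $\mathbbm 1_E$, the indicator of the right half-circle $\Re z > 0$, is (up to an additive constant) the composition of a Möbius-type map with the jump function $\sgn$. Concretely, on $\TT$ the function $z \mapsto \tfrac 12(1 + \sgn(\Re z))$ and the arc $E$ is the image of the upper half-circle under $z \mapsto iz$ (rotation by $\pi/2$), so $\mathbbm 1_E(z) = \mathbbm 1_{\text{upper}}(iz)$; rotation by $i$ is a unitary equivalence $\FF z^p \mapsto i^p z^p$ that conjugates $H_{\mathbbm 1_E}$ to $H_{\mathbbm 1_{\text{upper}}}$ up to a diagonal unitary, hence preserves the operator norm. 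The indicator of the upper half-circle is classical: $\mathbbm 1_{\text{upper}}(e^{i\theta}) = \tfrac 12 + \tfrac 1\pi \sum_{j\ \text{odd}} \tfrac{e^{ij\theta}}{ij}$ (this is the Fourier series of a square wave), so its Hankel operator has matrix entries supported on odd anti-diagonals with entries $\tfrac 1{\pi i (1-k-l)}$. A Hankel operator with $(k,l)$-entry proportional to $\tfrac 1{k+l-1}$ restricted to odd $k+l-1$ is — after the obvious splitting into "both $k,l$ from arithmetic progressions" — unitarily equivalent to (a multiple of) the classical Hilbert matrix $\left(\tfrac 1{k+l-1}\right)_{k,l\ge 1}$, whose norm is famously $\pi$ (Hilbert, Schur, Magnus). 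The factor out front is $\tfrac 1\pi$, and one has to track the combinatorial factor of $2$ coming from passing to odd indices; with the normalization in (\ref{fourierentries}) this should land exactly on $\tfrac 12$.

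The order of steps would therefore be: (i) reduce via the rotation $z\mapsto iz$ to the symbol $\mathbbm 1_{\text{upper}}$; (ii) expand $\mathbbm 1_{\text{upper}}$ in Fourier series and read off that $H_{\mathbbm 1_{\text{upper}}}$ is, up to a diagonal unitary $\diag(i^{-k})$ on each side and an overall scalar $\tfrac 1{\pi}$, the Hankel matrix $\left(\tfrac 1{k+l-1}\right)$ restricted to odd values of $k+l-1$; (iii) split indices into even and odd parts to see that this restricted matrix is unitarily equivalent to a direct-sum-free reindexing of the full Hilbert matrix $\left(\tfrac 1{k+l-1}\right)_{k,l\ge1}$ (one should be a little careful here — the restriction to odd anti-diagonals forces $k,l$ to have opposite parities, which after relabeling gives precisely the Hilbert matrix, not a dilate of it); (iv) invoke $\bigl\|\left(\tfrac 1{k+l-1}\right)_{k,l\ge1}\bigr\|_{\op} = \pi$; (v) combine the scalar factors to obtain $\|H_E\|_{\op} = \tfrac 1\pi \cdot \pi \cdot \tfrac 12 = \tfrac 12$. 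Wait — I should double-check step (iii): the \emph{both-even / both-odd} index pairs are exactly the ones sent to \emph{zero} by the odd-anti-diagonal restriction, so the nonzero block is the off-diagonal $(k \text{ even}, l \text{ odd})$ and $(k\text{ odd}, l\text{ even})$ pieces, and the entry $\tfrac 1{k+l-1}$ with $k=2a,\ l=2b-1$ becomes $\tfrac 1{2(a+b-1)}$ — so there \emph{is} a factor of $2$, and the relevant norm is $\tfrac 12\bigl\|\left(\tfrac1{a+b-1}\right)\bigr\| = \tfrac\pi2$, whence $\|H_E\| = \tfrac 1\pi\cdot\tfrac\pi2 = \tfrac12$. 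The main obstacle is exactly this bookkeeping of parity-restrictions and scalar factors — getting the unitary equivalence to the Hilbert matrix precisely right rather than off by a factor of $2$ — together with citing or reproving $\|(\tfrac 1{k+l-1})\| = \pi$ (the upper bound is Schur's test with weights $\sqrt{k-\tfrac12}$; the matching lower bound uses the truncated-norm limit, Lemma \ref{truncatednorm}, applied to suitable test vectors $f_m(k) \propto k^{-1/2-\epsilon}$). The final sentence of the lemma, $\lim_N \|[H_E]_N\|_{\op} = \tfrac 12$, is then immediate from Lemma \ref{truncatednorm}.
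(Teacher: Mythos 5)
There is a genuine gap, and it sits exactly at the parity bookkeeping you flagged as the main obstacle. After rotating to the symbol $\mathbbm{1}_{\{\Im z>0\}}$, the nonzero entries are those with $1-k-l$ odd, i.e.\ $k+l$ even, i.e.\ $k$ and $l$ of the \emph{same} parity (your "double-check" concluded the opposite). So the operator splits as a direct sum of the odd-odd and even-even blocks, whose entries are $\frac i{\pi(2a+2b-3)}$ and $\frac i{\pi(2a+2b-1)}$, i.e.\ $\frac 1{2\pi}$ times the \emph{parameter-shifted} Hilbert matrices $\left(\frac 1{j+k+1/2}\right)_{j,k\ge 0}$ and $\left(\frac 1{j+k+3/2}\right)_{j,k\ge 0}$ — not the classical Hilbert matrix $\left(\frac1{k+l-1}\right)$. (Also, minor: the anti-diagonal $m=k+l=1$ never occurs, since $k,l\ge 1$; the $(1,1)$ entry is $\hat{\mathbbm{1}}_E(-1)=\frac1\pi$, not $\frac12$.) Your final arithmetic lands on $\frac12$ only because it silently amounts to asserting that the shifted matrix with parameter $1/2$ has norm $\pi$, and that fact does not follow from "the Hilbert matrix has norm $\pi$": the parameter-$1/2$ matrix dominates the classical one entrywise, so comparison only gives the \emph{lower} bound $\Vert H_E\Vert_{\op}\ge\frac12$; the upper bound requires showing this positive Hankel operator has no spectrum above $\pi$ (a Rosenblum/Power-type statement, and a borderline case at that — naive Schur-test or lacunary-series comparisons give $2\pi$, i.e.\ only $\Vert H_E\Vert_{\op}\le 1$).

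So as written the proposal proves at best $\frac12\le\Vert H_E\Vert_{\op}\le 1$, with the missing ingredient being precisely the sharp upper bound. The paper gets that bound much more cheaply via Nehari's theorem: the sequence of the Hankel matrix is realized by the symbol $\phi(z)=\bar z\left(\mathbbm{1}_E(z)-\frac12\right)$, which has $\Vert\phi\Vert_\infty=\frac12$ exactly, giving $\Vert H_E\Vert_{\op}\le\frac12$ in one line; the lower bound is then read off from Power's theorem, which identifies the essential spectrum of $H_E$ (piecewise continuous symbol with jumps of size $\frac12$ at $\pm i$... more precisely at the endpoints of $E$) as $\left[-\frac12,\frac12\right]$. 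If you want to salvage your route, you would either need to prove $\bigl\Vert\left(\frac1{j+k+1/2}\right)\bigr\Vert_{\op}=\pi$ separately (essentially redoing the hard half of the lemma), or apply Nehari directly to $H_E$ with a well-chosen symbol — which is the paper's argument.
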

We present the (simple) proof of the lemma in the following subsection. The inequality $\Vert H_E \Vert_{\op}\le \frac 1 2$ follows from the contents of Section \ref{mainproofsection}, though we include a separate proof using Nehari's Theorem on Hankel operators. The complementary inequality $\Vert H_E \Vert_{\op} \ge \frac 1 2$ is a direct consequence of Power's Theorem on Hankel operators with piecewise continuous symbols.
\subsection{On the norm of Hankel operators}\label{appendixb}
In this part, we apply two fundamental theorems on Hankel operators to obtain the proof of Lemma \ref{H_Enorm}. For a complex sequence $a = (a_k)_{k\in \NN}$, define the Hankel matrix $S_a = (a_{k+l})_{k,l\in \NN} : l^2(\NN) \to l^2(\NN)$.
\begin{theorem}[\cite{peller}] $S_a$ is bounded on $l^2(\NN)$ if and only if there exists $\phi \in L^\infty(\TT)$ such that $a_k = \hat \phi(k)$ for every $k \ge 0$. In this case,
\begin{equation*} \Vert S_a \Vert_{\op} = \inf\{\Vert \phi \Vert_{\infty} \ |\ \hat \phi(k) = a_k \text{ for every } k \ge 0\}. \end{equation*} \end{theorem}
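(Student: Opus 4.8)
The plan is to recognize $S_a$ as a Hankel operator in the sense defined above, to compute its operator norm as the norm of a single bounded linear functional on the subspace $zH^1(\TT) := \{u \in H^1(\TT) : \hat u(0) = 0\}$ of $L^1(\TT)$, and then to conclude via the Hahn--Banach theorem together with the duality $L^1(\TT)^* = L^\infty(\TT)$. Throughout, $m$ denotes normalized arc-length measure on $\TT$.

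\textbf{The easy implication.} Suppose $\phi \in L^\infty(\TT)$ satisfies $\hat\phi(k) = a_k$ for all $k \ge 0$. Identifying $l^2(\NN)$ with $H^2(\TT)$ and with $H^2(\TT)^\perp$ through the bases $\mathcal B$ and $\mathcal C$ --- and applying the reflection $z \mapsto \bar z$ that reconciles the indexing of $S_a$ with the convention $h_{k,l} = \hat\phi(1-k-l)$ --- the matrix $S_a$ is realized as (a restriction of) the matrix of the Hankel operator attached to $\phi$. Since $\Id - \Pi_\TT$ and $\Pi_\TT$ are contractions and $\Vert \mathcal M_\phi \Vert_{\op} = \Vert \phi \Vert_\infty$, the factorization $H_\phi = (\Id - \Pi_\TT) \mathcal M_\phi \Pi_\TT$ gives $\Vert S_a \Vert_{\op} \le \Vert H_\phi \Vert_{\op} \le \Vert \phi \Vert_\infty < \infty$. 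Thus $S_a$ is bounded, and, minimizing over admissible $\phi$, $\Vert S_a \Vert_{\op} \le \inf\{ \Vert \phi \Vert_\infty : \hat\phi(k) = a_k \text{ for all } k \ge 0 \}$.

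\textbf{The hard implication and the norm identity.} Assume now that $S_a$ is bounded, set $M := \Vert S_a \Vert_{\op}$, and regard $S_a$ --- via the identifications above --- as a bounded operator $\Gamma$ from $H^2(\TT)$ to $H^2(\TT)^\perp$ whose matrix entries are determined by the sequence $a$. A short computation with Fourier coefficients shows that, for $f \in H^2(\TT)$ and $g \in H^2(\TT)^\perp$, the pairing $\langle \Gamma f, g \rangle$ depends on $f$ and $g$ only through the product $f \bar g$, which lies in $zH^1(\TT)$ and satisfies $\Vert f \bar g \Vert_1 \le \Vert f \Vert_2 \Vert g \Vert_2$; concretely $\langle \Gamma f, g \rangle = \Lambda(f \bar g)$ for a fixed linear functional $\Lambda$ on $zH^1(\TT)$ built from $a$. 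Conversely, by the Riesz factorization $H^1 = H^2 \cdot H^2$, every $u$ in the unit ball of $zH^1(\TT)$ can be written $u = f \bar g$ with $\Vert f \Vert_2 = \Vert g \Vert_2 = \Vert u \Vert_1^{1/2}$. The two facts together give $\Vert \Lambda \Vert = \Vert \Gamma \Vert = M$. By Hahn--Banach, extend $\Lambda$ to a functional on $L^1(\TT)$ of the same norm $M$; by $L^1(\TT)^* = L^\infty(\TT)$ it is represented by some $\psi \in L^\infty(\TT)$ with $\Vert \psi \Vert_\infty = M$. Testing $\psi$ against the monomials $z^j$ with $j \ge 1$ (all of which lie in $zH^1(\TT)$) shows that the Fourier coefficients of $\psi$ entering $\Gamma$ match the data carried by $a$; hence, after the reflection of the first step, $\psi$ is an admissible symbol. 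So an admissible symbol exists --- the non-obvious half of the ``iff'' --- and $\inf\{ \Vert \phi \Vert_\infty : \dots \} \le \Vert \psi \Vert_\infty = M$. Together with the easy implication this gives $\Vert S_a \Vert_{\op} = \inf\{ \Vert \phi \Vert_\infty : \hat\phi(k) = a_k \text{ for all } k \ge 0 \}$, the infimum being attained.

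\textbf{Where the work is.} The single substantive ingredient is the norm identity $\Vert \Gamma \Vert = \Vert \Lambda \Vert$, which hinges on the Riesz factorization of $H^1(\TT)$ functions into a product of two $H^2(\TT)$ functions with the multiplicative norm identity $\Vert u \Vert_1 = \Vert f \Vert_2 \Vert g \Vert_2$ --- itself a consequence of the inner--outer factorization theorem on the disk. Once this is granted, the Hahn--Banach extension and the $L^1$--$L^\infty$ duality are routine. The only other thing to watch is the clerical translation between the normalization $\hat\phi(k) = a_k$ ($k \ge 0$) of the statement and the normalization $h_{k,l} = \hat\phi(1-k-l)$ of the definition of $H_\phi$, i.e.\ keeping track of one reflection $z \mapsto \bar z$ and a shift of indices. (An alternative route to the hard implication goes through the commutant lifting theorem of Sz.-Nagy and Foia\c{s}, but it requires more machinery than the duality argument sketched here.)
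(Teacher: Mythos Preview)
The paper does not prove this theorem at all: it is stated with a citation to \cite{peller} (Nehari's theorem) and used as a black box to bound $\Vert H_E \Vert_{\op}$ from above. There is therefore nothing in the paper to compare your argument against.

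That said, your proposal is the standard proof of Nehari's theorem and is correct in outline. The identification of $\langle \Gamma f, g\rangle$ with a linear functional $\Lambda$ on $zH^1(\TT)$, the equality $\Vert \Gamma\Vert = \Vert \Lambda\Vert$ via Riesz factorization $H^1 = H^2 \cdot H^2$, and the Hahn--Banach/$L^1$--$L^\infty$ duality step are exactly the classical ingredients. One small point worth tightening if you write this out in full: the indexing in the paper has $\NN$ starting at $1$, so $S_a = (a_{k+l})_{k,l\ge 1}$ and the relevant Fourier coefficients of the symbol are $\hat\phi(k)$ for $k\ge 2$; the statement's ``for every $k\ge 0$'' is harmless (the extra two coefficients can be prescribed freely without affecting the infimum), but when you do the reflection/shift bookkeeping you should check that testing $\Lambda$ against monomials recovers precisely the $a_k$ that actually appear in the matrix.
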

We recall that
\begin{equation*} [H_E] = (h_{k,l})_{k,l\ge 1} = \left(\hat{\mathbbm{1}}_E(1-k-l)\right)_{k,l\ge 1},\end{equation*}
so the sequence associated with $[H_E]$ is
\begin{equation*} \left(\hat{\mathbbm{1}}_E(-1-k)\right)_{k\in \NN} = \left(\widehat{\bar z \mathbbm{1}_E}(k) \right)_{k \in \NN}.\end{equation*}
If we define
\begin{equation*} \phi(z) =\bar z \left(\mathbbm{1}_E(z) - \frac 1 2\right),\end{equation*}
and choose $k \ge 0$, then
\begin{equation*} \hat \phi(k) = \langle \mathbbm{1}_E-\frac 1 2, z^{k+1} \rangle = \hat{\mathbbm 1}_E(k+1) = \hat{\mathbbm{1}}_E(-1-k),\end{equation*}
therefore
\begin{conclusion} $\Vert H_E \Vert_{\op} \le \Vert \phi \Vert_{\infty} = \frac 1 2$.\end{conclusion}
To obtain the complementary inequality, assume that $\phi \in L^\infty(\TT)$ has well defined one-sided limits at every point of $\TT$. For $\alpha \in \TT$, define the jump of $\phi$ at $\alpha$ as
\begin{equation*} \phi_\alpha = \frac 1 2 \lim_{t \to 0^+} \left(\phi\left(\alpha e^{it} \right) - \phi \left(\alpha e^{-it} \right)\right). \end{equation*}
Then
\begin{theorem}[\cite{power}] The essential spectrum of the Hankel operator $H_\phi$ is given by
\begin{equation*} \sigma_{\text{ess}} \left(H_\phi \right) = [0, i \phi_1] \cup [0, i \phi_{-1}] \cup \left(\cup_{\alpha \in \TT \setminus\{\pm 1\}} \left[-\sqrt{-\phi_\alpha \phi_{\bar \alpha}}, \sqrt{-\phi_{\alpha} \phi_{\bar \alpha}} \right] \right).\end{equation*} \end{theorem}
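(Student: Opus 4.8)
The plan is to compute $\sigma_{\text{ess}}(H_\phi)$ by passing to the Calkin algebra, localizing at the discontinuities of $\phi$, and identifying the essential spectrum with the union of the spectra of the resulting local model operators; throughout, $\sigma_{\text{ess}}(H_\phi)$ is understood via the identification of $\bigl(H^2(\TT)\bigr)^\perp$ with $H^2(\TT)$ sending $z^{-p}$ to $z^{p-1}$, so that $H_\phi$ is represented by the matrix $[H_\phi] = \bigl(\hat\phi(1-k-l)\bigr)_{k,l\ge1}$. The first step is a reduction modulo the compacts. By Hartman's theorem $H_\psi$ is compact whenever $\psi$ admits a continuous representative, and a piecewise continuous function all of whose jumps vanish is continuous; hence $\sigma_{\text{ess}}(H_\phi)$ depends only on the family of jumps $\{\phi_\alpha\}_{\alpha\in\TT}$, and (after an approximation argument reducing to finitely many jumps) we may assume $\phi$ differs from a continuous function by a finite sum of ``elementary jumps'' $j_\alpha$, where $j_\alpha(e^{i\theta}) = \tfrac12 - \tfrac{\theta-\theta_\alpha}{2\pi}$ on $(\theta_\alpha,\theta_\alpha+2\pi)$ carries a single unit jump at $\alpha=e^{i\theta_\alpha}$ and is otherwise smooth; one computes $\widehat{j_\alpha}(p) = \tfrac{\alpha^{-p}}{2\pi i p}$ for $p\neq0$.

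Consider next the models at $\pm1$. The matrix of $H_{j_1}$ equals, up to the scalar $-\tfrac{1}{2\pi i}$, the classical Hilbert matrix $\Gamma = \bigl(\tfrac{1}{k+l-1}\bigr)_{k,l\ge1}$, whose spectrum is $[0,\pi]$ with no isolated points, so that $\sigma_{\text{ess}}(\Gamma) = [0,\pi]$ and therefore $\sigma_{\text{ess}}(H_{j_1}) = -\tfrac{1}{2\pi i}[0,\pi] = \bigl[0,\tfrac{i}{2}\bigr] = \bigl[0,i(j_1)_1\bigr]$. The point $\alpha=-1$ is treated identically, the matrix of $H_{j_{-1}}$ being (up to sign) the conjugate of that of $H_{j_1}$ by the diagonal unitary $\diag\bigl((-1)^k\bigr)$. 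Scaling by the true jump sizes yields the segments $[0,i\phi_1]$ and $[0,i\phi_{-1}]$.

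The models at a conjugate pair $\{\alpha,\bar\alpha\}$ with $\alpha\neq\pm1$ are where the genuinely Hankel behaviour appears. The map $z\mapsto\bar z$ interchanges $H^2(\TT)$ with its orthogonal complement and carries a neighbourhood of $\alpha$ to one of $\bar\alpha$, so the local piece of $H_\phi$ attached to $\{\alpha,\bar\alpha\}$ is, after a unitary identification, of the off-diagonal form $\bigl(\begin{smallmatrix}0 & A\\ B & 0\end{smallmatrix}\bigr)$ on two copies of a half-line model space, where $A$ and $B$ are corner (Wiener--Hopf) operators whose local symbols encode the jumps $\phi_\alpha$ and $\phi_{\bar\alpha}$ respectively. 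The spectrum of such a block operator is $\{\pm\sqrt{\lambda} : \lambda\in\sigma(AB)\}$, and $\sigma(AB)$ is the segment of $\CC$ joining $0$ to $-\phi_\alpha\phi_{\bar\alpha}$; taking square roots gives exactly $\bigl[-\sqrt{-\phi_\alpha\phi_{\bar\alpha}},\sqrt{-\phi_\alpha\phi_{\bar\alpha}}\bigr]$. In particular a jump at $\alpha$ unaccompanied by a jump at $\bar\alpha$ produces an essentially quasinilpotent local operator.

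It remains to show that, modulo the compacts, $H_\phi$ decouples as the direct sum of the local models at $1$, at $-1$, and at the conjugate pairs $\{\alpha,\bar\alpha\}$, so that $\sigma_{\text{ess}}(H_\phi)$ is the union of the local essential spectra found above. This gluing step — which I expect to be the main obstacle — is a localization argument in the $C^*$-algebra generated by the Hankel operators and the multiplication operators with piecewise continuous symbols (an instance of the Gohberg--Krupnik local principle); the block-operator analysis of the previous paragraph is the other substantive point, the rest being the classical spectral theory of the Hilbert matrix together with routine approximation. The application to Lemma \ref{H_Enorm} is then immediate: $\mathbbm{1}_E$ is continuous at $\pm1$ and has jumps $\mp\tfrac12$ at $\pm i$, so Power's formula gives $\sigma_{\text{ess}}(H_E) = \bigl[-\tfrac12,\tfrac12\bigr]$, whence $\Vert H_E\Vert_{\op}\ge\tfrac12$.
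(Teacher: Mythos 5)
The paper offers no proof of this statement: it is quoted from Power's 1978 article and used as a black box (only its consequence, the lower bound $\Vert H_E\Vert_{\op}\ge\frac12$ in Lemma \ref{H_Enorm}, is needed). So there is no internal argument to compare yours against, and your text has to be judged as a reconstruction of Power's proof. As such, the skeleton is the right one, and the parts you actually compute are correct: $\widehat{j_\alpha}(p)=\alpha^{-p}/(2\pi i p)$ for $p\ne 0$; the matrix of $H_{j_1}$ is $-\frac1{2\pi i}\Gamma$ with $\Gamma$ the Hilbert matrix, so granting the classical fact $\sigma(\Gamma)=\sigma_{\text{ess}}(\Gamma)=[0,\pi]$ you correctly get $\sigma_{\text{ess}}(H_{j_1})=[0,\tfrac i2]=[0,i(j_1)_1]$; the degenerate case $\phi_{\bar\alpha}=0$ giving a quasinilpotent local model is also consistent with the statement; and the final application to $\mathbbm 1_E$ (jumps $\mp\frac12$ at $\pm i$, none at $\pm1$) reproduces $\sigma_{\text{ess}}(H_E)=[-\tfrac12,\tfrac12]$ exactly as the paper uses it.

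However, as a proof the proposal has genuine gaps, and they sit precisely where the content of Power's theorem lies. First, the conjugate-pair model is asserted, not established: you never identify the corner operators $A,B$ nor prove that $\sigma(AB)$ is the segment joining $0$ to $-\phi_\alpha\phi_{\bar\alpha}$; this computation (say via the local behaviour of $H_\phi^2$, which is a product of a Hankel piece localized at $\alpha$ with one localized at $\bar\alpha$) is of the same substance as the Hilbert-matrix step and cannot be waved through. Second, the gluing is deferred to ``an instance of Gohberg--Krupnik'', but the local principle does not apply naively here: multiplications by continuous $c$ are not central relative to Hankel operators modulo compacts. One has $H_{\phi c}\equiv H_\phi\,(P_+M_cP_+)$ but $H_{c\phi}\equiv (P_-M_cP_-)\,H_\phi$ modulo compacts, and under the identification of $(H^2(\TT))^\perp$ with $H^2(\TT)$ the left-hand compression acts as the Toeplitz operator with symbol $c(\bar z)$; so the localization must be performed over the quotient of $\TT$ by $z\mapsto\bar z$, which is the structural reason $\alpha$ and $\bar\alpha$ cannot be separated. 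Your pairing of conjugate points tacitly uses this, but without setting up that algebra the decoupling claim is unsupported. Third, ``routine approximation'' to finitely many jumps is not routine for essential \emph{spectra}: under norm convergence you only get upper semicontinuity, so you need the essential-norm estimate $\Vert H_\psi\Vert_{\text{ess}}\le\operatorname{dist}_{L^\infty}(\psi,C(\TT))$ for one inclusion and a separate lower-bound argument (e.g. directly from the local models) for the other. In short: a faithful outline of the known proof, but the two load-bearing steps and the approximation step are missing, so as it stands this does not prove the theorem — which is fine for the paper, since it cites Power, but should be acknowledged.
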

The inequality $\Vert H_E \Vert_{\op} \ge \frac 1 2$ now follows, since
\begin{conclusion} The essential spectrum of $H_E$ equals $\left[-\frac 1 2, \frac 1 2 \right]$. Note that $\sigma_{\text{ess}}$ is a (closed) subset of the spectrum of $H_E$, hence $\Vert H_E \Vert_{\op} \ge \frac 1 2$. \end{conclusion}
\section{The proof of Theorem \ref{maintheorem}}\label{mainproofsection}
We begin with a few preliminary notations and definitions (\cite{biedenharnlouck}, \cite{varshalovich}). Recall that the spectrum of $J_x, J_y, J_z$ equals the set $\left\{j, j-1, ..., -j\right\}$, where $j = \frac 1 2 (n-1)$ is the spin number associated with the representation. Let
\begin{equation*} \mathcal E_{z,j} = \{e_m \ |\ m = j, j-1, ..., -j\}\end{equation*}
denote an orthonormal eigenbasis of $J_z$, such that $J_z e_m = m e_m$. The matrices representing $\mathbbm{1}_{(0,\infty)}(J_x),\ \mathbbm{1}_{(0,\infty)}(J_z)$ in $\mathcal E_{z,j}$ may be written as
\begin{equation*} P_{x,j} =\left(P_{x,j,m',m}\right)_{|m'|,|m|\le j} = \left(\begin{array}{cc} P_{1, x,j} & P_{2,x,j} \\ P_{2,x,j}^* & P_{3,x,j} \end{array}\right),\ P_{z,j} =  \left(\begin{array}{cc} \tilde I_{ j} & 0 \\ 0 & 0 \end{array}\right). \end{equation*}
Here, $\tilde I_j$ is the identity matrix of size $\left \lfloor j + \frac 1 2 \right \rfloor$. The matrix of $C_n$ is given by
\begin{equation*} \left[P_{x,j}, P_{z,j} \right] = \left(\begin{array}{cc} 0 & -P_{2,x,j} \\ P_{2,x,j}^* & 0 \end{array}\right),\end{equation*}
hence $\Vert C_n \Vert_{\op} = \Vert P_{2,x,j} \Vert_{\op}$. In our notations,
\begin{equation*} P_{2,x,j} = \left(P_{x,j,m',m}\right)_{ j\ge m' > 0 \ge m \ge -j }.\end{equation*}

We turn our attention to the "central elements" of $P_{x,j}$, that is, to the sequences $\left(P_{x,j+k,m',m} \right)_{k \in \NN}$ with $j,m',m$ fixed such that $m',m \in \{j,j-1,...,-j\}$. According to Conclusion \ref{pxjkmn},
\begin{equation*} \lim_{k \to \infty} P_{x,j+k,m',m} = \hat{\mathbbm{1}}_E(m-m'),\end{equation*}
where we recall that $\mathbbm{1}_E$ is the indicator function of the right half of the unit circle in $\CC$, as well as the symbol of the Hankel operator $H_E$ of Lemma \ref{H_Enorm}.

Thus, evidently, for fixed $N \in \NN$, the bottom left $N \times N$ corner of $P_{2,x,j}$ converges in $M_N(\CC)$ as $n \to \infty$. More precisely,
\begin{conclusion}\label{submatrices}Let $N \in \NN$, and assume that $j > N$. Let $C_{n,N} = (c_{n,k,l})_{k,l=1,...,N}$, where
\begin{equation*} c_{n,k,l} = \left\{\begin{array}{ll} P_{x,j,k, 1-l} & \text{if } n \in 2\NN +1 \\ P_{x,j,-\frac 1 2 + k, \frac 1 2 - l} & \text{if } n \in 2\NN\end{array}\right.\end{equation*}
Then $\lim_{n \to \infty} C_{n,N} = [H_E]_N$. \end{conclusion}
The proof of Theorem \ref{maintheorem} easily follows now, since $\lim_{N \to \infty} \Vert[H_E]_N \Vert_{\op} = \frac 1 2$ by Lemma \ref{H_Enorm}, and clearly
\begin{equation}\label{submatricesargument}  \frac 1 2 \ge \liminf_n \Vert C_n \Vert_{\op} \ge \liminf_n \Vert C_{n,N} \Vert_{\op} = \Vert [H_{E}]_N \Vert_{\op}.\end{equation}
Letting $N \to \infty$, we obtain the desired result.
\subsection{An extension of Theorem \ref{maintheorem}}\label{extensionsubsecsion}
In this part, we outline the proof of the following extension of Theorem \ref{maintheorem}.
\begin{theorem}\label{extensiontheorem} Fix $a \in [0,1)$ and $b \in (0,1]$, and let
\begin{equation*} C_{n,a,b} = \left[\mathbbm{1}_{\left(a\left(j+\frac 1 2 \right),\infty\right)}(J_x), \mathbbm{1}_{\left(0, b\left(j+\frac 1 2 \right)\right]}(J_z) \right].\end{equation*}
Then $\lim_{n \to \infty} \Vert C_{n,a,b} \Vert_{\op} = \frac 1 2$. \end{theorem}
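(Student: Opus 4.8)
The plan is to run the proof of Theorem~\ref{maintheorem} from Section~\ref{mainproofsection} essentially verbatim, with the arc $E$ replaced by an arc of a different aperture. Write $c=a(j+\tfrac12)$ and let $Q_b=\mathbbm{1}_{(0,b(j+1/2)]}(J_z)$, which is the orthogonal projection onto $\mathcal H_b:=\spn\{e_m\mid 0<m\le b(j+\tfrac12)\}$. Since $Q_b$ is diagonal in $\mathcal E_{z,j}$, the commutator $C_{n,a,b}$ is block‑off‑diagonal with respect to $\mathcal H_b\oplus\mathcal H_b^{\perp}$, exactly as the matrix of $C_n$ was, so that
\begin{equation*}\Vert C_{n,a,b}\Vert_{\op}=\Vert Q_b\,\mathbbm{1}_{(c,\infty)}(J_x)\,(\Id-Q_b)\Vert_{\op},\end{equation*}
the operator norm of the sub‑block $\bigl(P^{(a)}_{x,j,m',m}\bigr)$ of $\mathbbm{1}_{(c,\infty)}(J_x)$ with $0<m'\le b(j+\tfrac12)$ and ($m\le 0$ or $m>b(j+\tfrac12)$), where $P^{(a)}_{x,j,m',m}:=\langle\mathbbm{1}_{(c,\infty)}(J_x)e_m,e_{m'}\rangle$. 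The upper bound $\limsup_n\Vert C_{n,a,b}\Vert_{\op}\le\tfrac12$ is the same commutator‑of‑projections estimate (\cite{qihuili}) used for $C_n$ and involves neither $a$ nor $b$, so everything reduces to the matching lower bound.

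For the lower bound I would first establish the analogue of Conclusion~\ref{pxjkmn}: for every fixed pair $m',m$,
\begin{equation*}\lim_{n\to\infty}P^{(a)}_{x,j,m',m}=\hat{\mathbbm{1}}_{E_a}(m-m'),\qquad E_a:=\{z\in\TT\mid\Re z>a\},\end{equation*}
where $E_a$ is the arc of half‑aperture $\arccos a$ centred at $1$ (so $E_0=E$, and $\hat{\mathbbm{1}}_{E_a}(p)=\sin(p\arccos a)/(\pi p)$ for $p\ne 0$). The derivation is the one of Sections~\ref{integralformulasection}--\ref{asymptoticsection} with a single change: since $\mathbbm{1}_{(c,\infty)}(J_x)=\mathbbm{1}_{(0,\infty)}(J_x-c)$, translating $J_x$ by $c$ inserts the oscillatory factor $e^{i\theta c}$ into the principal‑value integral (with kernel proportional to $\cot(\theta/2)$, resp.\ $\cot(\theta/4)$ when $n$ is even) that expresses $P^{(a)}_{x,j,m',m}$ through the Wigner $d$‑functions $d^j_{m',m}(\theta)$. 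The dominant contribution as $n\to\infty$ comes, as in the case $a=0$, from $\theta$ of order $(j+\tfrac12)^{-1}$, where $d^j_{m',m}(\theta)$ is controlled by the Bessel‑type ($SE(2)$‑contraction) asymptotics underlying Conclusion~\ref{wignerdasymptotics}; there $\theta c=a\cdot(j+\tfrac12)\theta$, so after the rescaling $u=(j+\tfrac12)\theta$ the new factor becomes $e^{iau}$, whose effect is precisely to replace the half‑line window $\{\cos\theta>0\}$ by $\{\cos\theta>a\}$, and the limiting matrix element is the $(m-m')$‑th Fourier coefficient of $\mathbbm{1}_{\{\cos\theta>a\}}=\mathbbm{1}_{E_a}$. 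As when $a=0$, one checks that the contribution of $\theta$ bounded away from $0$, and that of a possible rank‑one term $\mathbbm{1}_{\{c\}}(J_x)$ (present only when $c$ lies in the spectrum of $J_x$), are $O\bigl((j+\tfrac12)^{-1}\bigr)$ and vanish in the limit. I expect this estimate---verifying that the extra oscillatory factor is absorbed harmlessly---to be the only genuinely new point, and it is routine given the analysis already carried out for $C_n$.

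Granting this, the proof ends exactly as in Section~\ref{mainproofsection}. For fixed $N$ and $n$ large enough that $N\le b(j+\tfrac12)$, the indices $m'\in\{1,\dots,N\}$, $m\in\{0,-1,\dots,1-N\}$ (with the half‑integer shift of Conclusion~\ref{submatrices} when $n$ is even) lie inside the sub‑block above, and its $N\times N$ corner $\bigl(P^{(a)}_{x,j,k,1-l}\bigr)_{1\le k,l\le N}$ converges in $M_N(\CC)$ to $[H_{E_a}]_N$ by the previous paragraph. Since a compression never increases the operator norm, $\liminf_n\Vert C_{n,a,b}\Vert_{\op}\ge\Vert[H_{E_a}]_N\Vert_{\op}$ for every $N$. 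Finally $\Vert H_{E_a}\Vert_{\op}=\tfrac12$ by the proof of Lemma~\ref{H_Enorm} applied to $E_a$ in place of $E$: the inequality $\le\tfrac12$ because $\mathbbm{1}_{E_a}-\tfrac12$ has $L^\infty$‑norm $\tfrac12$, and the inequality $\ge\tfrac12$ because $E_a$ is a proper nonempty arc (as $0\le a<1$), so Power's theorem (\cite{power}) exhibits the jump points $e^{\pm i\arccos a}$, with jumps $\mp\tfrac12$, whence $\sigma_{\text{ess}}(H_{E_a})=[-\tfrac12,\tfrac12]$. Letting $N\to\infty$ and using Lemma~\ref{truncatednorm}, $\liminf_n\Vert C_{n,a,b}\Vert_{\op}\ge\Vert H_{E_a}\Vert_{\op}=\tfrac12$, which together with the upper bound yields $\lim_{n\to\infty}\Vert C_{n,a,b}\Vert_{\op}=\tfrac12$.
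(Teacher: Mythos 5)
Your proposal is correct and follows essentially the same route as the paper: the limit $\lim_{n\to\infty} P^{(a)}_{x,j,m',m}=\hat{\mathbbm{1}}_{E_a}(m-m')$ that you sketch is precisely Conclusion \ref{paxjkmn}, and the remaining steps (the $N\times N$ corners converging to $[H_{E_a}]_N$, $\Vert H_{E_a}\Vert_{\op}=\frac 1 2$ via Nehari and Power, the universal upper bound $\frac 1 2$, and letting $N\to\infty$) coincide with the argument of Section \ref{extensionsubsecsion}. The only cosmetic difference is that you decompose with respect to $\Ran Q_b\oplus\left(\Ran Q_b\right)^\perp$ to make the commutator exactly block-off-diagonal, whereas the paper keeps the positive/non-positive $J_z$ splitting and simply extracts the sub-block $\tilde I_{b_j}P_{2,x,a,j}$; both yield the same lower bound.
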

This can be extended further by replacing $\mathbbm{1}_{\left(a\left(j + \frac 1 2 \right), \infty \right)}(J_x)$ with spectral projections corresponding to (not necessarily open) intervals whose end-points are $a_1 \left(j+\frac 1 2 \right), a_2 \left(j + \frac 1 2 \right)$, where $0 < a_1 < a_2 \le \infty$. However, once $0 < a_1$ is chosen, the projection arising from $J_z$ must correspond to an interval of the form above (or $(0,\infty)$), due to the limitations of the asymptotic estimate underlying the results of Section \ref{asymptoticsection}.

The proof of Theorem \ref{extensiontheorem} is rather identical to that of Theorem \ref{maintheorem}, except that we apply the more general Conclusion \ref{paxjkmn}, instead of Conclusion \ref{pxjkmn}. However, we note that the conjectured modulo $4$ dependence on the dimension $n$ seems to be (more or less) a unique feature of the case $a = b=0$ (see Figures \ref{notmod4}, \ref{againnotmod4}, for example).

Let $P_{x,a,j},\ Q_{z,b,j}$ denote the matrices representing the spectral projections in $\mathcal E_{z,j}$. Then
\begin{equation*} P_{x,a,j} = \left(P_{x,a,j,m',m} \right)_{|m'|,|m| \le j} = \left(\begin{array}{cc} P_{1,x, a,j} & P_{2,x, a,j} \\ P_{2,x,a,j}^* & P_{3,x,a,j} \end{array}\right),\ Q_{z,b,j} = \left(\begin{array}{cc} \tilde I_{b_j} & 0 \\ 0 & 0 \end{array}\right),\end{equation*}
where
\begin{equation*} \tilde I_{b_j} = \left(\begin{array}{cc} 0 & 0 \\ 0 & I_{b_j} \end{array}\right), \end{equation*}
and $b_j = \left \lfloor b \left(j + \frac 1 2 \right) \right \rfloor$, so that $\lim_{j \to \infty} b_j = \infty$. Hence, $C_{n,a,b}$ is represented by the matrix
\begin{equation*} [C_{n,a,b}] = \left(\begin{array}{cc} \left[P_{1,x,a,j}, \tilde I_{b_j} \right] & -\tilde I_{b_j} P_{2,x,a,j} \\ P_{2,x,a,j}^* \tilde I_{b_j} & 0 \end{array}\right).\end{equation*}
We may proceed to define a sequence of sub-matrices of $[C_{n,a,b}]$ as before, since $b_j \xrightarrow{j\to \infty} \infty$. According to Conclusion \ref{paxjkmn},
\begin{equation*} \lim_{k \to \infty} P_{a,x,j+k,m',m} = \hat{\mathbbm{1}}_{E_a}(m-m'),\end{equation*}
where $E_a = \left\{z \in \TT \ |\ \Re z > a \right\}$. It follows that the bottom left $N \times N$ corner of $\tilde I_{b_j} P_{2,x,a,j}$ converges in $M_N(\CC)$ to the truncated matrix of the Hankel operator $H_{\mathbbm{1}_{E_a}}$, whose norm equals $\frac 1 2$ by the same arguments that were applied to $H_E$.
\section{Matrix elements of spectral projections}\label{integralformulasection}
In this section, we establish a concise integral formula for the coefficients of $P_{x,j} = \left(P_{x,j,m',m}\right)_{|m'|,|m|\le j}$, which is the matrix representing the spectral projection $\mathbbm{1}_{(0,\infty)}(J_x)$ in the basis $\mathcal E_{z,j}$. The latter, we recall, is an eigenbasis of $J_z$. The eventual formula that we obtain for $P_{x,j,m',m}$ treats the cases $m-m' \in 2\ZZ$ and $m-m' \in 2\ZZ + 1$ separately.
\subsection{The Wigner small d-matrix}
The Wigner small d-matrix $d^j(\theta) = \left(d^j_{m',m}(\theta) \right)_{|m'|,|m| \le j}$ is the matrix of $e^{-i \theta J_y}$ in the basis $\mathcal E_{z,j}$. It is fundamental in the representation theory of $SU(2)$. The Wigner d-functions $d^j_{m',m}(\theta)$ are real valued $4\pi$-periodic trigonometric polynomials, commonly specified by the formula
\begin{gather*} d^j_{m',m}(\theta) = \sqrt{\frac{(j+m)!(j-m)!}{(j+m')!(j-m')!}} \cdot \\ \sum_{s} (-1)^{m'-m+s}{j+m' \choose j+m-s}{j-m' \choose s} \left(\cos \frac \theta 2 \right)^{2j+m-m'-2s} \left(\sin \frac \theta 2 \right)^{m'-m + 2s}.\end{gather*}

The parity of $d^j_{m',m}$ with respect to $m',m$ and $\theta$ is specified by (\cite{varshalovich}, 4.4)
\begin{equation}\label{djparity} d^j_{m',m}(-\theta) = (-1)^{m'-m}d^j_{m',m}(\theta) = d^j_{m,m'}(\theta) = d^j_{-m',-m}(\theta).\end{equation}
Another useful relation is
\begin{equation}\label{djpitranslation} d^j_{m',m}(\theta + \pi) = (-1)^{j-m} d^j_{m',-m}(\theta). \end{equation}
Finally, we will rely on the Fourier expansion of $d^j_{m',m}$, specified by (\cite{biedenharnlouck}, 3.78, \cite{fengwangyangjin})
\begin{equation}\label{djmnfourier} d^j_{m',m}(\theta) = e^{i\frac \pi 2 (m-m')} \sum_{\mu=-j}^j d^j_{m,\mu} \left(\frac \pi 2 \right) d^j_{m', \mu} \left( \frac \pi 2 \right) e^{-i \mu \theta}. \end{equation}
%
\subsection{Integral formula for $P_{x,j,m',m}$}
We may rotate one spin operator to another, and in particular, $J_x$ and $J_z$ are related by the formula $e^{i \frac \pi 2 J_y} J_x e^{-i \frac \pi 2 J_y} = J_z$.
This means that the vectors
\begin{equation*} f_m = e^{-i \frac \pi 2 J_y} e_m,\ m = j, j-1, ..., -j \end{equation*}
form an orthonormal eigenbasis of $J_x$, with $J_x f_m = m f_m$. Note that
\begin{equation*} P_{x,j,m',m} = \langle \mathbbm{1}_{(0,\infty)}(J_x) e_m, e_{m'} \rangle = \sum_{\mu > 0} \langle e_m, f_\mu \rangle \langle f_\mu, e_{m'} \rangle.\end{equation*}
Consequently,
\begin{equation}\label{pxjmninnerproductsformula}P_{x,j,m',m} = \sum_{\mu > 0} d^j_{m, \mu} \left( \frac \pi 2 \right) d^j_{m', \mu} \left( \frac \pi 2 \right).\end{equation}
Comparing the last expression with (\ref{djmnfourier}), we deduce that $P_{x,j,m',m}$ equals the sum of negative Fourier coefficients of $d^j_{m',m}$, up to multiplication by $e^{i \frac \pi 2 (m'-m)}$. Equivalently (see Section \ref{hankelssubsection}),
\begin{equation*} P_{x,j,m',m} = e^{i \frac \pi 2 (m'-m)}\left(\Id - \Pi_{\TT} \right)(d^j_{m',m})(0).\end{equation*}

$\Pi_{\TT}$ acts on $L^2(\TT)$ by $z^p \mapsto \mathbbm{1}_{[0,\infty)}(p) z^p$. A closely related operator is the periodic Hilbert transform $\HH_{\TT}$, which acts by $z^p \mapsto -i\sgn(p)z^p$. Thus, another equivalent formula is
\begin{equation}\label{pxjmnhilberttrans} P_{x,j,m',m} = \frac 1 2 e^{i \frac \pi 2 (m'-m)}\left( d^j_{m',m}(0) -\left\langle d^j_{m',m}, 1\right\rangle_{L^2(\TT)} - i \HH_{\TT}\left(d^j_{m',m}\right)(0) \right).\end{equation}
Furthermore, by (\ref{djmnfourier}), the zeroth Fourier coefficient of $d^j_{m',m}$ is specified by 
\begin{equation}\label{zerothfourier} \langle d^j_{m',m}, 1 \rangle_{L^2(\TT)} = \left\{\begin{array}{ll} e^{i \frac \pi 2 (m-m')} d^j_{m, 0}\left(\frac \pi 2 \right) d^j_{m',0} \left( \frac \pi 2 \right) & \text{if }j\in \NN \\ 0 & \text{if }j\in \frac 1 2 \NN \setminus \NN \end{array}\right..\end{equation}
$\HH_\TT$ maps even functions\footnote{i.e., functions $f \in L^2(\TT)$ with $\hat f(p) = \hat f(-p)$ for every $p \in \ZZ$.} to odd functions and vice versa. In light of the parity properties of (\ref{djparity}), we finally obtain the following.
\begin{conclusion}\label{djformulahilbert} The matrix elements of $\mathbbm{1}_{(0,\infty)}(J_x)$ in the eigenbasis $\mathcal E_{z,j}$ of $J_z$ are given by
\begin{equation*} P_{x,j,m',m} = \left\{\begin{array}{ll} \frac 1 2 e^{i \frac \pi 2 (m'-m)}\left(\delta_{m',m} - \langle d^j_{m',m}, 1 \rangle_{L^2(\TT)} \right) & \text{if } m'-m \in 2\ZZ \\ & \\ -\frac i 2 e^{i\frac \pi 2 (m'-m)} \HH_\TT(d^j_{m',m})(0) & \text{if } m'-m \in 2\ZZ + 1\end{array}\right. . \end{equation*}
Here $m',m \in \{j, j-1, ..., -j\}$ and $\delta_{m',m}$ is Kronecker's delta.\end{conclusion}

The periodic Hilbert transform admits the representation 
\begin{equation*} \HH_\TT f(\theta_0) = \frac 1 {4\pi} \lim_{\varepsilon \to 0^+} \int_{\varepsilon \le |\theta| \le 2\pi} f(\theta) \cot \left(\frac{\theta_0 - \theta} 4 \right) d\theta,\end{equation*}
where $f\in L^2(\TT)$ is a $4\pi$-periodic function. For $f = d^j_{m',m}$ with $m-m' \in 2\ZZ+1$, we obtain the formula
\begin{equation}\label{4pihilbtransform} \HH_\TT(d^j_{m',m})(0) =-\frac 1 {2\pi} \int_0^{2\pi} d^j_{m',m}(\theta) \cot \left( \frac \theta 4 \right) d\theta,\end{equation}
which will be studied in the next section.
\subsection{Matrix elements of $\mathbbm{1}_{\left(a\left(j+\frac 1 2 \right),\ \infty\right)}(J_x)$}
In this subsection, we extend (\ref{pxjmnhilberttrans}) to projections of the form $\mathbbm{1}_{\left(a\left(j+\frac 1 2 \right),\infty\right)}(J_x)$, where $0 \le a < 1$. More generally, the method may be used to obtain similar formulas for the elements of spectral projections corresponding to (not necessarily open) intervals with end-points at $a_1 \left(j + \frac 1 2\right),\ a_2 \left(j + \frac 1 2 \right)$, where $a_1 < a_2$ (hence also for projections corresponding to combinations of such intervals). The matrix elements in the present case are given by
\begin{gather*} P_{x,a,j,m',m} = \left \langle \mathbbm{1}_{\left(a \left(j + \frac 1 2 \right), \infty \right)}(J_x) e_m, e_{m'} \right \rangle = \sum_{\mu > a \left(j + \frac 1 2 \right)} \langle e_m, f_\mu \rangle \langle f_\mu, e_{m'} \rangle\\ = \sum_{\mu > a \left(j + \frac 1 2 \right)} d^j_{m, \mu} \left( \frac \pi 2 \right) d^j_{m',\mu} \left( \frac \pi 2 \right). \end{gather*}
This formula is the analogue of (\ref{pxjmninnerproductsformula}), and it also admits an interpretation through the Fourier expansions of the Wigner d-functions. Indeed, denote the Fourier coefficients of $d^j_{m',m}$ by $\hat d^j_{m',m}(p)$. Then
\begin{equation*} \hat d^j_{m',m}(p) = \left\{\begin{array}{ll} e^{-i \frac \pi 2(m'-m)} d^j_{m, -\frac p 2}\left(\frac \pi 2 \right) d^j_{m', -\frac p 2}\left(\frac \pi 2 \right) & \text{if } p = 2j, 2j-2, ..., -2j\\ 0 & \text{otherwise} \end{array}\right., \end{equation*}
therefore
\begin{equation*} P_{x,a,j,m',m} = e^{i \frac \pi 2 (m'-m)}\sum_{p< \lceil -a(2j+1)\rceil}\hat d^j_{m',m}(p).\end{equation*} 
Thus, $P_{x,a,j,m',m}$ equals the sum of Fourier coefficients of $d^j_{m',m}$ corresponding to indices lesser than $\lceil -a(2j+1) \rceil$. Equivalently, we may shift the Fourier expansion of $d^j_{m',m}$ using the relation $\widehat{z^{-p} f}(l) = \hat f(l+p)$ to obtain
\begin{equation*} P_{x,a,j,m',m} = e^{i \frac \pi 2 (m'-m)} \left(\Id - \Pi_{\TT}\right)\left(g^j_{m',m} \right)(0),\end{equation*}
where $g^j_{m',m}(\theta) = e^{-i a_j \frac \theta 2} d^j_{m',m}(\theta)$ and $a_j = \lceil -a(2j+1) \rceil$. As in (\ref{pxjmnhilberttrans}), we translate the former expression to
\begin{equation} P_{x,a,j,m',m} = \frac 1 2 e^{i \frac \pi 2(m'-m)} \left( \delta_{m',m} - \langle g^j_{m',m}, 1 \rangle_{L^2(\TT)} - i\HH_\TT(g^j_{m',m})(0) \right).\end{equation}
In the final part of the next section, we will study $\lim_{k \to \infty} P_{a,x,j+k,m',m}$.
\section{Limits of central matrix elements}\label{asymptoticsection}
In this section, we use an asymptotic approximation of Wigner d-functions by Bessel functions of the first kind in order to compute $\lim_{k \to \infty} \langle d^{j_k}_{m',m}, 1 \rangle_{L^2(\TT)}$ and $\lim_{k \to \infty} \HH_\TT \left(d^{j_k}_{m',m}\right)(0)$, where $j,m',m$ are fixed, $k \in \NN$ and $j_k = j+k$. Since $P_{x,j,m',m}$ are the elements of a symmetric matrix, we further assume that $m-m' \ge 0$. The values of the limits $\lim_{k \to\infty} P_{x,j_k,m',m}$ will then follow from Conclusion \ref{djformulahilbert}.

\subsection{Asymptotic approximation of Wigner d-functions}
The relevant asymptotic relation between Bessel functions and Wigner d-functions follows from a formula for the latter in terms of Jacobi polynomials.

Let $p \in \NN$. The Bessel function of the first kind $J_p$ may be specified by (\cite{abramowitzstegun})
\begin{equation} J_p(x)= \sum_{k=0}^\infty \frac{(-1)^k}{k!(k+p)!}\left( \frac x 2 \right)^{2k + p} = \frac 1 \pi \int_0^\pi \cos(pt - x\sin t)dt.\end{equation}
We note, for later use, that for $x \in \RR$, it holds that (\cite{abramowitzstegun}, 9.1.7, 9.2.1)
\begin{equation}\label{besselinfty} J_p(x) = \mathcal O(x^p),\ J_p(x) = \mathcal O\left(x^{-\frac 1 2} \right)\end{equation}
as $x \to 0$ and as $x \to +\infty$, respectively. The Bessel functions associated with negative integers are specified by (\cite{abramowitzstegun}, 9.1.5)
\begin{equation}\label{besselparity} J_{-p}(x) = (-1)^p J_p(x) = J_p(-x).\end{equation}

The Jacobi polynomials $P_{k}^{(\alpha,\beta)}$ are a class of classical orthogonal polynomials specified by (\cite{szego}, 4.3.1)
\begin{equation*} P_{k}^{(\alpha,\beta)}(x) = \frac{(-1)^{k}}{2^{k}k!}(1-x)^{-\alpha}(1+x)^{-\beta} \frac{d^{k}}{dx^{k}} \left[(1-x)^{\alpha}(1+x)^{\beta}(1-x^2)^{k}\right].\end{equation*}
They are orthogonal on the interval $[-1,1]$ with respect to the weight function $W^{(\alpha,\beta)}(x) = (1-x)^\alpha(1+x)^\beta$. 

The results of the present section are based on the following classical estimate. Assume that $\alpha > -1, \beta \in \RR$. Then (\cite{szego}, 8.21.12)
\begin{equation}\label{szegotheorem}\left(\sin \frac \theta 2\right)^{\alpha}\left(\cos \frac \theta 2 \right)^{\beta} P_k^{(\alpha,\beta)}(\cos \theta) = \frac{(k+\alpha)!}{r^\alpha k!} \sqrt{\frac \theta {\sin \theta}} J_k(r \theta) + E_k^{(\alpha, \beta)}(\theta),\end{equation}
where $r = k+\frac{\alpha+\beta+1}2$ and $E_k^{(\alpha, \beta)}(\theta) = \sqrt{\theta} \mathcal O(k^{-\frac 3 2})$ in intervals of the form $[0, \pi - \delta]$. Wigner d-functions are related to Jacobi polynomials by (\cite{biedenharnlouck}, 3.72)
\begin{equation*} d^j_{m',m}(\theta) = \sqrt{\frac{(j+m)!(j-m)!}{(j-m')!(j+m')!}} \left(\sin \frac \theta 2\right)^{m-m'}\left(\cos \frac \theta 2 \right)^{m+m'} P_{j-m}^{(m-m', m+m')}(\cos \theta).\end{equation*}
Thus, choosing $k = j-m,\ \alpha = m-m',\ \beta = m+m'$, we obtain a powerful asymptotic approximation\footnote{We refer the reader to  \cite{rowedeguisesanders} for a survey of the asymptotic properties of Wigner d-functions.} of $d^j_{m',m}$.
\begin{conclusion}\label{wignerdasymptotics} Fix $m,m' \in \NN$ or $m,m' \in \frac 1 2 \NN \setminus \NN$ such that $m-m' \ge -1$. Then
\begin{equation*} d^j_{m',m}(\theta) = C_{j,m',m}\sqrt{\frac \theta {\sin \theta}} J_{m-m'} \left(\frac{2j+1}{2} \theta \right) + E_{j-m}^{(m-m', m+m')}(\theta), \end{equation*}
where $E_{j-m}^{(m-m', m+m')} = \sqrt\theta \mathcal O\left(j^{- \frac 3 2}\right)$ in intervals of the form $[0, \pi - \delta]$, and
\begin{equation*} C_{j,m',m} = \sqrt{\frac{(j-m')!(j+m)!}{(j-m)!(j+m')!}} \frac 1 {\left(j + \frac 1 2 \right)^{m-m'}} \end{equation*} satisfies $\lim_{j\to \infty} C_{j, m',m} = 1$.\end{conclusion}
The asymptotic approximation may be extended to $d^j_{m',m}$ with $m-m'<0$ using the parity relations (\ref{djparity}), and to further intervals using (\ref{djpitranslation}).
\subsection{Integrals involving Wigner d-functions}
Conclusion \ref{wignerdasymptotics} together with the fact that $\lim_{x \to \infty} J_{m-m'}(x) = 0$ by (\ref{besselinfty}) imply that $\lim_{j \to \infty} d^{j}_{m',m}(\theta) = 0$ for $m-m' \ge 0$ and $\theta \in (0,\pi)$ fixed. This remains true when $m-m' < 0$, as may be shown using the parity properties of $d^j_{m',m}$. In particular, if $j \in \NN$ and $m'$ is fixed, then $\lim_{k \to \infty} d^{j+k}_{m',0} \left( \frac \pi 2 \right) = 0$. Thus, in light of (\ref{zerothfourier}), we find that
\begin{equation} \lim_{k \to \infty} \langle d^{j+k}_{m',m}, 1 \rangle_{L^2(\TT)} = 0.\end{equation}
The latter also follows from the next lemma, which will be used in the analysis of $\HH_\TT(d^j_{m',m})(0)$. As before, denote $j_k = j+k$.
\begin{lemma} Let $f\in L^\infty(\TT)$. Then $\lim_{k \to \infty} \langle d^{j_k}_{m',m}, f \rangle_{L^2(\TT)} = 0$.\end{lemma}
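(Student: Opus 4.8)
\section*{Proof proposal}

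The plan is to deduce the lemma from the pointwise decay of the Wigner d-functions established in the preceding paragraph, combined with the crude but uniform bound $|d^j_{m',m}(\theta)| \le 1$, via the dominated convergence theorem.

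First I would record that uniform bound: the matrix $d^j(\theta)$ represents the \emph{unitary} operator $e^{-i\theta J_y}$ in the orthonormal basis $\mathcal E_{z,j}$, so each entry satisfies $|d^j_{m',m}(\theta)| = |\langle e^{-i\theta J_y} e_m, e_{m'}\rangle| \le 1$ for every $\theta \in \RR$. Next I would promote the convergence $\lim_{k\to\infty} d^{j_k}_{m',m}(\theta) = 0$, which the text already proves for every fixed $\theta \in (0,\pi)$ and which the parity relation (\ref{djparity}) immediately extends to $\theta \in (-\pi,0)$, to convergence at every $\theta$ on the $4\pi$-circle outside the finite set $\{0,\pi,2\pi,3\pi\}$: applying the shift relation (\ref{djpitranslation}) once and twice rewrites $d^{j_k}_{m',m}(\theta+\pi)$ and $d^{j_k}_{m',m}(\theta+2\pi)$ as $\pm\, d^{j_k}_{m',\pm m}(\theta)$, and the signs are irrelevant to the limit. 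Hence $d^{j_k}_{m',m} \to 0$ almost everywhere on $\TT$.

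Finally, given $f \in L^\infty(\TT)$, I would observe that $|d^{j_k}_{m',m}(\theta)\,\overline{f(\theta)}| \le \Vert f\Vert_\infty$ for all $k$ and almost every $\theta$, and that the constant $\Vert f\Vert_\infty$ is integrable over $\TT$ since $\TT$ has finite measure. As $d^j_{m',m}$ is real-valued, $\langle d^{j_k}_{m',m}, f\rangle_{L^2(\TT)}$ is, up to the normalizing constant, $\int_\TT d^{j_k}_{m',m}\,\overline f$, so dominated convergence gives $\langle d^{j_k}_{m',m}, f\rangle_{L^2(\TT)} \to 0$. There is essentially no obstacle here: the analytic heart, namely the Bessel asymptotics of Conclusion \ref{wignerdasymptotics} and the pointwise decay it yields, is already in hand; the only mild subtlety is the measure-zero set $\{0,\pi,2\pi,3\pi\}$ at which pointwise convergence is not asserted, which is dispatched by the uniform bound and in any case does not affect the integral.
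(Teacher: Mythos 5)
Your proposal is correct and follows essentially the same route as the paper: the uniform bound $|d^j_{m',m}|\le 1$ from unitarity, the pointwise decay on $(0,\pi)$ extended by the symmetries (\ref{djparity}), (\ref{djpitranslation}) to the rest of the $4\pi$-period, and the dominated convergence theorem. The only cosmetic difference is that you assemble an almost-everywhere limit on the whole circle before one application of dominated convergence, whereas the paper applies it on $[0,\pi]$ and handles the remaining intervals by the same symmetries.
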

\begin{proof}
Recall that $d^j_{m',m}(\theta)$ is an element of a unitary matrix, so $|d^j_{m',m}(\theta)|\le 1$ for every $j,m',m$ and $\theta$. Moreover, $\lim_{k \to \infty} d^{j_{k}}_{m',m}(\theta) = 0$ for $\theta \in (0,\pi)$. By the dominated convergence theorem, it follows that
\begin{equation*} \lim_{k \to \infty} \int_0^\pi f(\theta) d^{j_k}_{m',m}(\theta) d\theta = 0\end{equation*}
for every $f \in L^\infty(\TT)$. The symmetries (\ref{djpitranslation}), (\ref{djparity}) of $d^j_{m',m}(\theta)$ imply, similarly, that the integrals over the intervals $[\pi, 2\pi]$ and $[-2\pi, 0]$ converge to $0$.\end{proof}
The lemma is not immediately applicable to $\HH_\TT(d^j_{m',m})(0)$, since by (\ref{4pihilbtransform}),
\begin{equation*}\HH_\TT(d^j_{m',m})(0)=  -\frac 1 {2\pi} \int_{0}^{2\pi} d^j_{m',m}(\theta) \cot \left(\frac \theta 4 \right) d\theta,\end{equation*}
and $\cot \left(\frac \theta 4 \right)$ is unbounded. However, it allows us to truncate this integral to an interval of the form $[0, \delta]$, where $\delta > 0$ is arbitrarily small.
\begin{conclusion} Fix $0 < \delta < 1$. Then $\lim_{k \to \infty} \left[\HH_\TT(d^{j_k}_{m',m})(0)-I_{j_k, \delta} \right] = 0$, where
\begin{equation*} I_{j, \delta} = -\frac 1 {2\pi} \int_0^\delta d^j_{m',m}(\theta) \cot \left( \frac \theta 4 \right) d\theta.\end{equation*}
\end{conclusion}
At this point, we wish to use the asymptotic formula of Conclusion \ref{wignerdasymptotics}. The error satisfies $E_{j-m}^{(m-m', m+m')}(\theta) = \sqrt \theta \mathcal O\left(j^{-\frac 3 2}\right)$, and the function $\sqrt \theta \cot \frac \theta 4$ is integrable, hence
\begin{equation*} \lim_{k \to \infty} \int_0^{\delta} E_{j_k-m}^{(m-m', m+m')}(\theta) \cot \left( \frac \theta 4 \right) d\theta = 0.\end{equation*}
Therefore, we have obtained that
\begin{multline}\label{finallimit} \lim_{k \to \infty} \HH_\TT(d^{j_k}_{m',m})(0) =\\ - \frac 1 {2\pi}\lim_{k \to \infty} \int_0^\delta \sqrt{\frac \theta {\sin \theta}} J_{m-m'} \left(\frac{2j_k+1}{2} \theta \right) \cot\left( \frac \theta 4 \right) d\theta.\end{multline}
Let $\HH_{\RR} : L^2(\RR) \to L^2(\RR)$ denote the standard Hilbert transform, specified by
\begin{equation*} \HH_\RR f(x) = -\frac 1 \pi \lim_{\varepsilon \to 0^+} \int_\varepsilon^\infty \frac{f(x+t) - f(x-t)} t dt.\end{equation*}
As in the case of $\HH_\TT$, if $f \in L^2(\RR)$ is even then $\HH_\RR(f)$ is odd, and vice versa. Finally, we are ready to prove the main result of this subsection.
\begin{claim} Let $j_k = j+k$ as above, with $j \in \frac 1 2 \NN$ fixed and $k \in \NN$, and fix $m',m \in \{j, j-1, ..., -j\}$. Then
\begin{equation*} \lim_{k \to \infty} \HH_\TT \left(d^{j_k}_{m',m} \right)(0) = \HH_\RR \left(J_{m-m'} \right)(0).\end{equation*} When $m-m' \in 2\ZZ$, this simply says that $\HH_\TT \left(d^{j_k}_{m',m}\right)(0) = 0 = \HH_\RR \left(J_{m-m'} \right)(0)$.\end{claim}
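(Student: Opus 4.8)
The even case requires no work: when $p := m-m' \in 2\ZZ$, relation (\ref{djparity}) shows $d^{j_k}_{m',m}$ is an even function of $\theta$, so $\HH_\TT(d^{j_k}_{m',m})$ is odd and vanishes at $0$; likewise $J_p$ is even by (\ref{besselparity}), so $\HH_\RR(J_p)$ is odd and vanishes at $0$. Thus both sides are $0$, and for the rest I take $p$ to be a positive odd integer (recall $m-m' \ge 0$ is in force throughout this section). The plan is to pass to the limit under the integral sign on the right-hand side of (\ref{finallimit}).

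Set $r_k = \frac{2j_k+1}{2}$ and substitute $u = r_k\theta$ in (\ref{finallimit}); the integral there becomes
\begin{equation*} \int_0^{r_k\delta}\sqrt{\frac{u/r_k}{\sin(u/r_k)}}\,J_p(u)\,\cot\!\left(\frac{u}{4r_k}\right)\frac{du}{r_k}. \end{equation*}
For each fixed $u>0$, as $k\to\infty$ one has $\sqrt{(u/r_k)/\sin(u/r_k)}\to 1$, $\tfrac1{r_k}\cot(u/(4r_k))\to \tfrac4u$, and $r_k\delta\to\infty$, so the integrand (extended by $0$ past $r_k\delta$) converges pointwise to $\tfrac{4J_p(u)}{u}$. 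For a dominating function I would argue as follows: since $\delta<1<\pi$, the factor $\sqrt{\theta/\sin\theta}$ is bounded by a constant $C_\delta$ on $[0,\delta]$, and since $0<u/(4r_k)\le\delta/4<\pi/2$ the elementary inequality $\cot x\le 1/x$ on $(0,\pi/2)$ yields $\tfrac1{r_k}\cot(u/(4r_k))\le\tfrac4u$. Hence the integrand is dominated in absolute value by $4C_\delta|J_p(u)|/u$, which lies in $L^1(0,\infty)$: by (\ref{besselinfty}) one has $J_p(u)=\mathcal O(u^p)$ near $0$ with $p\ge1$, so $|J_p(u)|/u=\mathcal O(u^{p-1})$ is bounded there, while $J_p(u)=\mathcal O(u^{-1/2})$ at infinity makes $|J_p(u)|/u=\mathcal O(u^{-3/2})$ integrable at $\infty$.

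Dominated convergence then gives
\begin{equation*} \lim_{k\to\infty}\HH_\TT\!\left(d^{j_k}_{m',m}\right)(0) = -\frac1{2\pi}\int_0^\infty\frac{4J_p(u)}{u}\,du = -\frac2\pi\int_0^\infty\frac{J_p(u)}{u}\,du. \end{equation*}
To identify the right-hand side, note that $J_p(-t)=-J_p(t)$ for odd $p$ by (\ref{besselparity}), so the definition of $\HH_\RR$ evaluated at $0$ becomes $\HH_\RR(J_p)(0)=-\frac1\pi\lim_{\varepsilon\to0^+}\int_\varepsilon^\infty\frac{2J_p(t)}{t}\,dt=-\frac2\pi\int_0^\infty\frac{J_p(t)}{t}\,dt$, the limit existing because $J_p(t)/t=\mathcal O(t^{p-1})$ is integrable near $0$. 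The two expressions coincide, which is the claim.

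The one genuinely substantive step is the dominated convergence argument: one must produce a single $L^1(0,\infty)$ majorant valid uniformly in $k$, simultaneously controlling the blow-up of $\cot(u/(4r_k))$ near $u=0$ (absorbed by $J_p(u)=\mathcal O(u^p)$ with $p\ge1$) and the moving cutoff at $r_k\delta\to\infty$ (absorbed by the decay $J_p(u)=\mathcal O(u^{-1/2})$). The inequality $\cot x\le 1/x$ is exactly what replaces the $k$-dependent factor by the $k$-independent $4/u$; with that in place, everything else is routine.
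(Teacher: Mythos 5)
Your proof is correct and follows essentially the same route as the paper: after the substitution $x=(j_k+\tfrac12)\theta$ in (\ref{finallimit}), the paper also passes to the limit by dominated convergence, using the boundedness of $\sqrt{\theta/\sin\theta}$ on $[0,\delta]$ and the bound $\tfrac1{j+\frac12}\cot\bigl(\tfrac{x}{4(j+\frac12)}\bigr)\le\tfrac4x$ to obtain the majorant $\tfrac{4}{x}\lvert J_{m-m'}(x)\rvert$, and then identifies $-\tfrac2\pi\int_0^\infty J_{m-m'}(x)\,x^{-1}dx$ with $\HH_\RR(J_{m-m'})(0)$ via the oddness of $J_{m-m'}$. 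Your added details (the explicit $C_\delta$ in the dominating function, the $L^1$ check from (\ref{besselinfty}), and the parity argument for the even case) only make explicit what the paper leaves implicit.
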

\begin{proof}
Assume that $m-m' \in 2\ZZ + 1$. By the substitution $x = \left(j+\frac 1 2 \right) \theta$ in (\ref{finallimit}), it suffices to establish that $\lim_{k \to \infty} I_{j_k} = \HH_{\RR}\left(J_{m-m'} \right)(0)$, where
\begin{equation*} I_j = -\frac 1 {2\pi} \int_0^{\left(j + \frac 1 2 \right) \delta } \sqrt{\frac{\frac x {j+\frac 1 2}}{ \sin\left(\frac x {j+\frac 1 2} \right)}} J_{m-m'}(x) \cot \left(\frac x {4\left(j + \frac 1 2 \right)}\right) \frac{dx}{j+\frac 1 2}\end{equation*}
with $j \in \frac 1 2 \NN$. To this end, denote
\begin{gather*} f^{(1)}_j(x) = \sqrt{\frac{\frac x {j+\frac 1 2 }}{\sin \left(\frac x {j+\frac 1 2}\right)}},\ f^{(2)}_j(x) = \frac 1 {j+\frac 1 2 } \cot \left( \frac x {4\left( j + \frac 1 2 \right) } \right),\\ f_j=\mathbbm{1}_{\left(0,\delta \left( j + \frac 1 2 \right) \right)}f^{(1)}_jf^{(2)}_j  J_{m-m'},\end{gather*}
Then $\lim_{j \to \infty} f_j^{(1)}(x) = 1$ and $\lim_{j \to \infty} f_j^{(2)}(x) = \frac 4 x$. Additionally,
\begin{equation*} 0 < f^{(1)}_j(x) < M_\delta = \max_{\theta \in [0,\delta]} \sqrt{\frac \theta {\sin \theta}},\ 0 < f_j^{(2)}(x) \le \frac 4 x\end{equation*}
for every $j \in \frac 1 2 \NN,\ x \in \left[0, \delta \left(j + \frac 1 2 \right) \right]$. Thus,
\begin{equation*}|f_j(x)| \le \frac 4 x \left| J_{m-m'}(x) \right|,\ \lim_{j \to \infty} f_j(x) = \frac 4 x J_{m-m'}(x)\end{equation*}
for every $x \in (0,\infty)$. Finally, $\mathbbm{1}_{(0,\infty)}(x) \frac 4 x J_{m-m'}(x) \in L^1(\RR)$, hence by the dominated convergence theorem, we deduce that
\begin{equation*}-\frac 1 {2\pi} \lim_{j \to \infty} \int_{\RR} f_j(x) dx= -\frac 2 \pi \int_0^\infty \frac{J_{m-m'}(x)} x dx = \HH_\RR(J_{m-m'})(0),\end{equation*}
where the last equality holds since $J_{m-m'}$ is an odd function by (\ref{besselparity}).
\end{proof}
\subsection{Application to $P_{x,j,m',m}$}
The results of the previous subsection, together with Conclusion \ref{djformulahilbert}, imply that
\begin{equation*} \lim_{k\to \infty} P_{x,j_k,m',m} = \left\{\begin{array}{ll} \frac 1 2  & \text{if } m-m' = 0,\\ -\frac{i }{2} e^{-i \frac \pi 2 (m-m')}\HH_\RR(J_{m-m'})(0) & \text{if } m-m'\ne 0 \end{array}\right.,\end{equation*}
where we recall that $\HH_\RR (J_{p})(0) = 0$ for $p$ even. The Hilbert transform of the Bessel function $J_p$ admits the alternative representation (\cite{poularikas}, 15.9.1)
\begin{equation*} \HH_{\RR}(J_p)(x) =\frac 1 \pi \int_0^\pi \sin(x\sin t - pt)dt.\end{equation*}
Thus, when $p \ne 0$, we find that $\HH_\RR(J_p)(0) = -\frac 1 {\pi p}\left(1-(-1)^{p}\right)$. Additionally, $i e^{-i \frac \pi 2 p} = \sin \left(\frac \pi 2 p \right)$ whenever $p$ is odd, therefore finally
\begin{equation*} \lim_{k \to \infty} P_{x,j_k,m',m} = \left\{\begin{array}{ll} \frac 1 2 & \text{if } m-m' =0,\\ \sin \left(\frac \pi 2(m-m')\right) \frac 1 {\pi(m-m')} & \text{if } m-m' \ne 0. \end{array}\right. \end{equation*}
In light of (\ref{fourierentries}), this establishes the relation between the elements of $\mathbbm{1}_{(0,\infty)}(J_x)$ and the Fourier coefficients of $\mathbbm{1}_E$, where we recall that $E$ denotes the right half of the unit circle in $\CC$.
\begin{conclusion}\label{pxjkmn} $\lim_{k \to \infty} P_{x,j_k,m',m} = \hat{\mathbbm{1}}_E(m-m')$.\end{conclusion}
\subsection{Limits of central elements of $\mathbbm{1}_{\left(\left(j+\frac 1 2 \right)a ,\ \infty \right)}(J_x)$}
In the final part of the previous section, we obtained a formula for the matrix elements of the projection $\mathbbm{1}_{\left(a\left(j + \frac 1 2 \right), \infty \right)}(J_x)$ in the basis $\mathcal E_{z,j}$, with $0 \le a < 1$. Namely, for
\begin{equation*} g^j_{m',m}(\theta) = e^{-i a_j \frac \theta 2} d^j_{m',m}(\theta),\ a_j = \lceil - a(2j+1) \rceil,\end{equation*}
we saw that
\begin{equation*} P_{x,a,j,m',m} =  \frac 1 2 e^{i \frac \pi 2 (m'-m)} \left(\delta_{m',m} - \langle g^j_{m',m}, 1 \rangle_{L^2(\TT)} - i \HH_{\TT} \left( g^j_{m',m} \right)(0)\right).\end{equation*}

Notably, $| g^j_{m',m}| = | d^j_{m',m}|$, therefore most of the arguments in the analysis of $P_{x,j,m',m} = P_{x,0,j,m',m}$ remain valid for $P_{x,a,j,m',m}$ with $a>0$. Specifically,
\begin{equation*} g^j_{m',m}(0) = d^j_{m',m}(0) = \delta_{m',m},\ \lim_{k \to \infty} \langle g^{j_k}_{m',m}, 1 \rangle_{L^2(\TT)} = 0,\end{equation*}
and since $\lim_{j \to \infty} \frac{a_j}{2j+1 } = -a$, 
we can also establish that
\begin{equation*} \lim_{k \to \infty} \HH_{\TT} \left(g^{j_k}_{m',m} \right)(0) =\HH_{\RR}(f_{a,m-m'})(0),\end{equation*}
where $f_{a,p}(x) = e^{ai x} J_{p}(x)$. Moreover, using the parity of $\cos(ax), \sin(ax)$ and $J_p(x)$, we see that
\begin{equation*} \HH_\RR(f_{a,p})(0) =  \left\{\begin{array}{ll}- \frac{2i} \pi \int_0^\infty \frac{\sin(ax) J_p(x)} x dx & \text{if } p \in 2\ZZ,\\ & \\ -\frac 2 \pi \int_0^\infty \frac{\cos(ax) J_p(x)} x dx & \text{if } p \in 2\ZZ+1. \end{array}\right.\end{equation*}
The Bessel function $J_p$ is part of the integral kernel of the $p$th order Hankel transform, which provides a straightforward way to evaluate the integrals above. For $a < 1$, we have that (\cite{bateman}, 8.2.33, 8.7.2, 8.7.27)
\begin{equation*} \int_0^\infty \frac{\sin(ax) J_0(x)}{x} dx = \sin^{-1}(a) \end{equation*}
and otherwise when $p \in 2\ZZ \setminus \{0\}$,
\begin{equation*} \int_0^\infty \frac{\sin(ax) J_p(x)} x dx = \frac 1 p \sin\left(p \sin^{-1}(a)\right). \end{equation*}
Similarly, if $p \in 2\ZZ + 1$,
\begin{equation*} \int_0^\infty \frac{\cos(ax) J_p(x)}x dx = \frac 1 p \cos\left(p \sin^{-1}(a) \right). \end{equation*}
Combining the above, we obtain a generalization of Conclusion \ref{pxjkmn}.
\begin{conclusion}\label{paxjkmn} Assume that $a = \cos \alpha = \sin \left(\frac \pi 2 - \alpha \right)$, with $\alpha \in [0, \frac \pi 2 )$. Then, using trigonometric identities for angle difference, we obtain that
\begin{gather*} \lim_{k \to \infty} P_{x,a, j_k, m',m} = \left\{\begin{array}{ll} \frac \alpha \pi & \text{if } m-m' = 0 \\ \frac 1 {\pi(m-m')}\sin\left((m-m')\alpha \right) & \text{if } m-m' \ne 0 \end{array}\right. = \hat{\mathbbm 1}_{E_a}(m-m'),\end{gather*}
where $E_a = \left\{z \in \TT \ |\ \Re z > a \right\}$.\end{conclusion}
\section{Miscellaneous proofs}\label{miscproofs}
\subsection{Proof of Theorem \ref{linecommutator}}
Recall that $X, \Xi$ act on a smooth function $f \in L^2(\RR)$ by
\begin{equation*} Xf (x) = xf(x),\ \Xi f(x) = -i\hbar f'(x).\end{equation*}
Let $\sigma_\hbar$ denote the rescaling $f(x) \mapsto \sqrt \hbar f(\hbar x)$. Then $\sigma_{\hbar}$ is unitary with respect to the inner product
\begin{equation*} \langle f, g \rangle  = \int_{-\infty}^\infty f(x) \bar g(x) dx,\end{equation*}
since
\begin{equation*} \langle \sigma_{\hbar} f, \sigma_\hbar g \rangle = \hbar \int_{-\infty} f(\hbar x) \bar g(\hbar x) dx = \langle f, g \rangle. \end{equation*}
Let $\mathcal F$ denote the (unitary) Fourier transform on $L^2(\RR)$, acting on a function $f \in L^1(\RR) \cap L^2(\RR)$ by
\begin{equation*} \mathcal F f(\xi) = \frac 1 {\sqrt {2 \pi}} \int_{-\infty}^\infty f(x) e^{-ix\xi} dx.\end{equation*} %
We define the semiclassical Fourier transform $\FF_\hbar$, using the scaling properties of $\FF$, as
\begin{equation*} \mathcal F_\hbar = \sigma_{\hbar^{-1}} \mathcal F = \mathcal F \sigma_{\hbar}.\end{equation*}

The observables $X, \Xi$ are conjugated by $\FF_\hbar$, that is,
\begin{equation*} \Xi = \FF_\hbar^{-1} X \FF_{\hbar}. \end{equation*}
Consequently, so are the functional calculi of $\Xi, X$, where the latter consists of multiplication operators. Recall that
\begin{equation*} \Pi_X = \mathbbm{1}_{(0,\infty)}(X) = \mathcal M_{\heaviside},\ \Pi_\Xi = \mathbbm{1}_{(0,\infty)}(\Xi),\end{equation*}
hence
\begin{conclusion} The projections $\Pi_X, \Pi_\Xi$ are related by
\begin{equation*} \Pi_\Xi = \FF^{-1} \sigma_{\hbar} \mathcal M_{\mathbbm{1}_{(0,\infty)}} \sigma_{\hbar^{-1}} \mathcal F = \mathcal F^{-1} \mathcal M_{\mathbbm{1}_{(0,\infty)}} \mathcal F,\end{equation*}
where $\mathcal M_f$ denotes the operator of multiplication by $f$. \end{conclusion}

In particular, $\Pi_\Xi$ is independent of $\hbar$, hence
\begin{equation*} C_\hbar^{(1)} = \left[\Pi_X, \Pi_\Xi\right] = C^{(1)} \end{equation*}
for some fixed, bounded operator $C^{(1)}$ on $L^2(\RR)$. Next, we recall that the Hardy space on $\RR$ is given by
\begin{equation*} H^2(\RR) = \{f\in L^2(\RR) \ |\ \FF f(\xi) = 0 \ \text{for every } \xi<0\},\end{equation*}
therefore $\Pi_\Xi = \Pi_\RR$ is the Cauchy-Szeg\"{o} projection on $H^2(\RR)$, and consequently
\begin{equation*} C^{(1)} = \left[ \mathcal M_{\mathbbm{1}_{(0,\infty)}}, \Pi_{\RR} \right].\end{equation*}
Let $C(z) = \frac{z-i}{z+i}$ denote the Cayley transform (which maps $(0,\infty) \subset \RR$ onto $\{\Im z < 0\} \subset \TT$). The unitary operator $U_C : L^2(\TT) \to L^2(\RR)$ specified by
\begin{equation*} U_C f(x) = \pi^{-\frac 1 2 }(x+i)^{-1} f\left(C(x)\right)\end{equation*}
is known (\cite{rosenblum}, p.92) to map $H^2(\TT)$ onto $H^2(\RR)$. 
\begin{lemma} $U_C^* \Pi_X U_C = \mathcal M_{\mathbbm 1_{\{\Im z < 0\}}}$ and $U_C^* \Pi_\Xi U_C = \Pi_\TT$, where the latter denotes the Cauchy-Szeg\"{o} projection on $H^2(\TT)$. \end{lemma}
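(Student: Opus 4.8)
The plan is to verify the two identities separately; both are elementary, the first a change-of-variables computation and the second a soft functional-analytic observation.

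For $U_C^* \Pi_X U_C = \mathcal M_{\mathbbm{1}_{\{\Im z<0\}}}$, rather than computing $U_C^{-1}$ explicitly I would establish the intertwining relation $U_C\, \mathcal M_{\mathbbm{1}_{\{\Im z < 0\}}} = \Pi_X U_C$ and then compose on the left with $U_C^* = U_C^{-1}$. For $f \in L^2(\TT)$ and $x \in \RR$,
\[ \left( U_C\, \mathcal M_{\mathbbm{1}_{\{\Im z<0\}}} f \right)(x) = \pi^{-\frac 1 2}(x+i)^{-1}\, \mathbbm{1}_{\{\Im z < 0\}}\!\left(C(x)\right)\, f\!\left(C(x)\right), \]
so the claim reduces to the pointwise equality $\mathbbm{1}_{\{\Im z<0\}}(C(x)) = \mathbbm{1}_{(0,\infty)}(x)$ for every $x \in \RR$, which follows from $\Im C(x) = -\tfrac{2x}{x^2+1}$ (equivalently, from the fact already recorded above that $C$ carries $(0,\infty)$ onto the open lower semicircle $\{z\in\TT : \Im z<0\}$). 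The right-hand side of the display then equals $\mathbbm{1}_{(0,\infty)}(x)\,(U_C f)(x) = (\Pi_X U_C f)(x)$, as desired. Note that the weight $\pi^{-1/2}(x+i)^{-1}$ plays no role: it commutes past the multiplication operator, which is precisely why conjugating a multiplication operator by $U_C$ again yields a multiplication operator.

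For $U_C^* \Pi_\Xi U_C = \Pi_\TT$, I would invoke the general principle that a unitary conjugates the orthogonal projection onto a closed subspace into the orthogonal projection onto the preimage of that subspace: if $U\colon \mathcal H_1 \to \mathcal H_2$ is unitary and $P$ is the orthogonal projection onto a closed subspace $V \subseteq \mathcal H_2$, then $U^* P U$ is self-adjoint and idempotent with range $U^*V = U^{-1}(V)$, hence is the orthogonal projection onto $U^{-1}(V)$. Applying this with $U = U_C$ and $P = \Pi_\Xi = \Pi_\RR$, the Cauchy-Szeg\"{o} projection onto $H^2(\RR)$ identified just above, and using the cited fact (\cite{rosenblum}) that $U_C$ maps $H^2(\TT)$ onto $H^2(\RR)$, so that $U_C^{-1}(H^2(\RR)) = H^2(\TT)$, we obtain $U_C^* \Pi_\Xi U_C = \Pi_\TT$. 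I do not anticipate a genuine obstacle: the only point requiring a little care is the bookkeeping in the first identity — fixing the image of $(0,\infty)$ under the Cayley transform and tracking the weight — while the second identity is entirely soft once the cited mapping property of $U_C$ is in hand.
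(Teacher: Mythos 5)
Your proposal is correct and follows essentially the same route as the paper: the first identity comes down to the fact that conjugating a multiplication operator by $U_C$ amounts to composing the symbol with the Cayley transform (your intertwining formulation versus the paper's explicit use of $U_C^*\psi(z) = 2i\frac{\sqrt\pi}{1-z}\psi(C^{-1}(z))$ is only a bookkeeping difference, since the weight cancels either way), and the second identity is the same soft observation that a unitary carrying $H^2(\TT)$ onto $H^2(\RR)$ conjugates the Cauchy-Szeg\"{o} projections into one another. No gaps.
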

\begin{proof}
Note that
\begin{equation*} U_C^*\psi(z) = 2i \frac{\sqrt \pi}{1-z} \psi \left( C^{-1}(z)\right).\end{equation*}
For a bounded function $\psi : \RR \to \RR$ and $f \in L^2(\TT)$, we obtain that
\begin{equation*} U_C^* \mathcal M_\psi U_C f(z) = 2i \frac{\sqrt \pi}{1-z} \psi\left(C^{-1}(z)\right) \cdot \left(U_C f \right)\left(C^{-1}(z) \right) = \left( \psi \circ C^{-1}\right)(z) f(z).\end{equation*}
Then,
\begin{equation*} \mathbbm{1}_{(0,\infty)} \circ C^{-1} = \mathbbm{1}_{C\left((0,\infty)\right)} = \mathbbm{1}_{\{\Im z < 0\}},\end{equation*}
which means that $U_C^* \Pi_X U_C = \mathcal M_{\mathbbm 1_{\{\Im z < 0 \}}}$.

Next, $U_C$ is unitary and maps $H^2(\TT)$ onto $H^2(\RR)$, hence it maps $H^2(\TT)^\perp$ onto $H^2(\RR)^\perp$. It follows immediately that $U_C^* \Pi_\RR U_C = \Pi_\TT$. \end{proof}
In light of the previous lemma, we conclude that
\begin{equation*} \left \Vert C^{(1)} \right\Vert_{\op} = \left \Vert U_C^* C^{(1)} U_C \right \Vert_{\op} = \left \Vert \left[\mathcal M_{\mathbbm{1}_{\{\Im z < 0\}}}, \Pi_\TT \right] \right \Vert_{\op}. \end{equation*}
Consider the translation operator $\tau$ on $L^2(\TT)$, specified by $f(z) \mapsto f\left(e^{i \frac \pi 2} z \right)$. Then $\tau z^m = e^{i \frac {\pi m}{2}} z^m$ for every $m \in \ZZ$, therefore $\tau^* \Pi_\TT \tau = \Pi_\TT$. At the same time,
\begin{equation*} \tau^* \mathcal M_f \tau = \mathcal M_{ \tau^*f},\end{equation*}
therefore
\begin{equation*} \tau^* \mathcal M_{\mathbbm{1}_{\{\Im z < 0\}}} \tau = \mathcal M_{\mathbbm{1}_E},\end{equation*}
where we recall that $E = \{z \in \TT \ | \ \Re z > 0\}$. Finally, if we denote $\Pi_\TT^\perp = \Id - \Pi_\TT$, then
\begin{gather*} \left[\mathcal M_f, \Pi_\TT \right] = \left(\Pi_\TT + \Pi_\TT^\perp \right) \left[ \mathcal M_f, \Pi_\TT \right] \left( \Pi_\TT + \Pi_\TT^\perp\right)\\ = \Pi_\TT^\perp \mathcal M_f \Pi_\TT - \Pi_\TT \mathcal M_f \Pi_\TT^\perp = H_f - H_f^*,\end{gather*}
where $H_f$ denotes the Hankel operator with symbol $f$.
\begin{conclusion} $\left \Vert \left[ \mathcal M_{\mathbbm{1}_{\{\Im z < 0\}}}, \Pi_\TT \right] \right \Vert_{\op} = \left \Vert \left[ \mathcal M_{\mathbbm{1}_E}, \Pi_\TT \right] \right \Vert_{\op}$,
where
\begin{equation}\label{basicestcase}\left[ \mathcal M_{\mathbbm 1_E}, \Pi_\TT \right] = H_E \oplus \left(-H_E \right)^* : H^2(\TT) \oplus H^2(\TT)^\perp \to H^2(\TT)^\perp \oplus H^2(\TT).\end{equation} It follows from Lemma \ref{H_Enorm} that $\Vert \left[ \Pi_X, \Pi_\Xi \right] \Vert_{\op} = \frac 1 2$. \end{conclusion}
\subsection{Proof of Theorem \ref{ringcommutator}}
Recall that we have defined the operators $\Theta, Z$ on $L^2(\TT)\simeq L^2\left([0,2\pi),\frac 1 {2\pi} d\theta\right)$ by
\begin{equation*} \Theta u(\theta) = \theta u (\theta),\ Zu(\theta) = -i \frac{2\pi} n u'(\theta),\end{equation*}
where $u\in C^\infty(\TT)$ and $n \in \NN$. We are interested in
$C^{(2)}_n = \left[\Pi_\Theta, \Pi_Z \right]$,
where
\begin{gather*} \Pi_\Theta = \heaviside\left(\cos \Theta \right) = \mathcal M_{\mathbbm 1_E},\\ \Pi_Z = \heaviside \left(\cos Z \right).\end{gather*}

The proof that $\lim_{n\to \infty} \Vert C^{(n)} \Vert_{\op} = \frac 1 2$ immediately reduces to Lemma \ref{H_Enorm}, since $\{z^k \ |\ k \in \ZZ\}$ is an eigenbasis of $Z$ (analogous to $\mathcal E_{z,j}$ for $SU(2)$), with
\begin{equation*} Z (z^k) = \frac{2\pi k}{n} z^k.\end{equation*}
This means that
\begin{equation*} \Pi_Z (z^k) = \mathbbm 1_E\left(\lambda_{k,n}\right) z^k,\end{equation*}
where $\lambda_{k,n} = e^{\frac{2\pi k}{n} i}$.
The matrix elements of $\Pi_\Theta$ are specified by
\begin{equation*} \langle \Pi_\Theta z^l, z^k \rangle = \hat{\mathbbm{1}}_E(k-l).\end{equation*}

Consequently, the matrix elements of $C^{(\text{2})}_{n}$ are specified by
\begin{gather*} c^{(2)}_{n,k,l} = \langle C^{(2)}_n z^l, z^k \rangle = \langle \Pi_\Theta \Pi_Z z^l, z^k \rangle - \langle \Pi_Z \Pi_\Theta z^l, z^k \rangle\\ = \mathbbm{1}_E\left(\lambda_{l,n} \right) \langle \mathbbm 1_E z^l, z^k \rangle - \langle \mathbbm 1_E z^l, \Pi_Z z^k \rangle = \left(\mathbbm{1}_E(\lambda_{l,n}) - \mathbbm{1}_E(\lambda_{k,n})\right) \hat{\mathbbm{1}}_E(k-l),\end{gather*}
In particular, when $\frac n 4 < l < \frac {3n} 4$ and $0 \le k < \frac n 4$,
\begin{equation*} c^{(2)}_{n,k,l}= -\hat{\mathbbm{1}}_E(k-l).\end{equation*}
\begin{conclusion} Fix some positive $N \in \NN$, and assume that $n > 4N$. Define
\begin{equation*} C_{n,N}^{(2)} = \left(a_{k,l}\right)_{k,l = 1,...,N} = \left(c^{(2)}_{n,\lceil \frac n 4 \rceil - k,\lceil \frac n 4 \rceil + l-1} \right)_{k,l = 1,...,N}.\end{equation*}
Then $a_{k,l} = -\hat{\mathbbm 1}_E(1-k-l)$. It follows that
\begin{equation*} -C_{n,N}^{(2)} = \left[H_E\right]_N \end{equation*}
is the truncated Hankel matrix associated with $H_E$. By Lemma \ref{H_Enorm} and the same argument as in (\ref{submatricesargument}), we deduce that $\lim_{n \to \infty} \Vert C^{(2)}_n \Vert_{\op} = \frac 1 2$.\end{conclusion}
If we replace $\mathbbm 1_{(0,\infty)}$ with $\mathbbm 1_{(a, \infty)}$, where $a \in (0,1)$, then the proof remains largely unchanged, except for the use of the Hankel operator $H_{\mathbbm 1_{E_a}}$ (as in Conclusion \ref{paxjkmn}) instead of $H_E$.
\subsection{Proof of Theorem \ref{heisclaim}}
The proof of Theorem \ref{heisclaim} may be obtained by straightforward computations. However, we will use a geometric model as follows.

We identify the standard basis of $V_n = l^2(\ZZ_n)$ with
\begin{equation*}  \Delta_n = \left\{ \delta_{\frac {2\pi k} n} \ |\ k = 0,1,...,n-1\right\} = \left\{\delta_{\frac {2\pi k} n} \ |\ k \in \ZZ \right\},\end{equation*}
where $\delta_{\frac {2\pi k} n}$ is the Dirac measure supported in $\frac {2\pi k} n \in \ZZ_n \subset \TT \simeq [0,2\pi)$. For a vector $v = \sum_{k=0}^{n-1} v_k \delta_{\frac{2\pi k}{n}}$ and a bounded, measurable function $f : \TT \to \CC$ we will use the notation
\begin{equation*} f v = \sum_{k = 0}^{n-1} f\left(\frac{2\pi k}{n} \right) v_k \delta_{\frac{2\pi k}{n}},\end{equation*}
and refer to the operator $v \mapsto f v$ as the multiplication operator $\mathcal M_f : V_n \to V_n$.

Given $f : \TT \to \CC$, define the discretization
\begin{equation*} A_n(f) = f A_n(1) = \frac 1 {\sqrt { n}} \sum_{k = 0}^{n-1} f\left(\frac {2\pi k} n  \right) \delta_{\frac {2\pi k} n}.\end{equation*}
The representation of $H(\ZZ_n)$ is realized on $\left(V_n, \langle \cdot, \cdot \rangle_n \right)$, where $\langle \cdot, \cdot \rangle_n$ is specified by $\langle \delta_{\frac {2\pi k} n}, \delta_{\frac {2\pi l} n} \rangle_n = \delta_{kl}$. In these settings, $g_2$ is the multiplication operator $\mathcal M_z$, and $g_1$ is the operator of translation by $\frac {2\pi} n$.

In particular, $g_1 \delta_{\frac{2\pi k} n} = \delta_{\frac{2\pi (k-1)}{n}}$, and we note that
\begin{equation*}  g_1 A_n(f) = \frac 1 {\sqrt n} \sum_{k=0}^{n-1} f\left(\frac{2\pi k}{n} \right) \delta_{\frac{2\pi (k-1)} n} = \frac 1 {\sqrt n} \sum_{k = 0}^{n-1} f \left(\frac{2\pi(k+1)} n \right) \delta_{\frac{2\pi k} n } = A_n(\tau_n f),\end{equation*}
where $\tau_n f(\theta) = f\left(\theta + \frac{2\pi} n\right)$. Thus,
\begin{equation*} g_1 A_n(z^k) = e^{2\pi \frac k n i} A_n(z^k) = \lambda_{k,n} A_n(z^k),\end{equation*}
therefore
\begin{equation*} \mathcal E_n = \left\{e_{k,n} \ |\ k = 0,1,...,n-1\right\} = \left\{A_n\left(z^k \right)\ |\ k \in \ZZ \right\}\end{equation*}
is an eigenbasis of $g_1$, orthonormal with respect to $\langle \cdot, \cdot \rangle_n$ (as may be seen by a straightforward calculation). $\Delta_n$ is clearly an orthonormal eigenbasis of $g_2$.

Let $\FF_n$ denote the (unitary) discrete Fourier transform, specified by
\begin{equation*} \langle \FF_n v, \delta_{\frac {2\pi k}{n}} \rangle_n = \frac 1 {\sqrt n}\sum_{l=0}^{n-1} v_l e^{-\frac{2\pi k l}{n}i}.\end{equation*}
Then
\begin{lemma} $g_1 = \FF_n^{-1} g_2 \FF_n$.\end{lemma}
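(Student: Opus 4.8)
The plan is to verify the identity by evaluating both operators on the orthonormal eigenbasis $\mathcal E_n = \{e_{k,n}\}$ of $g_1$, exploiting the fact that the discrete Fourier transform carries $\mathcal E_n$ onto the $\delta$-basis $\Delta_n$, which is the eigenbasis of $g_2$. This is the finite-dimensional shadow of the intertwining relation $\Xi = \FF_\hbar^{-1} X \FF_\hbar$ used in the proof of Theorem \ref{linecommutator}.

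First I would record the action of $\FF_n^{-1}$ on the standard basis $\Delta_n$. The kernel $\frac 1 {\sqrt n} e^{-\frac{2\pi kl}{n}i}$ defining $\FF_n$ is symmetric in $k$ and $l$, so $\FF_n$ is unitary with $\FF_n^{-1} = \FF_n^*$, whence $\langle \FF_n^{-1} v, \delta_{\frac{2\pi k}{n}}\rangle_n = \frac 1 {\sqrt n} \sum_{l=0}^{n-1} v_l e^{\frac{2\pi kl}{n} i}$. Applying this to $v = \delta_{\frac{2\pi m}{n}}$ gives $\FF_n^{-1}\delta_{\frac{2\pi m}{n}} = \frac 1 {\sqrt n}\sum_{k=0}^{n-1} e^{\frac{2\pi km}{n}i}\delta_{\frac{2\pi k}{n}} = A_n(z^m) = e_{m,n}$, since $z^m$ takes the value $e^{\frac{2\pi km}{n}i}$ at the point $\frac{2\pi k}{n}$; equivalently $\FF_n e_{m,n} = \delta_{\frac{2\pi m}{n}}$. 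Then I would compute both sides on $e_{m,n}$: on one hand $g_1 e_{m,n} = \lambda_{m,n} e_{m,n}$ by the computation already recorded in the text, and on the other hand, since $g_2 = \mathcal M_z$ satisfies $g_2\delta_{\frac{2\pi m}{n}} = e^{\frac{2\pi m}{n}i}\delta_{\frac{2\pi m}{n}} = \lambda_{m,n}\delta_{\frac{2\pi m}{n}}$, we obtain $\FF_n^{-1} g_2 \FF_n e_{m,n} = \FF_n^{-1} g_2 \delta_{\frac{2\pi m}{n}} = \lambda_{m,n}\FF_n^{-1}\delta_{\frac{2\pi m}{n}} = \lambda_{m,n} e_{m,n}$. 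As the two operators agree on a basis, they are equal.

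The computation is entirely routine; the only point requiring attention is bookkeeping of the sign conventions in $\FF_n$ versus $\FF_n^{-1}$, i.e., confirming that the inverse DFT realizes $\delta_{\frac{2\pi m}{n}} \mapsto A_n(z^m)$ (equivalently, that the DFT diagonalizes the cyclic shift). An equally short alternative bypasses the eigenbasis: conjugate $g_1$ by $\FF_n$ directly using $g_1 A_n(f) = A_n(\tau_n f)$ together with the elementary fact that $\FF_n$ intertwines the frequency-side shift $\tau_n$ with space-side multiplication by the character $z$. I do not expect a genuine obstacle in either route.
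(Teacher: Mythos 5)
Your proposal is correct and is essentially the same argument as the paper's: a direct verification on a basis using the explicit action of the discrete Fourier transform on delta functions and characters (the identity $\FF_n \delta_{\frac{2\pi m}{n}} = A_n(z^{-m})$, equivalently $\FF_n e_{m,n} = \delta_{\frac{2\pi m}{n}}$). The only cosmetic difference is that you check equality on the eigenbasis $\mathcal E_n$ of $g_1$, where both sides act by the scalar $\lambda_{m,n}$, whereas the paper checks on $\Delta_n$, showing $\FF_n^{-1} g_2 \FF_n$ shifts the deltas exactly as $g_1$ does.
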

\begin{proof}
Note that
\begin{equation*} \langle \FF_n \delta_{\frac{2\pi m} n}, \delta_{\frac{2\pi k} n} \rangle_n = \frac 1 {\sqrt n} e^{-2\pi \frac{km}n i},\end{equation*}
hence
\begin{equation*}  \FF_n \delta_{\frac{2\pi m} n} = \frac 1 {\sqrt n} \sum_{ k = 0}^{n-1} \left(e^{\frac{2\pi k} n i} \right)^{-m} \delta_{\frac{2\pi k}{n}} = A_n(z^{-m}),\end{equation*}
which means that
\begin{equation*} g_2 \FF_n \delta_{\frac{2\pi m}{n}} = \frac 1 {\sqrt n} \sum_{k = 0}^{n-1} \left(e^{\frac{2\pi k} n} \right)^{-(m-1)} \delta_{\frac{2\pi k}{n}} = \FF_n \delta_{\frac{2\pi(m-1)} n}.\end{equation*}
We conclude that
\begin{equation*} \FF_n^{-1} g_2 \FF_n \delta_{\frac{2\pi m}{n}} = \delta_{\frac{2\pi (m-1)}{n}} = g_1 \delta_{\frac{2\pi m} n},\end{equation*}
therefore $g_1 = \FF_n^{-1} g_2 \FF_n$. \end{proof}
%

We have that $\Pi_2 = \mathbbm 1_E\left(\mathcal M_z \right)=\mathcal M_{\mathbbm{1}_E}$, therefore $\Pi_1 = \FF_n^{-1} \mathcal M_{\mathbbm 1_E} \FF_n$.
Since $\mathcal E_n$ is an eigenbasis of $g_1$, it holds that
\begin{equation*} \Pi_1 e_{k,n} = \mathbbm{1}_E(\lambda_{k,n})e_{k,n}.\end{equation*}
The matrix elements of $\Pi_2$ in $\mathcal E_n$ are given by
\begin{gather*} \langle \Pi_2 e_{l,n}, e_{k,n} \rangle_n = \langle  A_n\left(\mathbbm{1}_E z^l \right), A_n\left(z^k \right)\rangle_n \\= \langle z^l A_n(\mathbbm 1_E), A_n(z^{k})\rangle_n = \langle A_n(\mathbbm 1_E), A_n(z^{k-l}) \rangle_n.\end{gather*}
Here, we have used the fact that $\mathcal M_{f_1} A_n(f_2) = f_1 A_n(f_2) = A_n(f_1 f_2)$ and that
\begin{equation*} \langle f A_n(f_1), A_n( f_2) \rangle_n = \langle A_n(f_1), \bar f A_n( f_2) \rangle_n.\end{equation*}

The proof of Theorem \ref{heisclaim} reduces to Lemma \ref{H_Enorm}, as in all previous cases. We demonstrate this using $\mathcal E_n$ (though $\Delta_n$ works just as well). 
The matrix elements of the commutator $C^{(3)}_n = \left[\Pi_1, \Pi_2 \right]$ are specified by
\begin{equation*} c^{(3)}_{n,k,l} = \langle C^{(3)}_n e_{l,n}, e_{k,n} \rangle_n = \left(\mathbbm 1_E(\lambda_{l,n}) - \mathbbm 1_E(\lambda_{k,n})\right) \langle \Pi_2 e_{l,n}, e_{k,n} \rangle_n.\end{equation*}
In particular, when $\frac n 4 < l < \frac {3n} 4$ and $0 \le k < \frac n 4$,
\begin{equation*}  c^{(3)}_{n,k,l}= \langle A_n(\mathbbm 1_E), A_n(z^{k-l}) \rangle_n.\end{equation*}
If $f_1, f_2 \in L^2(\TT)$, then
\begin{equation*} \langle A_n(f_1), A_n(f_2) \rangle_n = \frac 1 {2\pi} \sum_{k = 0}^{n-1} \left[ f_1\left(\frac {2\pi k} n \right) \bar f_2 \left(\frac {2\pi k} n \right)\frac{2\pi}{n} \right] \xrightarrow{n \to \infty} \langle f_1,f_2 \rangle_{L^2(\TT)}.\end{equation*}
Thus,
\begin{conclusion} Fix some positive $N \in \NN$, and assume that $n > 4N$. Define
\begin{equation*} C_{n,N}^{(3)} = \left(b_{n,k,l}\right)_{k,l = 1,...,N} = \left(c^{(3)}_{n,\lceil \frac n 4 \rceil - k,\lceil \frac n 4 \rceil + l-1} \right)_{k,l = 1,...,N}.\end{equation*}
Then $\lim_{n\to \infty} b_{n,k,l} = \lim_{n \to \infty} \langle A_n(\mathbbm 1_E), A_n(z^{1-k-l}) \rangle_n = \hat{\mathbbm 1}_E(1-k-l)$. It follows that
\begin{equation*} \lim_{n \to \infty} C_{n,N}^{(3)} = \left[H_E\right]_N \end{equation*}
is the truncated Hankel matrix associated with $H_E$. By Lemma \ref{H_Enorm} and the same argument as in (\ref{submatricesargument}), we deduce that $\lim_{n \to \infty} \Vert C^{(3)}_n \Vert_{\op} = \frac 1 2$.\end{conclusion}
As in the case of Theorem \ref{ringcommutator}, if we replace $\mathbbm 1_{(0,\infty)}$ with $\mathbbm 1_{(a, \infty)}$, where $a \in (0,1)$, then the proof remains largely unchanged, except for the use of the Hankel operator $H_{\mathbbm 1_{E_a}}$ (as in Conclusion \ref{paxjkmn}) instead of $H_E$.
\subsection{Proof of Theorem \ref{euclideancommutator}}
The proof is immediate. Indeed, $\Pi_\Phi(f) = \Pi_\TT f - \hat f(0)$, and
\begin{equation*} \Pi_{X_1} = \mathcal M_{\mathbbm{1}_{(0,\infty)}(R \cos \phi)} = \mathcal M_{\mathbbm{1}_E},\end{equation*}
therefore
\begin{equation*} C^{(4)}_R = \left[\mathcal M_{\mathbbm{1}_E}, \Pi_\TT \right],\end{equation*}
and $\Vert C^{(4)}_R \Vert_{\op} = \frac 1 2$, as we have already seen in (\ref{basicestcase}).
\section{Discussion and a general conjecture}\label{morecases}
We begin with an informal interpretation of Theorem \ref{maintheorem}, based on a realization of the representations of $SU(2)$ through Berezin-Toeplitz quantization of the unit sphere $S^2 \subset \mathbb R^3$. This will lead us to formulate a conjectured, generalized version of Theorem \ref{maintheorem}, using the language of quantization. Subsequently, we will explore the conjectured formulation in a number of concrete examples.

In what follows, $\mathcal L(\HH)$ denotes the space of self-adjoint operators on a finite dimensional Hilbert space $\HH$. Let $(M, \omega)$ denote a closed\footnote{i.e., compact and without boundary.}, quantizable\footnote{i.e., $\frac \omega {2\pi}$ represents an integral de-Rham cohomology class.} symplectic manifold. A Berezin-Toeplitz quantization (\cite{charles, schlichen, lefloch}) of $M$ produces a sequence of finite dimensional complex Hilbert spaces $\left(\HH_\hbar \right)_{\hbar \in \Lambda}$, where $0$ is an accumulation point of $\Lambda \subset (0,\infty)$ and $\lim_{\hbar \to 0^+} \dim \HH_\hbar = + \infty$, together with surjective linear maps $T_\hbar : C^\infty(M) \to \mathcal L\left(\HH_\hbar\right)$, such that
\begin{enumerate}
\item{$T_\hbar(1) = \Id_{\HH_\hbar}$,}
\item{if $f \ge 0$, then $T_\hbar(f) \ge 0$,}
\item{$\Vert f \Vert_\infty - O(\hbar) \le \left\Vert T_\hbar(f) \right \Vert_{\op} \le \Vert f \Vert_{\infty}$,}
\item \label{correspprinc} {$\left\Vert \frac i \hbar \left[T_\hbar(f), T_\hbar(g) \right] - T_\hbar\left(\{f,g\}\right)\right\Vert_{\op} = O(\hbar)$,}
\item{$\left\Vert T_\hbar\left(f^2\right) - T_\hbar(f) \right\Vert_{\op} = O(\hbar)$}
\end{enumerate}
for every $f, g \in C^\infty(M)$. Here $\Vert f \Vert_\infty = \max_M |f|$ is the uniform norm and $\{f, g\}$ is the Poisson bracket of $f,g$. The existence of a Berezin-Toeplitz quantization in these rather general settings is a non-trivial fact, though if $(M, \omega)$ is a closed K\"{a}hler manifold, then the construction itself is quite direct. Item \ref{correspprinc} above is known as the \textit{correspondence principle}, and it is central to our interpretation.

Let us identify $S^2$ with the complex projective space $\CC P^1$ via the stereographic projection through the north pole, and let $\rho$ denote the standard action of $SU(2)$ on $\CC P^1$, given by
\begin{equation*} \rho(U)\left([z_1 : z_2 ] \right) = [\alpha z_1 - \bar \beta z_2 : \beta z_1 + \bar \alpha z_2 ],\ U = \left(\begin{array}{cc} \alpha & -\bar \beta \\ \beta & \bar \alpha \end{array}\right) \in SU(2). \end{equation*}
In addition to the properties specified above, the Berezin-Toeplitz quantization of $S^2 \simeq \CC P^1$ is $SU(2)$-equivariant, meaning that $\HH_\hbar$ carries an irreducible, unitary representation $\rho_\hbar$ of $SU(2)$ such that
\begin{equation*} T_\hbar \left(f \circ \rho(U)^{-1}\right) = \rho_\hbar(U) T_\hbar(f) \rho_\hbar(U)^*\end{equation*}
for every $\hbar \in \Lambda$, $f \in C^\infty\left(\CC P^1 \right)$ and $U \in SU(2)$. Here, $\hbar^{-1} = n = \dim \HH_\hbar $, and
$\Lambda = \left\{\left. n^{-1} \ \right| \ n = 1, 2, ...\right\}$.
The spin operators $J_x, J_y, J_z \in \LL\left(\HH_\hbar \right)$ are then, up to normalization, the quantum counterparts of the Cartesian coordinate functions $x, y, z : \CC P^1 \to \RR$. Specifically,
\begin{equation*} T_\hbar(x) = \frac 1 {n+1} J_x,\ T_\hbar(y) = \frac 1 {n+1} J_y,\ T_\hbar(z) = \frac 1 {n+1} J_z.\end{equation*}
Since $\mathbbm{1}_{(0,\infty)}$ is unaffected by positive rescalings, Theorem \ref{maintheorem} means that
\begin{equation*}\lim_{n \to \infty} \Vert C_n \Vert_{\op} = \lim_{\hbar \to 0^+} \left \Vert \left[ \mathbbm{1}_{(0,\infty)}\left(T_\hbar(x)\right), \mathbbm{1}_{(0,\infty)}\left(T_\hbar(z) \right) \right] \right\Vert_{\op} = \frac 1 2.\end{equation*}

Finally, our loose interpretation of this result goes as follows. We consider the spectral projections
\begin{equation*}\mathbbm{1}_{(0,\infty)}(J_x),\ \mathbbm{1}_{(0,\infty)}(J_z)\end{equation*}
as a pair of observables that are somehow related (\cite{zz1, zz2, zz3}) to the indicator functions of the hemispheres $\{x > 0\}, \{z > 0\} \subset S^2$. Thus, we interpret Theorem \ref{maintheorem} as an informal attempt to explore the correspondence principle (item \ref{correspprinc} above) in the context of discontinuous classical observables\footnote{To the best of our knowledge, a well-defined, useful (in the context of quantization) notion of Poisson bracket which is applicable to discontinuous observables does not exist.}. At the moment, it is unclear whether $C_n$ corresponds to a well-defined classical object as $n \to \infty$. Still, the behavior of $\left(C_n \right)_{n \ge 2}$ appears to be related to the intersection of the boundaries of the respective hemispheres, that is, to the points $\pm(0,1,0)$.

To see this, note that $\HH_\hbar$ may be identified with the space of homogeneous polynomials of degree $n -1$ in two complex variables, such that $\rho_\hbar$ becomes the standard irreducible unitary representation of $SU(2)$ in the latter space. Assume that $v_n \in \HH_\hbar$ is a polynomial which realizes the norm of $C_n$, i.e., assume that $\Vert C_n v_n \Vert = \Vert C_n \Vert_{\op}$. Our numerical simulations suggest that $v_n$ concentrates about the points $\pm(0,1,0)$ when $n \to \infty$, as illustrated in the following images.

\begin{figure}[H]

    \begin{minipage}{0.5\textwidth}
        \centering
        \includegraphics[width=1\textwidth]{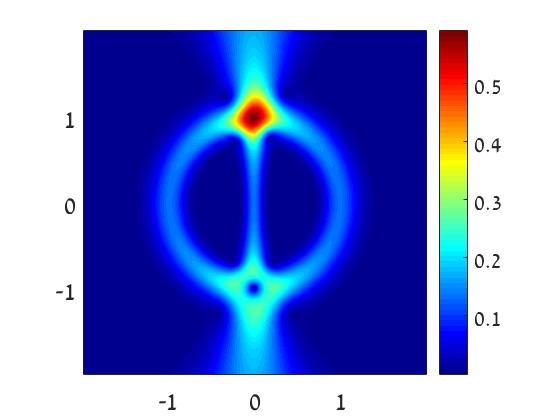} 
    \end{minipage}\hfill
    \begin{minipage}{0.5\textwidth}
        \centering
        \includegraphics[width=1\textwidth]{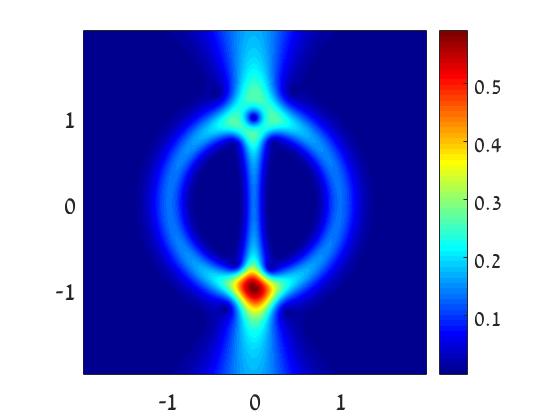} 
    \end{minipage}
    \caption{(originally by Y. Le Floch) The modulus of (unit) eigenvectors of $C_{101}$ corresponding to extremal eigenvalues, realized as polynomials on $\CC$.}
\end{figure}

\begin{figure}[H]

    \begin{minipage}{0.5\textwidth}
        \centering
        \includegraphics[width=1\textwidth]{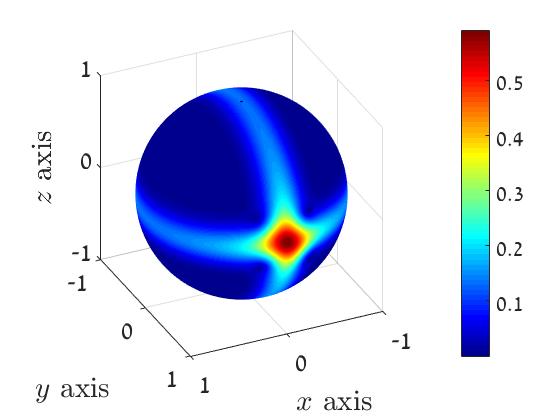} 
    \end{minipage}\hfill
    \begin{minipage}{0.5\textwidth}
        \centering
        \includegraphics[width=1\textwidth]{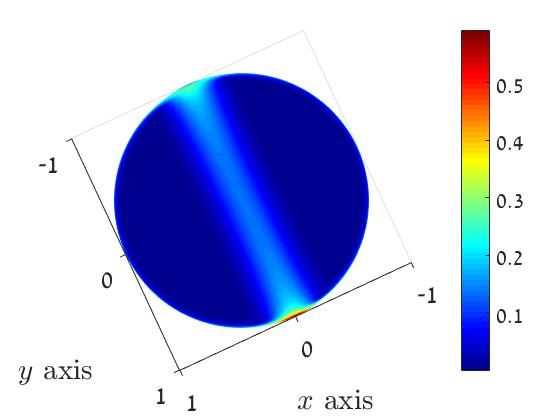} 
    \end{minipage}
    \caption{The image above to the left, reproduced with the eigenvector realized as a function on $S^2$ using the stereographic projection.}
\end{figure}
%

More generally, assume that $T_\hbar(f), T_\hbar(g)$ are a pair of quantum observables arising from smooth, non-commuting observables $f,g$ on some quantizable phase space $M$, and let $I, J \subset \RR$ denote some intervals. As before, we view the projections 
\begin{equation*} \Pi_{\hbar, f, I} = \mathbbm{1}_I\left(T_\hbar(f) \right),\ \Pi_{\hbar, g, J} = \mathbbm{1}_J\left(T_\hbar(g) \right) \end{equation*}
as a pair of observables that are related to the domains $f^{-1}(I),\ g^{-1}(J) \subset M$. The numerical evidence (Figures \ref{notmod4}, \ref{againnotmod4} in particular) and the results presented in this work appear to support the following conjecture, which was inspired by recent findings pertaining to quantization of domains in phase space (\cite{pol1, pol2, charpol}).
\begin{conjecturee} If $M$ is $2$-dimensional, and if the intersection of the boundaries of the domains $f^{-1}(I),\ g^{-1}(J)$ is non-empty and transversal, then  $\Pi_{I, f, \hbar}, \Pi_{J, g, \hbar}$ are maximally non-commuting in the semiclassical limit, i.e.,
\begin{equation*} \lim_{\hbar \to 0^+} \left \Vert \left[\Pi_{I, f, \hbar}, \Pi_{J, g, \hbar} \right] \right\Vert_{\op} = \frac 1 2.\end{equation*}\end{conjecturee}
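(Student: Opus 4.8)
The inequality $\limsup_{\hbar\to0^+}\Vert[\Pi_{I,f,\hbar},\Pi_{J,g,\hbar}]\Vert_{\op}\le\frac12$ is automatic, being the general bound on commutators of orthogonal projections (\cite{qihuili}) already used for the sequence $(C_n)$; so the whole problem is the matching lower bound, and the plan is to exhibit, for all small $\hbar$, a unit vector $\psi_\hbar\in\HH_\hbar$ with $\Vert[\Pi_{I,f,\hbar},\Pi_{J,g,\hbar}]\psi_\hbar\Vert\to\frac12$. The organizing principle --- visible in every case settled above, where the truncated matrices of the commutator converge to those of the single universal Hankel operator $H_E$ --- is that, microlocally near a transversal intersection point $p$ of the smooth curves $\Gamma_f:=\partial f^{-1}(I)$ and $\Gamma_g:=\partial g^{-1}(J)$, the germs of the domains $f^{-1}(I),g^{-1}(J)$ are just two transversally crossing half-planes, so the pair $(\Pi_{I,f,\hbar},\Pi_{J,g,\hbar})$ ought to be unitarily equivalent, up to $o(1)$ errors, to the model pair $(\mathbbm{1}_{(0,\infty)}(X),\mathbbm{1}_{(0,\infty)}(\Xi))$ of Theorem \ref{linecommutator}, whose commutator $C^{(1)}=H_E\oplus(-H_E)^*$ has norm exactly $\frac12$, with $\pm\frac12$ lying in its essential spectrum (Lemma \ref{H_Enorm}).

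Concretely I would proceed in three steps. First, reduce to a symplectic normal form: transversality of $\Gamma_f,\Gamma_g$ at $p$ is the condition $\{f,g\}(p)\ne0$, and Darboux's theorem together with a relative Moser deformation lets me pick symplectic coordinates $(q,\xi)$ centred at $p$ in which $f^{-1}(I)=\{q>0\}$ and $g^{-1}(J)=\{\xi>0\}$ as germs of domains at the origin. Second, transplant a model approximate eigenvector: $C^{(1)}$ commutes with the dilations $h(x)\mapsto\lambda^{1/2}h(\lambda x)$ (both $\Pi_X$ and $\Pi_\Xi$ do), so for the eigenvalue $\frac12$ it has approximate eigenvectors at every spatial and frequency scale; I would take one localized in both $X$ and $\Xi$ at scale $\sqrt\hbar$ near the origin, and push it into $\HH_\hbar$ through the coherent states / local Bargmann model attached to the Darboux chart, producing $\psi_\hbar$ microlocalized near $p$. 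Third, verify that on $\psi_\hbar$ the two spectral projections act, up to $o(1)$, exactly as the model projections do; running the same analysis for $g$ in the conjugate coordinate $\xi$ and using that transversality makes the two normal forms simultaneously valid near $p$, one gets $[\Pi_{I,f,\hbar},\Pi_{J,g,\hbar}]\psi_\hbar=C^{(1)}(\text{transplanted vector})+o(1)$, hence $\Vert[\Pi_{I,f,\hbar},\Pi_{J,g,\hbar}]\psi_\hbar\Vert\to\frac12$, which with the upper bound finishes the proof.

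The main obstacle is the third step, and it is genuine: $\mathbbm{1}_I$ is discontinuous, so $\Pi_{I,f,\hbar}$ is not a semiclassical pseudodifferential operator and the Helffer--Sj{\"o}strand functional calculus does not apply directly. What is needed is a normal-form statement for sharp spectral cutoffs at a non-critical energy --- roughly, that near $p$ the operator $\mathbbm{1}_{(-\infty,c]}(T_\hbar(f))$ is microlocally conjugate, by a Fourier integral operator, to $\mathbbm{1}_{(0,\infty)}(\hbar D_{x_1})$, differing from a smoothly truncated cutoff by $O(\hbar^\infty)$ away from $\{f=c\}$ and being governed by the one-dimensional model $\mathbbm{1}_{(0,\infty)}(q)$ across it (this is exactly the structure that makes the $SU(2)$ computation of Sections \ref{integralformulasection}--\ref{asymptoticsection} collapse onto $H_E$), together with a joint version for the two operators at once, in which transversality of the level sets is precisely the compatibility condition permitting a common microlocal chart. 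Keeping the $o(1)$ errors under control uniformly in $\hbar$ while the relevant phase-space scale shrinks like $\sqrt\hbar$ is the quantitative heart of the matter; the worked examples above --- especially the delicate $n\equiv0\bmod4$ regime for $SU(2)$ --- suggest this is true but far from routine, and that $H_E$ is the correct universal local model. The case $\dim M>2$, and the case of non-transversal contact, would require new ideas and fall outside this plan.
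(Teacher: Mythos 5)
This statement is not proven in the paper at all: it is stated as a conjecture, supported only by the worked special cases (Theorems \ref{maintheorem}--\ref{euclideancommutator}, \ref{extensiontheorem}) and by numerical evidence, so there is no proof of record to compare yours against. Your proposal should therefore be judged as a proof attempt of an open statement, and as such it has a genuine gap --- one you partly acknowledge yourself. The upper bound via \cite{qihuili} and the first step (symplectic normal form: straightening the two transversal curves to $\{q=0\}$ and $\{\xi=0\}$ by a local Darboux-type argument) are fine. The unproven core is the passage from the classical normal form to the quantum one: you need that $\mathbbm{1}_I(T_\hbar(f))$ and $\mathbbm{1}_J(T_\hbar(g))$ are \emph{simultaneously} microlocally conjugate, near the intersection point and up to errors that are $o(1)$ in operator norm on states localized at phase-space scale $\sqrt\hbar$, to the model pair $\bigl(\mathbbm{1}_{(0,\infty)}(X),\mathbbm{1}_{(0,\infty)}(\Xi)\bigr)$. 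Because $\mathbbm{1}_I$ is discontinuous, this is not delivered by the Berezin--Toeplitz symbolic calculus, by Egorov-type theorems for quantized symplectomorphisms, nor by the Helffer--Sj\"ostrand calculus; the interface results of Zelditch--Zhou cited in the paper describe kernel asymptotics of such sharp projections along a single interface but do not give the two-operator normal form with norm control that your step three requires. Asserting that transversality is ``precisely the compatibility condition permitting a common microlocal chart'' names the expected mechanism but does not supply the estimate; in particular the error terms from conjugating a sharp cutoff concentrate exactly where your test state $\psi_\hbar$ lives, so they cannot be dismissed as negligible without a quantitative argument.

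A secondary soft spot in step two: $\tfrac12$ lies only in the essential spectrum of $H_E$ (Lemma \ref{H_Enorm} via Power's theorem), so $C^{(1)}$ has no maximizing eigenvector, only approximate ones, and the dilation invariance you invoke means these near-maximizers have no intrinsic scale. You must check that an approximate eigenvector can be chosen localized at the scale $\sqrt\hbar$ compatible with the coherent-state transplant, and that the quality of approximation does not degrade as the localization is imposed --- near-maximizers of $[\mathcal M_{\mathbbm{1}_{(0,\infty)}},\Pi_\RR]$ are spread along the singular set, and truncating them costs something that must be balanced against the $o(1)$ conjugation errors. In short, your roadmap is consistent with the paper's philosophy (a universal local Hankel model $H_E$ at a transversal crossing), and it is a sensible plan of attack, but steps two and three constitute the actual open content of the conjecture rather than a proof of it.
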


In the context of $S^2$, Theorem \ref{extensiontheorem} is a modest extension of our main result, and agrees with the conjecture. Similarly, consider the sequence
\begin{equation*} C_{n,a} = \left[ \mathbbm{1}_{\left(a\left(j+\frac 1 2 \right), \infty \right)}(J_x), \mathbbm{1}_{\left(a\left(j + \frac 1 2 \right), \infty \right)}(J_z) \right],\end{equation*}
where $a \in [0, 1)$. A rigorous calculation of $\lim_{n \to \infty} \Vert C_{n,a} \Vert_{\op}$ for $a > 0$ is not possible using our current method (due to limitations in the applicability of the asymptotic estimate (\ref{szegotheorem})). However, the conjecture forecasts a transition in the behavior of $\left( \Vert C_{n,a} \Vert_{\op} \right)_{n \ge 2}$ as $a$ crosses the value $\frac 1 {\sqrt 2}$. According to our numerical simulations, this indeed seems to be the case. The following images are the analogues of Figure \ref{su2} above for the choices $a = 0.25,\ 0.75$ and $a = 0.7,\ 0.71$.
\begin{figure}[H]
        \centering
        \includegraphics[width=1.3\textwidth,center]{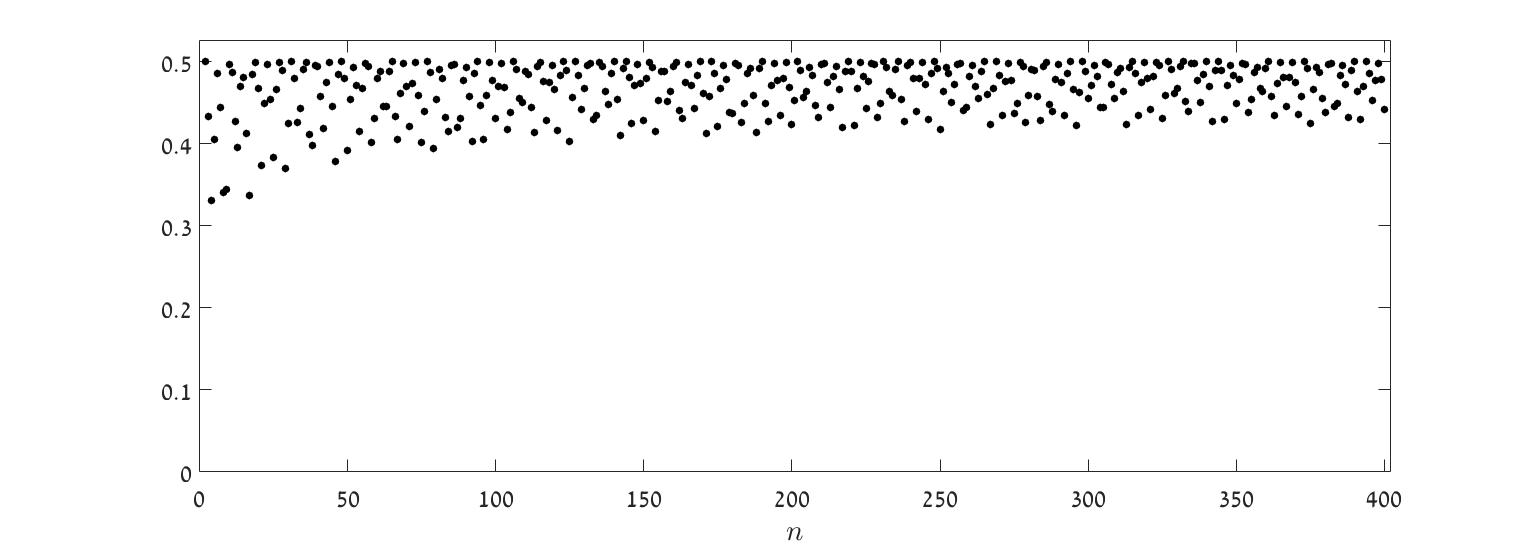} 
        \includegraphics[width=1.3\textwidth,center]{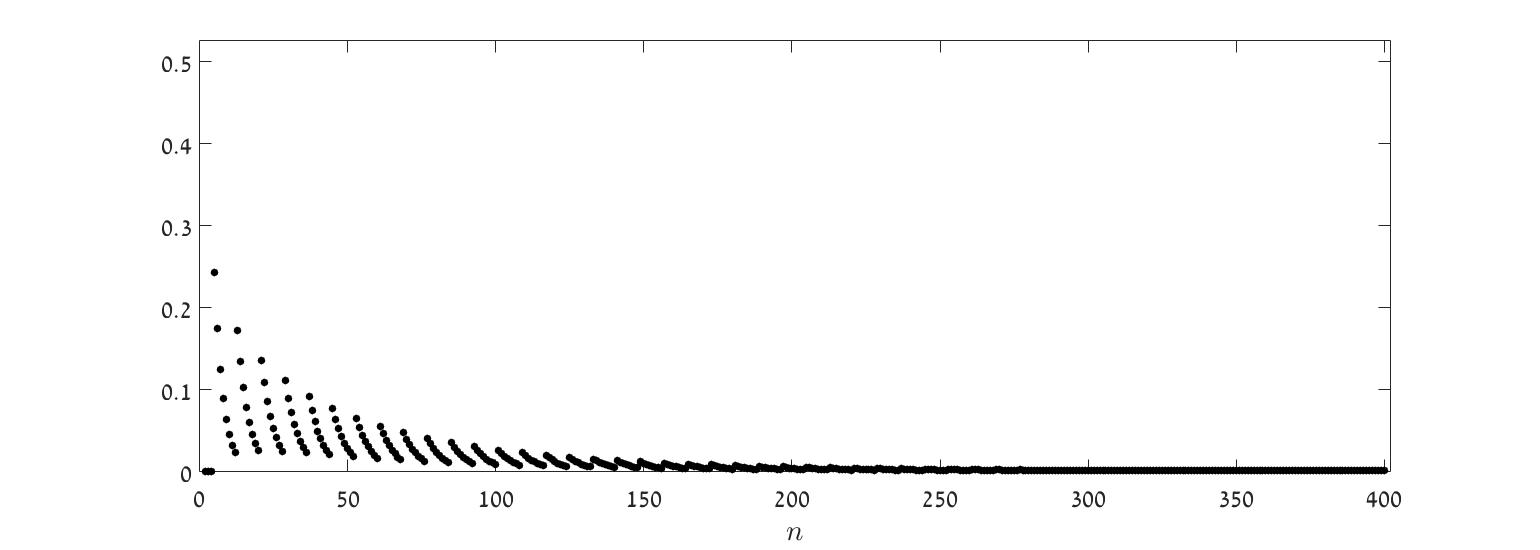} 
    \caption{A plot of $\Vert C_{n,a} \Vert_{\op}$ as a function of $n$, where $a = 0.25$ (top), $a = 0.75$ (bottom).}\label{notmod4}
\end{figure}
\begin{figure}[H]
        \centering
        \includegraphics[width=1.3\textwidth,center]{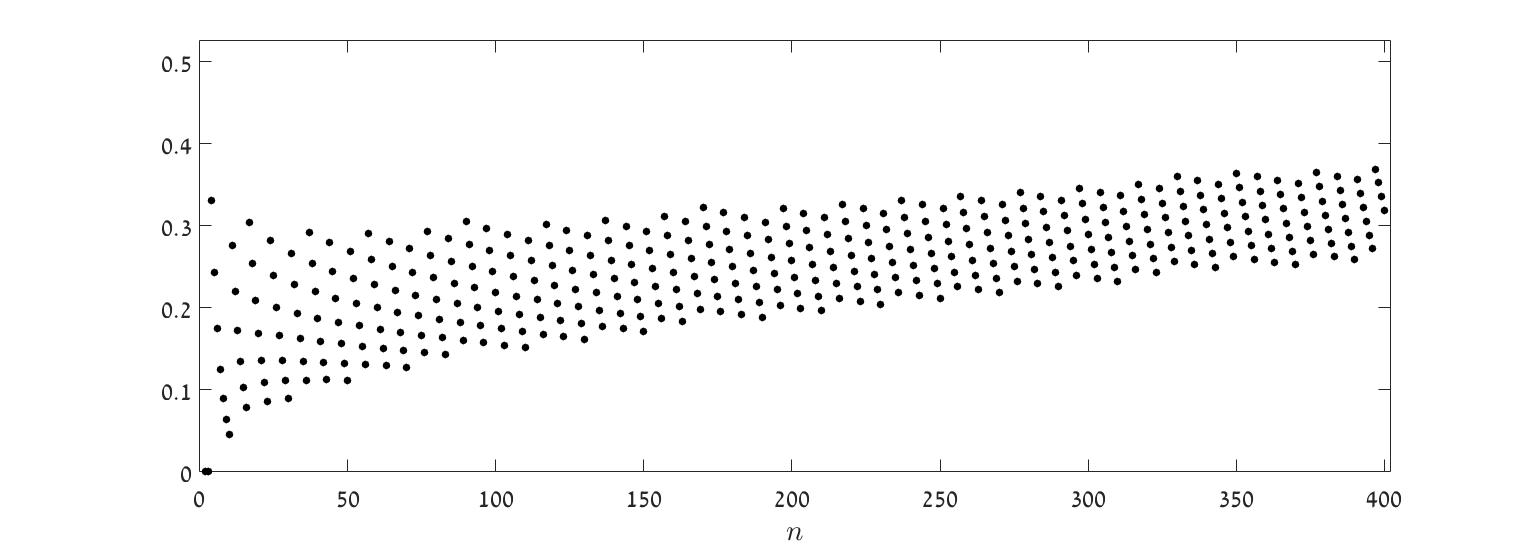} 
        \includegraphics[width=1.3\textwidth,center]{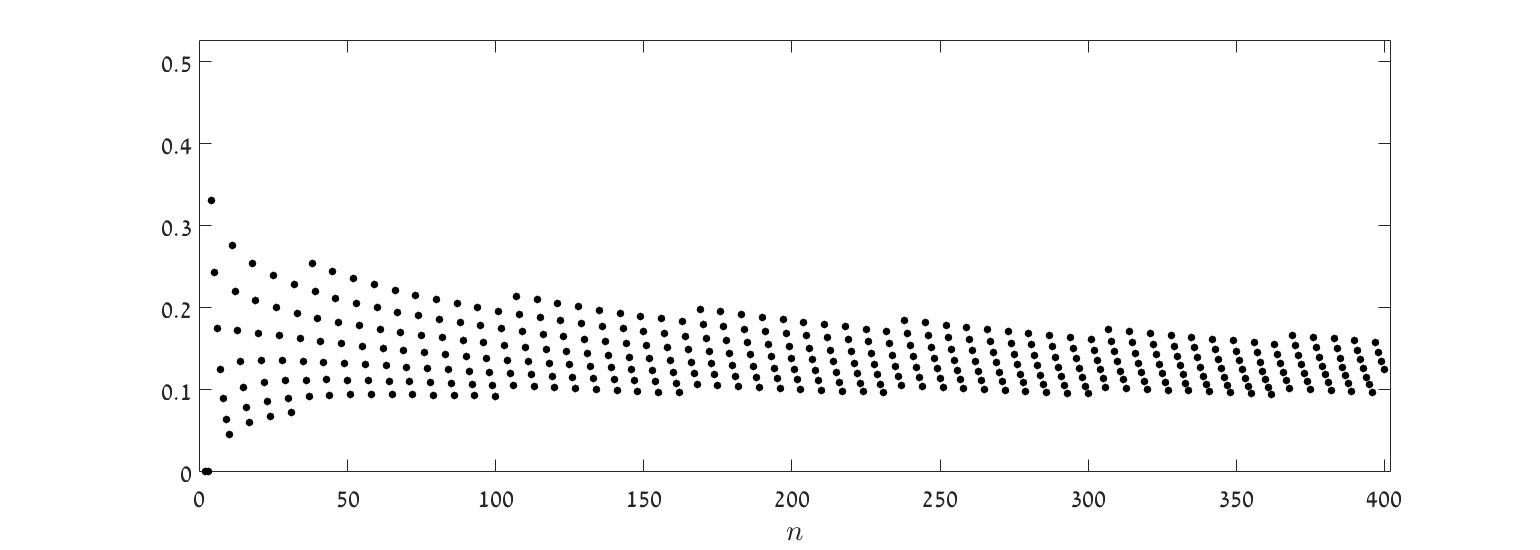} 
    \caption{A plot of $\Vert C_{n,a} \Vert_{\op}$ as a function of $n$, where $a = 0.7$ (top), $a = 0.71$ (bottom). Compare, also, with the image for $a = 0.75$.}\label{againnotmod4}
\end{figure}
Although we formulated the conjecture using the specific notion of Berezin-Toeplitz quantization, we expect it to hold in the context of similar or standard quantization schemes as well (Weyl quantization, in particular).

\renewcommand{\abstractname}{Acknowledgements}
\begin{abstract}
This research has been partially supported by the European Research Council Advanced Grant 338809, and by the European Research Council Starting Grant 757585. I wish to express my sincere gratitude to the European Research Council.\hfill

I wish to thank my supervisor Leonid Polterovich for his suggestion to pursue the questions addressed in this manuscript, for his interest, commitment, and insightful guidance throughout the project, and for the many useful ideas, remarks and corrections. I also wish to thank Yohann Le Floch for numerous contributions to this project throughout the years, and for his patience, help and encouragement. The results presented here could not have been obtained without the contributions and assistance of both.

I wish to thank my co-supervisor Lev Buhovsky for his support, for many clarifications, and for his kindly dedication, during our meetings, to carefully review the details of certain key parts of this work. I wish to thank Mikhail Sodin for his involvement during key stages of the project and for his clear, illuminating advice and input.

Finally, I wish to thank David Kazhdan for invaluable conversations and for his suggestion to consider the finite Heisenberg groups, prior to which the problem seemed entirely intractable.
\end{abstract}

\end{document}